\documentclass[11pt,a4paper]{article}

\usepackage[a4paper,margin=1in]{geometry}
\usepackage[T1]{fontenc}
\usepackage[utf8]{inputenc}
\usepackage{lmodern}
\usepackage{microtype}
\usepackage{mathtools}
\usepackage{amssymb}
\usepackage{amsthm}
\usepackage{mathrsfs}
\usepackage{tabularx}
\usepackage{booktabs}
\usepackage[hidelinks]{hyperref}
\newcommand{\llbracket}{\mathopen{[\![}}
\newcommand{\rrbracket}{\mathclose{]\!]}}

\theoremstyle{plain}
\newtheorem{theorem}{Theorem}[section]
\newtheorem{lemma}[theorem]{Lemma}
\newtheorem{proposition}[theorem]{Proposition}
\newtheorem{corollary}[theorem]{Corollary}
\newtheorem{claim}[theorem]{Claim}
\theoremstyle{definition}
\newtheorem{definition}[theorem]{Definition}
\theoremstyle{remark}
\newtheorem{remark}[theorem]{Remark}
\newcommand{\R}{\mathbb{R}}
\newcommand{\C}{\mathbb{C}}
\newcommand{\Z}{\mathbb{Z}}

\newcommand{\Hilb}{\mathcal{H}} 
\newcommand{\Cl}{\mathcal{C}} 

\title{\textbf{Finite-Resolution Physics from Additivity Symmetry}}
\author{Attilio Treppiedi\\
IIS Alessandro Volta, Pescara, Italy\\
\texttt{treppiedi.a@iisvoltapescara.edu.it}}
\date{}
\hypersetup{
  pdftitle={Finite-Resolution Physics from Additivity Symmetry},
  pdfauthor={Attilio Treppiedi}
}
\begin{document}
\maketitle

\begin{abstract}
Quantum field theory unifies special relativity and quantum mechanics yet relies on externally imposed prescriptions---quantization, renormalization, and probabilistic interpretation---that lack a common algebraic foundation. 
This work derives these features from a single structural principle: the finite-resolution realization of the additive group $(\mathbb{R},+)$ under four internal consistency constraints. 
These constraints---finite operational resolution (FR1), multiplicative linearity (FR2), phase coherence (FR3), and statistical typicality (FR4)---generate, without additional postulates, the Hilbert-space formalism of quantum theory, the Born probability rule, and the Kolmogorov structure of statistical mechanics as limiting projections of one additive algebra. 
When extended to spacetime translations and their automorphisms, the same framework reproduces the kinematic and dynamical structure of quantum field theory while resolving ultraviolet divergences through intrinsic finiteness rather than external renormalization. 
Quantization and unitarity appear as consequences of additive symmetry, while the gravitational field emerges as the algebraic kernel that transports boundary information into bulk observables without requiring a separate quantization of the metric. 
The result is a unified, finite, and strictly algebraic foundation for physical law.
\end{abstract}

\section{Motivation and Overview}

Quantum field theory (QFT) provides the most successful synthesis of special relativity and quantum mechanics. 
Despite its empirical success, its formulation still relies on three external prescriptions:

(i) canonical or path--integral quantization, 
(ii) renormalization to handle ultraviolet divergences, and 
(iii) the probabilistic interpretation expressed through the Born rule. 
Although these rules are operationally effective, they lack a unified algebraic origin. 
This paper addresses the foundational question:

\begin{quote}
\emph{Can quantization, renormalization, and probabilistic interpretation all emerge from a single symmetry principle?}
\end{quote}

The proposed answer derives from the finite--resolution realization of the additive group $(\mathbb{R},+)$, treated not as a numerical set but as the algebraic structure encoding additive relations between measurable quantities. 
When constrained by four internal requirements---finite operational resolution (FR1), multiplicative linearity (FR2), phase coherence (FR3), and statistical typicality (FR4)---this additive structure reproduces, without external postulates, the structures of quantum mechanics, Kolmogorov probability theory, and deterministic (topological) dynamics as distinct operational regimes. 

Extending this algebraic construction from temporal to spacetime translations and their automorphisms yields the complete kinematic and dynamical structure of QFT. 
In this formulation, ultraviolet divergences are not regularized by counterterms but eliminated by the intrinsic finiteness imposed by FR1, while unitarity arises automatically from phase coherence (FR3). 
Quantization thus appears as a consistency property of the additive algebra rather than as an independent procedure. 

The framework naturally extends to gravitation. 
The gravitational field then appears as the algebraic kernel transporting boundary information into bulk observables, ensuring finite--resolution consistency without requiring a separate quantization of the metric. 
This interpretation preserves the classical limit of general relativity while embedding it in a finite, additive, and information--preserving structure. 

The sections that follow develop this construction in increasing generality: 
from the additive algebra and its internal constraints, through the emergence of quantum and statistical mechanics, to the reconstruction of quantum field theory and its gravitational extension. 
The conclusion summarizes the implications for a unified algebraic foundation of physical law.

\section{Introduction}

Physical theories can be understood as representations of a single algebraic structure:
the additive group $(\mathbb{R},+)$ and its higher-dimensional extensions.
All measurable quantities, from phase accumulation to entropy and topological index,
derive from how this group acts on admissible states.
The simplest manifestation of this symmetry is the composition law for additive
functionals,
\[
F[\gamma_1\cup\gamma_2] = F[\gamma_1] + F[\gamma_2],
\]
which captures the consistency of concatenating histories or measurements.
The universality of this rule suggests that the additive group, rather than the
particular functional $F$, is the true algebraic substrate underlying physical
description.

The representations of $(\mathbb{R},+)$ by exponential characters
$e^{\alpha S}$ with $\alpha\in\mathbb{C}$ generate the familiar weights of
quantum, statistical, and topological theories.
However, the continuum of characters by itself carries no physical structure.
What endows it with predictive power is the existence of a restricted class of
\emph{admissible representations}—those compatible with the intrinsic finiteness
and coherence of real measurements.

We show that admissibility is completely characterized by four
structural features of the additive group, collectively defining what we call the
\emph{finite–resolution representation}:

\begin{itemize}
\item \textbf{FR1 (finite resolution):} admissible objects are constant on cells of
size $\mu>0$; sub–$\mu$ structure is excluded;
\item \textbf{FR2 (constants as multipliers):} constants act as global scalars on all
admissible states;
\item \textbf{FR3 (phase consistency):} additive transport around closed loops has
trivial holonomy;
\item \textbf{FR4 (typicality):} invariant averages define the only consistent
frequency assignments.
\end{itemize}

These features are not external postulates but internal properties that any
representation of $(\mathbb{R},+)$ must satisfy once finite resolution and algebraic
closure are enforced.  
Taken together, they define the \emph{constraint representation} of the additive group.

The central result of this work—the \emph{Activation Theorem}—states that
FR1–FR4 activate exactly three distinct and self-consistent descriptive regimes:
the quantum, the statistical, and the topological.
Each regime arises as a stable representation channel of the same additive symmetry,
characterized respectively by complex phases, positive real weights, and extremal
deterministic values.
All familiar physical theories based on interference, probability, or logical truth
appear as operational realizations of these three algebraic channels.

In this paper we present a complete construction of the additive–symmetry framework,
showing how the four finite–resolution features generate the full mathematical
apparatus of these regimes—from Hilbert spaces and the Born rule to convolution
semigroups and deterministic flows—without invoking any separate quantization or
probabilistic postulate.

Consider a history $\gamma$ from $(x_i,t_i)$ to $(x_f,t_f)$, subdivided at an intermediate point $(x_m,t_m)$.  
The consistency of composition across the subdivision requires that
\begin{equation}
F[\gamma_1 \cup \gamma_2] = F[\gamma_1] + F[\gamma_2],
\end{equation}
for some additive functional $F$.
This is precisely the starting point of Feynman’s path–integral formulation of quantum mechanics~\cite{FeynmanHibbs}, 
where the requirement of composability of amplitudes under temporal concatenation uniquely selects additive actions
as the fundamental labels of histories.

\subsection{Uniqueness of the action}
Let $F$ be local, depending on $(x,\dot{x},t)$. Writing $F=\int f(x,\dot{x},t)\,dt$, the additivity condition fixes $f$ to be a Lagrangian $L$, so that
\begin{equation}
S[\gamma] = \int L(x,\dot{x},t)\, dt.
\end{equation}
The derivation follows from demanding that the functional depends only on local segments and that boundary contributions cancel upon stitching. Thus the classical action is the unique regular additive tag of histories.

\subsection{Group-theoretic perspective}
The composition law identifies an underlying additive group $G=(\mathbb{R},+)$. By Pontryagin duality \cite{folland}, continuous one-dimensional representations (characters) of $G$ are
\begin{equation}
\mathcal W(S) = e^{\alpha S}, \qquad \alpha\in \mathbb{C}.
\end{equation}
Each choice of $\alpha$ defines a physical regime. The universal algebra generated by these characters, subject to additivity, we call the \emph{Additivity Algebra}.

\section{Representations and physical regimes}
\subsection{Unitary representation: \texorpdfstring{$\alpha=i/\hbar$}{alpha=i/hbar} (Quantum mechanics)}
The oscillatory weights $e^{iS/\hbar}$ define the quantum path integral \cite{FeynmanHibbs}. Stationary phase shows that in the classical limit $\hbar \to 0$, destructive interference removes non-classical paths, reproducing classical mechanics.  

The kernel
\begin{equation}
K(x_f,t_f; x_i,t_i) = \int_{\gamma:i\to f} e^{iS[\gamma]/\hbar} \mathcal{D}\gamma
\end{equation}
satisfies
\begin{equation}
K(x_f,t_f; x_i,t_i) = \int dx_m \, K(x_f,t_f; x_m,t_m) \, K(x_m,t_m; x_i,t_i),
\end{equation}
realizing the additivity symmetry at amplitude level.

\subsection{Real representation: \texorpdfstring{$\alpha=-1/\Theta$}{alpha=-1/Theta} (Statistical physics)}
Weights $e^{-S/\Theta}$ generate diffusion kernels and Gibbs ensembles. For example,
\begin{equation}
K(x_f,t_f; x_i,t_i) = \int_{\gamma:i\to f} e^{-\frac{1}{\Theta}\int \frac{1}{2}m\dot{x}^2 dt}\, \mathcal{D}\gamma,
\end{equation}
which reduces to the Gaussian propagator of Brownian motion. Wick rotation $t\to -i\tau$ relates this case to the quantum one.

\subsection{Constraint representation}
A projector-like law assigns weight 1 to admissible histories consistent with global stitching, 0 otherwise. This captures constraint-limited ensembles, including generalized classical frameworks.

\subsection{Extended representations}

Beyond the real and unitary characters, the additive algebra admits decorated
exponentials that encode geometric or topological information. 
These take the general form
\[
\chi(S)=e^{\alpha S}\,\sigma(S),
\]
where $\sigma(S)$ is a $U(1)$-valued cocycle or a finite-phase factor. 
Different choices of $\sigma$ produce familiar physical structures:
\begin{itemize}
\item Compactified characters $e^{i2\pi S/S_0}$ $\to$ Aharonov--Bohm and Josephson oscillations;
\item Projective cocycles $\sigma(S)=e^{i\theta(S)}$ $\to$ Berry phases \cite{BerryPhase} and Wess--Zumino terms;
\item Non-Abelian cocycles $\sigma(S)\in U(N)$ $\to$ parallel transport and gauge interactions;
\item Grassmann-valued decorations $\to$ fermionic antisymmetry and exclusion.
\end{itemize}

At this stage these represent distinct \emph{decorations} of the additive
representation, associated with additional phase structure or internal degrees of
freedom.  They will later be seen to arise naturally inside the unitary channel
once finite-resolution constraints are imposed.

\subsection{Comparative summary}
\begin{table}[h]
\centering
\begin{tabular}{|c|c|c|}
\hline
Representation & Weight & Physical regime \\
\hline
Unitary & $e^{iS/\hbar}$ & Quantum mechanics \\
Real & $e^{-S/\Theta}$ & Statistical / Euclidean physics \\
Constraint & $\{0,1\}$ & Boundary-constrained dynamics \\
Compactified & $e^{i2\pi S/S_0}$ & Topological phases \\
Projective & $e^{iS} \times$ cocycle & Berry phases, anomalies \cite{WittenAnomalies} \\
\hline
\end{tabular}
\caption{Representations of the additivity group and their physical realizations.}
\end{table}
\section{Constraint Representation: The Structural Core of Additivity Symmetry}
\label{sec:constraint-representation}

The additive group $(\R,+)$ admits a continuum of exponential characters 
$\chi_\alpha(t)=e^{i\alpha t}$, forming the standard unitary representations 
that underlie both harmonic analysis and the wave–based formulations of 
quantum and statistical mechanics \cite{ruelle}. 
These representations implement the principle of additivity symmetry: 
the value assigned to a composite displacement $t_1+t_2$ equals the product 
of the values assigned to its parts, $\chi_\alpha(t_1+t_2)=\chi_\alpha(t_1)\chi_\alpha(t_2)$. 
This additive consistency provides the algebraic seed from which the full structure 
of physical regimes emerges once admissibility and finite resolution are imposed.

However, these ideal representations presuppose infinite resolution: 
they act on functions and measures defined pointwise on $\R$, allowing 
singular states such as Dirac deltas and plane waves.

To obtain a representation that remains physically and mathematically 
well-defined at finite precision, we introduce the \emph{constraint representation}.  
This representation modifies the standard exponential form by enforcing internal 
consistency conditions on admissible functionals of $(\R,+)$.  
Instead of describing exact characters, it describes equivalence classes 
of functions that are constant over finite intervals and whose algebraic 
operations respect additive closure under coarse-grained convolution.

The constraint representation is not an additional postulate but an intrinsic 
realization of the same additive symmetry under finite resolution.  
Its internal structure is characterized by four universal features:
\begin{enumerate}
\item[\textbf{FR1.}] \textbf{Finite resolution:} admissible functions are constant 
within coarse cells of size $\mu$, forbidding infinitesimal singularities;
\item[\textbf{FR2.}] \textbf{Constants as multipliers:} additive and multiplicative 
scalars act as global multipliers on all admissible elements;
\item[\textbf{FR3.}] \textbf{Phase consistency:} additive composition remains 
associative and free of projective anomalies across cell boundaries;
\item[\textbf{FR4.}] \textbf{Typicality:} statistical frequencies are defined by 
invariant means, ensuring reproducible averages over coarse ensembles.
\end{enumerate}

These four structural features are not axioms but algebraic necessities 
of the finite-resolution realization of $(\R,+)$.  
They define the \emph{constraint representation}, which serves as the common 
substrate from which all physical regimes—quantum, statistical, and topological—emerge.  
In the following sections, we derive these features explicitly and show how they 
generate the admissible regimes of physical description.

\section{Finite-Resolution Regime}
\label{sec:finite-resolution}

\subsection*{A. Why strict $\{0,1\}$-valued homomorphisms do not work}

\begin{claim}[Failure under multiplication]\label{FR:mult}
There is no nontrivial group homomorphism 
$\chi:(\mathbb{R},+)\to(\{0,1\},\cdot)$.
\end{claim}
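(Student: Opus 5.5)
The argument uses only the homomorphism law and the fact that every real number is divisible by $2$. By nontrivial we mean any $\chi$ other than the constant map $\chi\equiv 1$. The first point to settle is that $(\{0,1\},\cdot)$ is a monoid, not a group. So a ``group homomorphism'' here means a map with $\chi(s+t)=\chi(s)\chi(t)$ and $\chi(0)=1$. The normalization $\chi(0)=1$ is forced unless $\chi\equiv 0$, because $\chi(0)=\chi(0)^2$ and, if $\chi(0)=0$, then $\chi(t)=\chi(t+0)=\chi(t)\chi(0)=0$ for every $t$. The zero map is degenerate: it does not send the identity to the identity, so it is excluded (or counted as trivial).

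\textbf{Step 1 (no zeros via inverses).} Suppose $\chi(0)=1$. For any $t\in\mathbb{R}$ we have $1=\chi(0)=\chi(t+(-t))=\chi(t)\chi(-t)$. In $\{0,1\}$ a product equals $1$ only if both factors equal $1$. Hence $\chi(t)=1$ for all $t$, and $\chi$ is trivial.

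\textbf{Step 2 (alternative via divisibility).} As a cross-check that also works if one only assumes the semigroup law, write $t=t/2+t/2$. This gives $\chi(t)=\chi(t/2)^2=\chi(t/2)$, since $x^2=x$ on $\{0,1\}$. So the zero set $Z=\chi^{-1}(0)$ is closed under halving. Also, if $s\in Z$ then $s+u\in Z$ for every $u$, because $\chi(s+u)=\chi(s)\chi(u)=0$. Thus a nonempty $Z$ is all of $\mathbb{R}$, and $\chi\equiv 0$. Either way, the only solutions are the constants $0$ and $1$, neither of which separates points of $\mathbb{R}$. This is the failure the section title refers to: strict $\{0,1\}$-valued characters cannot encode additive structure.

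The main obstacle is definitional rather than technical. One has to fix what ``nontrivial'' and ``homomorphism into $(\{0,1\},\cdot)$'' mean, since the codomain is not a group. Once $\chi(0)=1$ is justified, or $\chi\equiv 0$ is set aside, the argument is a single line.
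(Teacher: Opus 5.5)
Your Step~1 is the paper's argument made explicit: the paper only observes that $0$ has no multiplicative inverse, so the image of the group $(\mathbb{R},+)$ must lie in the subgroup $\{1\}$, and your identity $\chi(t)\chi(-t)=\chi(0)=1$ is exactly why that holds. Your extra treatment of the semigroup reading and of the zero map $\chi\equiv 0$ is a worthwhile refinement, since the paper passes over that map even though $\{0\}$ is itself a one-element multiplicative group. One small point: the halving step in Step~2 plays no role, because $s+u$ already ranges over all of $\mathbb{R}$, so what the argument actually uses is that $\mathbb{R}$ is a group, not that it is divisible.
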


\begin{proof}[Reason]
The set $(\{0,1\},\cdot)$ is not even a group: $0$ has no multiplicative inverse. 
If we restrict to its only subgroup $\{1\}$, then the only homomorphism is the trivial one 
$\chi\equiv 1$.
\end{proof}

\begin{claim}[No non-trivial homomorphism to $(\mathbb{Z}/2\mathbb{Z}, \oplus)$]\label{claim:Z2}
There is no non-trivial continuous group homomorphism
\[
\chi : (\mathbb{R},+) \longrightarrow (\mathbb{Z}/2\mathbb{Z}, \oplus).
\]
\end{claim}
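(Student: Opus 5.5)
The plan is to give two independent arguments: a topological one that uses continuity, and an algebraic one showing continuity is not even needed.

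\emph{Topological route.} Give $\mathbb{Z}/2\mathbb{Z}$ the discrete topology, which is the only Hausdorff group topology on a finite group. A continuous map $\chi:\mathbb{R}\to\mathbb{Z}/2\mathbb{Z}$ then has connected image, since $\mathbb{R}$ is connected. The connected subsets of a discrete space are singletons, so $\chi$ is constant. Because $\chi$ is a homomorphism, $\chi(0)=0$, and therefore $\chi\equiv 0$ is trivial.

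\emph{Algebraic route.} The key property is that $(\mathbb{R},+)$ is divisible. For any $t\in\mathbb{R}$, write $t=\tfrac{t}{2}+\tfrac{t}{2}$. Then
\[
\chi(t)=\chi(t/2)\oplus\chi(t/2)=0,
\]
because every element of $\mathbb{Z}/2\mathbb{Z}$ has order dividing $2$. Hence $\chi\equiv 0$ with no continuity hypothesis at all. In words: every homomorphism from a divisible group into a group of exponent $2$ is trivial.

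I will present the divisibility argument as the main proof, because it is one line and stronger than the claim. I will add the connectedness argument as a remark, since it explains why the continuous characters $e^{\alpha S}$ of Section~3 cannot be discretized directly.

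The only step that needs care is fixing the topology on $\mathbb{Z}/2\mathbb{Z}$ when reading ``continuous'': it must be discrete, or the indiscrete topology would make every map continuous. The algebraic argument sidesteps this entirely, so there is no real obstacle. I will also note that this, together with Claim~\ref{FR:mult}, rules out naive $\{0,1\}$-valued characters and motivates the finite-resolution construction.
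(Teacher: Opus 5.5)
Your proof is correct, and both of your arguments take a different route from the paper's. The paper works through the kernel. It invokes the classification of closed subgroups of $\mathbb{R}$ as $\{0\}$, $\mathbb{R}$, or $a\mathbb{Z}$, then rules out the nontrivial cases: $\mathbb{R}$ cannot inject into a two-element set, and $\mathbb{R}/a\mathbb{Z}\cong S^1$ admits no continuous surjection onto $\mathbb{Z}/2\mathbb{Z}$. It ends with a loose remark about Pontryagin duality.

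Your topological route rests on the same fact as the paper's last case, connectedness, but applies it to $\mathbb{R}$ itself, so the subgroup classification is unnecessary. The paper's route also tacitly needs $\{0\}$ to be closed in $\mathbb{Z}/2\mathbb{Z}$, so that $\ker\chi$ is a closed subgroup. That is exactly the topology caveat you flag.

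Your divisibility argument is the strongest of the three. It is one line, uses no topology, and shows that every homomorphism $\mathbb{R}\to\mathbb{Z}/2\mathbb{Z}$ is trivial, continuous or not. The same computation $\chi(t)=n\,\chi(t/n)=0$ shows $\mathbb{R}$ has no nontrivial homomorphism into any abelian group of finite exponent $n$. Hence $\mathbb{R}$ has no nontrivial finite quotient at all.

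The paper's approach buys only its placement within the LCA and duality language used later. Logically your version subsumes it, so leading with the divisibility argument is the right choice.
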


\begin{proof}[Proof sketch]
Any group homomorphism $\chi$ has a kernel that is a subgroup of $\mathbb{R}$.  
The closed subgroups of $(\mathbb{R},+)$ are either $\{0\}$, $\mathbb{R}$ itself, or discrete lattices $a\mathbb{Z}$ with $a > 0$.  

- If $\ker \chi = \mathbb{R}$, then $\chi$ is the trivial homomorphism.  
- If $\ker \chi = \{0\}$, then $\mathbb{R}$ would inject into a group of size 2, impossible.  
- If $\ker \chi = a\mathbb{Z}$, then $\mathbb{R} / a\mathbb{Z} \cong S^1$, which is uncountable.  
But the image of $\chi$ has size 2, and any continuous surjection $S^1 \to \mathbb{Z}/2\mathbb{Z}$ is impossible.

Hence the only continuous homomorphism is the trivial one.  

This also follows from Pontryagin duality: the dual group of $(\mathbb{R},+)$ is again $\mathbb{R}$, so its only finite quotient duals are trivial~\cite{Rudin}.
\end{proof}

\subsection{Infinite vs.~finite rings: Cesàro averages}

On a finite ring of $N$ cells, frequencies are simply proportions:
\[
\mathrm{Freq}(1;q) \;=\; \frac{1}{N}\sum_{k=0}^{N-1} q_k, 
\qquad q_k \in \{0,1\}.
\]
This is just the fraction of cells carrying the value $1$.

On the infinite lattice $\mathbb{Z}$, a direct proportion is undefined.  
Instead, we use Cesàro means:
\[
\mathrm{Freq}(1;q) \;=\; \lim_{M\to\infty}\frac{1}{2M+1}\sum_{k=-M}^{M} q_k,
\]
provided the limit exists.  
This corresponds to the ergodic idea of averaging along longer and longer symmetric blocks.  
When the limit exists, it yields a shift-invariant summary of the configuration.

\subsection{Example: a \texorpdfstring{$\{0,1\}$}{0,1} spin chain}

Consider the periodic configuration $q = (0101\dots)$ repeating every two cells.  
On a ring of $N=2m$ cells, the frequency of $1$ is
\[
\mathrm{Freq}(1;q) = \frac{1}{2m}\sum_{k=0}^{2m-1} q_k = \frac{m}{2m} = \tfrac{1}{2}.
\]
On the infinite lattice, the Cesàro mean also yields $\tfrac{1}{2}$.  

By contrast, the configuration $q=(000\dots)$ has frequency $0$, while $q=(111\dots)$ has frequency $1$.  

Thus even coarse $\{0,1\}$ data, when closed under shifts and summarized by Cesàro averages, produces well-defined frequencies.  
This is the precise mathematical form of “typicality” in the finite-resolution regime.

\begin{remark}[Bottom line]
In standard representation theory, $\{0,1\}$-valued maps are either ill-posed or trivial.  
To use $\{0,1\}$ meaningfully, we must relax from strict homomorphisms to a \emph{coarse representation} 
that records only ``on/off'' information at finite resolution.
\end{remark}

\subsection*{B. Coarse representations that force finite resolution}

\begin{definition}[Finite-resolution partition]\label{FR:partition}
Fix $\mu>0$. Partition $\mathbb{R}$ into cells
\[
C_k=[k\mu,(k+1)\mu),\qquad k\in\mathbb{Z}.
\]
A map $q:\mathbb{R}\to\{0,1\}$ is called \emph{$\mu$-admissible} if it is constant on each cell $C_k$.  
Equivalently, $q$ is determined by the sequence $(q_k)_{k\in\mathbb{Z}}$ with $q_k\in\{0,1\}$.
\end{definition}

\begin{definition}[Coarse action]\label{FR:coarse-action}
Let $\mathrm{sh}_\mu:\{0,1\}^\mathbb{Z}\to\{0,1\}^\mathbb{Z}$ be the \emph{cell shift} 
\[
(\mathrm{sh}_\mu q)_k = q_{k-1}.
\]
We say an action of $(\mathbb{R},+)$ is \emph{coarsened at resolution $\mu$} if we restrict to the lattice $\mu\mathbb{Z}$ 
and represent $n\mu$ by $\mathrm{sh}_\mu^n$.  
Thus the effective parameter group is $(\mu\mathbb{Z},+)$ (or a finite cyclic quotient if the chain closes after $N$ cells).
\end{definition}

\begin{proposition}[Why finite resolution is forced]\label{FR:forced}
Once outputs are restricted to $\{0,1\}$, the only way to maintain compatibility with continuity and closure 
is to work with cellwise-constant data and cell shifts.  
Any finer action would attempt to distinguish points inside a cell, contradicting the two-valued restriction.  
Hence finite resolution $\mu$ is not an extra assumption but a mathematical necessity.
\end{proposition}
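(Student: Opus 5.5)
The plan is to turn the informal statement into a precise one and prove that version. Consider a family of maps $\Phi_t$, $t\in\mathbb{R}$, acting on $\{0,1\}$-valued functions $q:\mathbb{R}\to\{0,1\}$ by translation, $(\Phi_t q)(x)=q(x-t)$. Require three things: (i) \emph{closure}, meaning the admissible class $\mathcal{A}$ is invariant under every $\Phi_t$ that is kept; (ii) \emph{continuity}, meaning $t\mapsto\Phi_t q$ is continuous for every $q\in\mathcal{A}$ in a topology adapted to two-valued data; (iii) \emph{compositionality}, meaning $\Phi_{s+t}=\Phi_s\Phi_t$. The first step is to justify the topology in (ii). For two-valued outputs the natural choice is pointwise comparison, in which nearby elements agree. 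This is the same discreteness used in Claim~\ref{claim:Z2}: a continuous map into a two-point space must be locally constant.

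The second step uses Claim~\ref{claim:Z2} and the surrounding remark to rule out a genuinely continuous realization. Suppose $q$ is nonconstant. Then any point where $q$ jumps is moved by $\Phi_t$ for every small $t\neq 0$. So $t\mapsto \Phi_t q$ cannot be locally constant at $t=0$ in the two-valued topology. The only elements on which the full group $(\mathbb{R},+)$ acts continuously are therefore the constant functions, which is the trivial case of Claims~\ref{FR:mult} and~\ref{claim:Z2}.

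The third step keeps closure by passing to a subgroup. The stabilizer-type set $\{t:\Phi_t\mathcal{A}=\mathcal{A}\}$ is a subgroup of $\mathbb{R}$. If it is also closed and proper, it has the form $\mu\mathbb{Z}$ for some $\mu>0$, by the same classification of closed subgroups invoked in Claim~\ref{claim:Z2}. Compatibility then means the following. Elements of $\mathcal{A}$ must not be distinguishable at separations below $\mu$, otherwise a smaller translation would be detectable and would have to be kept. So each $q\in\mathcal{A}$ is constant on the cosets' fundamental cells $C_k$ of Definition~\ref{FR:partition}. The surviving action is then exactly $\mathrm{sh}_\mu^n$ from Definition~\ref{FR:coarse-action}, and this gives the claimed necessity of $\mu$.

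I expect the main obstacle to be the step from ``sub-$\mu$ structure is undetectable'' to ``$q$ is constant on half-open cells $[k\mu,(k+1)\mu)$''. The statement's phrase ``compatibility with continuity'' does not pin down a unique formalization. A function could in principle have jumps at a $\mu$-periodic set offset from $\mu\mathbb{Z}$. My first attempt will be to quotient by the equivalence ``indistinguishable at resolution $\mu$'' and fix the offset by a choice of origin. Under this reading, the cell partition is canonical up to translation, which is all Proposition~\ref{FR:forced} needs. I will state explicitly that the result is proved under this formalization of the hypotheses.
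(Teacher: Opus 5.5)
Your Steps 1--2 are sound once the product (pointwise) topology on $\{0,1\}^{\mathbb{R}}$ is fixed as a hypothesis, with ``arbitrarily small'' in place of ``every small'' $t$. They reproduce everything the paper offers for this proposition: Claims~\ref{FR:mult}--\ref{claim:Z2} and the paragraph ``Why this construction is forced''. That paragraph then simply asserts the passage to cellwise-constant data, and your Step~3, where you try to supply that passage, has a genuine gap. Hypotheses (i)--(iii) do not force constancy on cells, even with your own choice of the setwise stabilizer as the kept group. Take $\mathcal{A}=\{\mathbf{1}_{[k\mu,\,k\mu+\mu/3)}:k\in\mathbb{Z}\}$. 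Its stabilizer is exactly $\mu\mathbb{Z}$, it is closed under that group, and the action is compositional and (the group being discrete) continuous. Yet $\mathbf{1}_{[k\mu,\,k\mu+\mu/3)}$ is not constant on $C_k$. No choice of origin repairs this, since each element has two jumps at distance $\mu/3<\mu$. The reason is structural: $\{0,1\}^{\mathbb{R}}\cong(\{0,1\}^{C_0})^{\mathbb{Z}}$, and translation by $\mu\mathbb{Z}$ is the full shift over the alphabet $\{0,1\}^{C_0}$. It never looks inside a cell, so no property of the lattice action can cut the alphabet down to $\{0,1\}$. Your bridging principle (sub-$\mu$ structure would make a smaller translation ``detectable'', which would then ``have to be kept'') is not among (i)--(iii). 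It also does not single out the cellwise-constant class: $\mathbf{1}_{C_0}$ is cellwise constant, yet $\Phi_{\mu/2}\mathbf{1}_{C_0}\neq\mathbf{1}_{C_0}$. So the obstacle is not an offset lattice of jumps. Quotienting by ``indistinguishable at resolution $\mu$'' is precisely imposing Definition~\ref{FR:partition} (FR1), which makes the argument circular.

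There are two smaller problems in the same step. First, closedness of the kept subgroup is not derived, and ``closed and proper'' also allows $\{0\}$, which yields no $\mu$ at all. Without closedness, $\mathcal{A}=\{\mathbf{1}_{\mathbb{Q}}\}$ has stabilizer $\mathbb{Q}$, which acts trivially and hence continuously. Second, the no-go of Step~2 depends on the topology: in $L^1_{\mathrm{loc}}$, translation is continuous on every indicator of a measurable set. The discreteness behind Claim~\ref{claim:Z2} applies only coordinatewise, to the maps $t\mapsto q(x-t)$; it does not select the topology on the function space, so that choice must be an explicit hypothesis, as you intend. What your formalization honestly yields is this: only constants carry a continuous action of all of $\mathbb{R}$, and a closed, nontrivial, proper group of kept translations is some $\mu\mathbb{Z}$. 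To reach cellwise-constant data acted on by $\mathrm{sh}_\mu$ on $\{0,1\}^{\mathbb{Z}}$, you need an additional hypothesis, e.g.\ that the lattice system is a shift over the alphabet $\{0,1\}$ (one bit per cell). That is FR1 itself, not a consequence of continuity and closure.
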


\subsection*{C. Boundary, constants, and typicality}

\paragraph{Boundary and gluing.}  
\begin{definition}[Transport operator and holonomy]\label{FR:holonomy}
On cellwise data $f=(f_k)$, a linear operator $T$ of the form
\[
(Tf)_k=g_k f_{k-1}, \qquad g_k\in\mathbb{C}^\times,
\]
is called a \emph{transport operator}.  
On a ring of $N$ cells, $T$ is boundary-consistent iff its holonomy is trivial:
\[
\prod_{k=0}^{N-1} g_k = 1.
\]
\end{definition}

\begin{remark}
This is the clean definition of \emph{phase consistency}: trivial holonomy around a closed loop of cells.
\end{remark}

\paragraph{Constants as multipliers.}  
\begin{definition}[Multipliers]
A multiplier $M$ acts diagonally by $(Mf)_k = m_k f_k$ with $m_k\in\mathbb{C}$.
\end{definition}

\begin{proposition}[Calibration constants]
If $M$ commutes with the shift $\mathrm{sh}_\mu$, then $M=mI$ for some constant $m\in\mathbb{C}$.
\end{proposition}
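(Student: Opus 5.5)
The plan is to compare the two composite operators $M\,\mathrm{sh}_\mu$ and $\mathrm{sh}_\mu M$ entrywise on cellwise data $f=(f_k)$, and to read off from their equality a recursion on the diagonal entries $m_k$. By the definition of a multiplier and of the cell shift (Definition~\ref{FR:coarse-action}),
\[
(M\,\mathrm{sh}_\mu f)_k = m_k f_{k-1}, \qquad (\mathrm{sh}_\mu M f)_k = (Mf)_{k-1} = m_{k-1} f_{k-1}.
\]
Commutation means these agree for every admissible $f$ and every index $k$. So the first step is simply to write down these two expressions.

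The second step is to test the identity on the cellwise indicator sequences $e^{(j)}$, with $e^{(j)}_k=\delta_{jk}$. These are $\mu$-admissible, and under the complex-linear extension of Definition~\ref{FR:partition} to $\mathbb{C}$-valued cellwise data they span all finitely supported data. Taking $f=e^{(k-1)}$ gives $m_k = m_{k-1}$ for every $k\in\mathbb{Z}$. Induction in both directions then yields $m_k = m_0 =: m$ for all $k$, so $M = mI$.

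On a finite ring of $N$ cells the same argument runs with indices mod $N$. The chain of equalities $m_0=m_1=\dots=m_{N-1}=m_0$ closes consistently, so no extra constraint appears. The converse, that $mI$ commutes with the shift, is immediate.

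The only point needing care is the choice of test functions. If admissible data were literally restricted to $\{0,1\}$-valued sequences, the indicators $e^{(j)}$ are still admissible, so the recursion follows anyway. The key observation is that a single nonzero entry isolates one coefficient $m_k$. I therefore expect no real obstacle beyond stating clearly that commutation is required on a family of data, such as the indicators, that separates the diagonal entries.
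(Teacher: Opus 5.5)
Your argument is correct, but it takes a different route from the paper.

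The paper diagonalizes $\mathrm{sh}_\mu$ in the discrete Fourier basis $u^{(n)}_k=e^{2\pi i nk/N}$ and asserts that commuting with the shift ``forces equal eigenvalues across all modes,'' hence $m_k\equiv m$. Made precise, that argument uses two facts. First, the shift has simple spectrum on a ring of $N$ cells, so its commutant consists exactly of circulant (convolution) operators. Second, a diagonal circulant matrix has constant diagonal.

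You instead compare $M\,\mathrm{sh}_\mu$ and $\mathrm{sh}_\mu M$ entrywise and test on the indicators $e^{(k-1)}$, which gives the recursion $m_k=m_{k-1}$ directly. Your version is more elementary and more uniform:
\begin{itemize}
\item it needs no spectral theory;
\item it does not rely on the shift eigenvalues being distinct;
\item it applies verbatim on $\mathbb{Z}$, where the Fourier characters are not normalizable and the paper's ``diagonalization'' would have to be reinterpreted through the Fourier transform on $\ell^2(\mathbb{Z})$.
\end{itemize}
Your remark that the indicators remain admissible even for $\{0,1\}$-valued data is also apt.

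The paper's route, once completed, gives slightly more: it identifies the full commutant of the shift as the convolution operators. For the proposition as stated, however, $M$ is already assumed diagonal, so that extra information is not needed.
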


\begin{proof}[Reason]
Diagonalize the shift in the discrete Fourier basis.  
Commuting with it forces equal eigenvalues across all modes, hence $m_k\equiv m$.
\end{proof}

\paragraph{Typicality as frequency.}  
On a finite ring of $N$ cells, define
\[
\mathrm{Freq}(1;q):=\frac{1}{N}\sum_{k=0}^{N-1} q_k.
\]
On $\mathbb{Z}$, use Cesàro averages when the limit exists.  
This is pure ergodic/measure-theoretic language: a shift-invariant way to compress cellwise data into a single number.

\subsection*{D. Finite-Resolution Principles (FR--1 to FR--4)}

\begin{description}
\item[FR--1 (Finite resolution).] There exists $\mu>0$ such that admissible objects are constant on cells $C_k$.  
No features below $\mu$ are allowed.
\item[FR--2 (Constants as multipliers).] Linear coefficients act as cellwise multipliers; those commuting with the shift are global constants $m$.
\item[FR--3 (Phase consistency).] Transport around a closed loop has trivial holonomy: $\prod_{k\in \text{loop}}g_k=1$ (equivalently, the multiplicative cocycle is exact).
\item[FR--4 (Typicality).] Frequencies are defined by invariant averages (finite rings or Cesàro means).
\end{description}

\subsection*{E. Squared amplitudes via orthogonality}

On a finite ring of $N$ cells with characters $u^{(n)}_k=e^{2\pi i nk/N}$, 
any $f_k=\sum_{n=0}^{N-1} a_n u^{(n)}_k$ satisfies
\[
\frac{1}{N}\sum_{k=0}^{N-1}|f_k|^2 = \sum_{n=0}^{N-1} |a_n|^2.
\]
Thus typicality averages select squared moduli: cross terms vanish under averaging.  
This is just Parseval’s identity in the finite setting, now interpreted as a frequency law.
\subsection{Why this construction is forced}
There is no nontrivial continuous homomorphism $(\mathbb{R},+)\to(\{0,1\},\cdot)$ or $(\{0,1\},\oplus)$, so strict two-valued representations are trivial.  
To preserve closure one must pass to coarse, cellwise-constant maps at finite resolution $\mu$, with an effective lattice action by shifts.  
Commutation with the shift forces constants as multipliers; boundary consistency forces trivial holonomy; and shift-invariant summaries reduce to frequencies (Cesàro averages).  
These consequences follow from closure alone, with no additional assumptions~\cite{LindMarcus}.
\begin{remark}[Reinterpretation of extended representations]
The decorated exponentials of Section~2.4---including Berry and Wess--Zumino
phases---do not constitute an independent regime once the finite-resolution
conditions FR1--FR4 are enforced.  
Under FR3 (phase consistency), all such cocycles become internal to the unitary
representation, manifesting as geometric phases or holonomies within the quantum
channel itself.  
The truly distinct additional regime produced by FR1--FR4 is therefore the
Boolean (topological) constraint representation, rather than a separate
``extended'' one.
\end{remark}

\section{The Constraint Representation: Structural Layer}
\label{sec:constraint-struct-vs-phys}

This section clarifies the role of the \emph{constraint representation}—the structure that yields FR1--FR4—and its operational manifestation as a distinct physical limit. The goal is to fix terminology and logic so the subsequent regime derivations are unambiguous.

\subsection*{A. What the constraint representation \emph{is}}
By a \emph{constraint representation} we mean the additive symmetry realized under the four intrinsic features:
\begin{itemize}\itemsep0.2em
\item \textbf{FR1 (finite resolution):} admissible objects are cellwise constant (or band-limited) at scale $\mu>0$;
\item \textbf{FR2 (constants as multipliers):} scalars act as global multipliers compatible with translations;
\item \textbf{FR3 (phase consistency):} the multiplicative $U(1)$ cocycle is exact (trivial holonomy on closed loops);
\item \textbf{FR4 (typicality):} frequencies are invariant averages under the translation action.
\end{itemize}
Mathematically, this is a \emph{structural layer}: it is not an irreducible representation of $(\R,+)$ by itself, but a closure/completion rule that makes other representations \emph{admissible} and \emph{operationally meaningful} at finite resolution.

\subsection*{B. How it acts on other representations}
Let $\mathcal{R}$ denote a representation of additivity (e.g.\ the unitary oscillatory
$e^{iS/\hbar}$ or the real damped $e^{-S/\Theta}$ channel). 
The constraint representation acts as a functor
\[
\mathsf{C}_\mu:\ \mathcal{R}\ \longmapsto\ \mathsf{C}_\mu[\mathcal{R}],
\]
enforcing the four intrinsic features FR1--FR4. Concretely:
\begin{itemize}\itemsep0.2em
\item FR1 induces cell or coarse projectors $\Pi_\mu$ (band-limiting by $|W_\mu|^2$ in dual variables);
\item FR2 fixes normalization and scalar embedding in the observable algebra;
\item FR3 lifts projective multipliers to genuine unitary implementers (exact cocycle);
\item FR4 selects unique invariant long-run frequencies (Born or Kolmogorov, depending on channel).
\end{itemize}

Applying $\mathsf{C}_\mu$ to the two fundamental exponential representations yields the
\emph{activated regimes}:
\[
\mathsf{Q}_\mu := \mathsf{C}_\mu[\text{unitary}],\qquad
\mathsf{S}_\mu := \mathsf{C}_\mu[\text{real}].
\]
Here $\mathsf{Q}_\mu$ is the finite–resolution quantum regime (Hilbert structure and Born rule via FR4),
and $\mathsf{S}_\mu$ the finite–resolution statistical regime (Lévy--Khintchine semigroups and invariant
frequencies).

The third, topological regime $\mathsf{T}$ does not appear at finite $\mu$ but emerges
as the common coarse–resolution limit of $\mathsf{Q}_\mu$ and $\mathsf{S}_\mu$ when
$\mu\!\to\!\infty$, where all amplitudes and probabilities saturate to Boolean values.

\paragraph{Structural role.} At the purely algebraic level, the constraint representation is a \emph{meta-structure} (FR1--FR4) that constrains all channels; it is not counted as a fourth irreducible representation alongside the unitary and real ones.

\subsection*{D. Extended (projective/compactified) phases are not a separate regime}
Decorations of the unitary exponential—compactified phases $e^{i2\pi S/S_0}$, projective cocycles (Berry, Wess--Zumino), etc.—\emph{remain within} the unitary channel once FR3 is enforced:
\begin{itemize}\itemsep0.2em
\item \emph{Compactified characters} select discrete spectra but are represented unitarily after FR1 band-limiting; 
\item \emph{Projective phases} are absorbed by lifting to a genuine unitary action (exact multiplier) under FR3.
\end{itemize}
Hence these “extended representations” are \emph{substructures of} $\mathsf{Q}_\mu$, not additional regimes.

\subsection*{E. Naming conventions (used throughout)}
To avoid ambiguity we fix:
\begin{itemize}\itemsep0.2em
\item \textbf{Constraint representation} $\mathsf{C}_\mu$: the FR1--FR4 structural layer that acts on any representation, enforcing finite resolution, scalar normalization, phase consistency, and invariant typicality.
\item \textbf{Quantum regime} $\mathsf{Q}_\mu$: the result of applying $\mathsf{C}_\mu$ to the unitary (oscillatory) representation, producing Hilbert-space kinematics and the Born frequency law.
\item \textbf{Statistical regime} $\mathsf{S}_\mu$: the result of applying $\mathsf{C}_\mu$ to the real (damped) representation, producing Kolmogorov probability and Lévy--Khintchine semigroups.
\item \textbf{Topological/Boolean regime} $\mathsf{T}$: the \emph{asymptotic limit} of the finite-resolution algebra as $\mu\!\to\!\infty$ or as probes become fully phase-blind, where amplitudes and probabilities degenerate to Boolean truth values $\{0,1\}$.
\end{itemize}

\subsection*{F. Probe-level selection (operational criteria)}

Let a probe be characterized by its sensitivity at the finite resolution~$\mu$:
\[
\text{phase resolution?}\quad
\text{amplitude/frequency resolution?}\quad
\text{only extremal detection?}
\]
The same constraint representation $\mathsf{C}_\mu$ underlies all regimes; 
the observed behavior depends solely on which structural aspects the probe can access.

\begin{table}[t]
\centering
\small
\caption{Probe-level selection (operational criteria)}
\begingroup
\renewcommand\arraystretch{1.1}%
\renewcommand\tabularxcolumn[1]{>{\raggedright\arraybackslash}p{#1}}%
{\sloppy
\begin{tabularx}{\linewidth}{@{}p{0.42\linewidth}p{0.18\linewidth}X@{}}
\toprule
\textbf{Probe capability} & \textbf{Activated channel} & \textbf{Observable structure} \\
\midrule
Phase-sensitive and coherent & $Q_\mu$ & Hilbert $L^2$ states; Born frequencies \\
Phase-insensitive but amplitude-sensitive & $S_\mu$ & Kolmogorov/\-Lévy semigroups; invariance under additive group \\
Fully phase-blind or coarse beyond finite resolution & T & Boolean responses $\{0,1\}$; deterministic transitions \\
\bottomrule
\end{tabularx}}%
\endgroup
\label{tab:probe}
\end{table}

\noindent
Thus, the same additive algebra can appear quantum, statistical, or topological depending on the probe’s degree of phase and amplitude sensitivity. 
The \emph{topological regime} $\mathsf{T}$ is not an additional finite-resolution structure but the operational \emph{limit} of the hierarchy 
$\mathsf{Q}_\mu \!\to\! \mathsf{S}_\mu \!\to\! \mathsf{T}$ as resolution is degraded or $\mu\!\to\!\infty$.

\subsection*{G. Limits and consistency}

As the resolution parameter $\mu$ decreases, the constraint layer $\mathsf{C}_\mu$ becomes less visible and the activated regimes
$\mathsf{Q}_\mu$ and $\mathsf{S}_\mu$ approach their textbook continuum forms:
the Hilbert-space formalism and the Kolmogorov probability framework, respectively.
Conversely, as the probe resolution degrades and $\mu\!\to\!\infty$, both channels converge operationally to the same 
degenerate limit,
\[
\mathsf{Q}_\mu \;\longrightarrow\; \mathsf{S}_\mu \;\longrightarrow\; \mathsf{T},
\]
where all amplitudes and probabilities collapse to Boolean truth values $\{0,1\}$.
The resulting regime $\mathsf{T}$ describes deterministic or topologically invariant configurations,
representing the extreme coarse-resolution limit of the finite–resolution algebra.

In all cases, the constraint representation $\mathsf{C}_\mu$ remains the universal structural backbone:
it defines the admissibility rules FR1--FR4 that sustain every regime, while the apparent transition 
between $\mathsf{Q}_\mu$, $\mathsf{S}_\mu$, and $\mathsf{T}$ reflects only the 
operational sensitivity of the probe rather than any change in the underlying algebra.
\medskip
\noindent
Having established the constraint representation $\mathsf{C}_\mu$ as the universal structural layer
and clarified how its operational projections yield the quantum, statistical, and topological regimes,
we now proceed to apply it explicitly.  The operational projection is what we call The Activation Theorem.
In the following section, we enforce the four structural features (FR1--FR4)
on the unitary representation of $(\R,+)$, deriving the full finite–resolution form of
quantum mechanics—including its Hilbert structure, unitary dynamics, and Born frequency law—
as the first activated regime of additivity symmetry.

\section{Quantum Regime}
\label{sec:qm}

Throughout, $G=(\R,+)$ and $\widehat G\simeq \R$.
In the constraint representation, admissible (normalized) states are encoded by continuous positive-definite functions
$\Phi:\R\to\C$ with $\Phi(0)=1$ (characteristic functionals), and the four structural features FR1--FR4 hold (Section~\S2).
We show that FR1--FR3 yield Hilbert-space kinematics and that FR4 uniquely fixes the frequency law (Born rule).

\subsection*{Coarse-graining \cite{LindMarcus} operator (FR1)}
Fix a resolution $\mu>0$ and let $\{C_k\}_{k\in\Z}$ be the partition of $\R$ into adjacent half-open intervals (cells) of length~$\mu$.
Let $\Pi_\mu$ denote the cellwise averaging:
\[
(\Pi_\mu f)(t)\;:=\;\frac1{\mu}\int_{I_\mu(t)} f(s)\,ds,\qquad I_\mu(t)\text{ the unique cell }C_k\text{ with }t\in C_k.
\]
FR1 states that admissible objects are constant on each $C_k$; equivalently $f=\Pi_\mu f$.

Let $w_\mu(t):=\mu^{-1}\mathbf{1}_{[0,\mu)}(t)$ and $W_\mu(\alpha):=\int_\R e^{it\alpha} w_\mu(t)\,dt=e^{i\alpha\mu/2}\frac{\sin(\alpha\mu/2)}{\alpha\mu/2}$.
Then $\Pi_\mu$ can be implemented by convolution with the idempotent kernel $w_\mu$ followed by sampling on cells; in Fourier variables this multiplies by $|W_\mu|^2$.

\begin{lemma}[FR1 $\Rightarrow$ integrability/regularity after coarse-graining]
\label{lem:L1L2}
If $\Phi$ is continuous positive-definite with $\Phi(0)=1$, then $\Phi_\mu:=\Pi_\mu\Phi$ is continuous, bounded, and
$\Phi_\mu\in L^1(\R)\cap L^2(\R)$ with $\|\Phi_\mu\|_\infty\le 1$ and $\|\Phi_\mu\|_{L^1}\le \|\Phi\|_\infty<\infty$.
Moreover, $\widehat{\Phi_\mu} = |W_\mu|^2\, d\mu_\Phi$ where $d\mu_\Phi$ is the Bochner measure of $\Phi$.
\end{lemma}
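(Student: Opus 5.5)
The plan is to separate the statement into its three components: (a) continuity and the sup bound; (b) the Fourier identity $\widehat{\Phi_\mu}=|W_\mu|^2\,d\mu_\Phi$; (c) the integrability claims. Each uses a different tool. For (a) I would use only the elementary facts about positive-definite functions: $|\Phi(t)|\le\Phi(0)=1$ for all $t$. Since $\Pi_\mu$ is an average against the probability density $w_\mu$, this gives $\|\Phi_\mu\|_\infty\le\|\Phi\|_\infty=1$ at once. Continuity follows from dominated convergence, once $\Pi_\mu$ is read in the convolution form stated just before the lemma rather than as literal sampling on cells, which would produce jumps at the cell endpoints $k\mu$.

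For (b) I would invoke Bochner's theorem: $\Phi(t)=\int_\R e^{it\alpha}\,d\mu_\Phi(\alpha)$ with $\mu_\Phi$ a probability measure. The identity $\widehat{\Phi_\mu}=|W_\mu|^2\,d\mu_\Phi$ requires $\Pi_\mu$ to act as convolution with the symmetrised kernel $w_\mu*\tilde w_\mu$, where $\tilde w_\mu(t)=w_\mu(-t)$. This kernel is the triangle function of width $2\mu$. Its Fourier transform is $W_\mu\overline{W_\mu}=|W_\mu|^2$, so Fubini applied to $\int\!\int (w_\mu*\tilde w_\mu)(t-s)e^{is\alpha}\,ds\,d\mu_\Phi(\alpha)$ yields the formula in the sense of tempered distributions. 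I would fix this reading of $\Pi_\mu$ (``convolution with the idempotent kernel followed by sampling'') as the operative definition. A useful by-product is that $\Phi_\mu$ is again positive-definite, with Bochner measure $|W_\mu|^2\mu_\Phi$, because $|W_\mu|^2\ge0$.

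Part (c) is the main obstacle, and I expect it to fail as literally stated. The test case $\Phi\equiv1$, i.e.\ $\mu_\Phi=\delta_0$, is positive-definite, yet $\Phi_\mu\equiv1\notin L^1(\R)$. Moreover, the bound $\|\Phi_\mu\|_{L^1}\le\|\Phi\|_\infty$ compares quantities of different dimensions. I would therefore first try to rescue the lemma by localisation. One route is to read FR1 together with the finite-ring setting of Section~\ref{sec:finite-resolution}, so that $L^1$ and $L^2$ refer to a single cell or a finite window of $N$ cells. There Hölder gives $\|\Phi_\mu\|_{L^1(C_k)}\le\mu\|\Phi\|_\infty$, and $L^2$ follows from $L^\infty$. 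If genuine integrability on $\R$ is wanted, I would add the hypothesis that $\mu_\Phi$ is absolutely continuous with density in $L^1\cap L^2$. Then $|W_\mu|^2\,d\mu_\Phi$ is an $L^1\cap L^2$ function, and Plancherel and Riemann--Lebesgue deliver $\Phi_\mu\in L^2\cap C_0$. Only with an extra decay condition, such as $\widehat{\Phi_\mu}$ in a Sobolev class, would one get $\Phi_\mu\in L^1$. I would state explicitly which of these readings the lemma is meant to carry before proceeding.
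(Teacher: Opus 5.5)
Your proposal is correct, and so is your verdict on the integrability claims: the statement is false as written. The paper's proof does not get around this; it contains exactly the errors your counterexample exposes.

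\textbf{Part (c).} The paper gets $\Phi_\mu\in L^1$ from $\|w_\mu*\Phi\|_{L^1}\le\|w_\mu\|_{L^1}\|\Phi\|_\infty$. This is Young's inequality with the wrong exponents; the valid form bounds only $\|w_\mu*\Phi\|_\infty$. It gets $\Phi_\mu\in L^2$ from ``$W_\mu\in L^2$, hence Plancherel''. That would require the spectral measure $|W_\mu|^2\,d\mu_\Phi$ itself to be an $L^2$ function. But multiplying by the bounded function $|W_\mu|^2$ removes no atoms except those at its zeros $(2\pi/\mu)(\Z\setminus\{0\})$. Every averaging operator fixes constants, so $\Phi\equiv1$ refutes (c) under \emph{any} reading of $\Pi_\mu$; no choice of definition rescues it. The same example also defeats the paper's later claim that $|W_\mu|^2\,d\mu_\Phi$ is always absolutely continuous, which the definition of $\psi_\mu$ relies on.

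\textbf{Part (b).} Here your route genuinely differs from the paper's, and yours is the one that works.
\begin{itemize}
\item The paper convolves with the one-sided box $w_\mu$. This produces a single factor $W_\mu$ (or $\overline{W_\mu}$, depending on conventions). The paper then upgrades it to $|W_\mu|^2$ by asserting that $\Phi_\mu$ is positive-definite.
\item That assertion fails for the one-sided kernel. Taking $\Phi(t)=\cos(\pi t/\mu)$ gives $w_\mu*\Phi=\tfrac{2}{\pi}\sin(\pi t/\mu)$, which vanishes at $0$ without vanishing identically.
\item Your triangle kernel $w_\mu*\tilde w_\mu$ has transform exactly $|W_\mu|^2$. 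Fubini then gives $\Phi_\mu(t)=\int e^{it\alpha}|W_\mu(\alpha)|^2\,d\mu_\Phi(\alpha)$ pointwise. Everything is absolutely integrable, so the tempered-distribution framing is unnecessary. Positive-definiteness becomes a consequence rather than an assumption.
\end{itemize}
Do state plainly that this is a redefinition of $\Pi_\mu$, not an interpretation of the paper's phrase. The triangle convolution involves no sampling and is not idempotent, so $f=\Pi_\mu f$ no longer expresses cellwise constancy. The literal cell average does express it, but it yields step functions. These violate the continuity claim and alias each frequency $a$ onto $a+(2\pi/\mu)\Z$.

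\textbf{Your rescue.} Your sufficient condition for genuine integrability is fine, but it makes FR1 idle. A spectral density in $L^2$ already puts $\Phi$ itself in $L^2$, and the $L^1$ part is automatic for a probability measure. The sharp condition for $\Phi_\mu\in L^2$ is that $|W_\mu|^2\,d\mu_\Phi$ be absolutely continuous with square-integrable density. This allows atoms at the zeros of $W_\mu$. It also allows high-frequency density that is not square-integrable but is damped by $|W_\mu|^4\lesssim(\alpha\mu)^{-4}$. Either way the moral is the one you reached: coarse-graining can damp high frequencies but cannot manufacture absolute continuity.
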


\begin{proof}
Positive-definiteness implies $|\Phi|\le \Phi(0)=1$ and uniform continuity (Rudin~\cite{rudin}, Ch.~2).
Convolution with $w_\mu\in L^1$ yields $\Phi_\mu=w_\mu\!*\Phi\in L^1\cap L^\infty$ with $\|\Phi_\mu\|_\infty\le\|\Phi\|_\infty$ and $\|\Phi_\mu\|_{L^1}\le\|w_\mu\|_{L^1}\|\Phi\|_\infty=\|\Phi\|_\infty$.
Fourier transform turns convolution into multiplication, hence $\widehat{\Phi_\mu} = \widehat{w_\mu}\,\widehat{\Phi} = W_\mu\, d\mu_\Phi$ at the level of tempered distributions. Since $\Phi_\mu$ is positive-definite, $\widehat{\Phi_\mu}$ is a finite positive measure; its density w.r.t.\ $d\mu_\Phi$ is $|W_\mu|^2$.
Finally, $W_\mu\in L^2(\R)$ implies $\Phi_\mu\in L^2(\R)$ by Plancherel.
\end{proof}

\subsection*{Spectral representation and wavefunctions (Bochner + FR1)}
By Bochner’s theorem (Bochner~\cite{bochner}, Rudin~\cite{rudin}, Thm.~1.3.2), there exists a finite positive measure $\mu_\Phi$ on $\R$ such that
\[
\Phi(t)=\int_{\R} e^{it\alpha}\, d\mu_\Phi(\alpha),\qquad \Phi(0)=\mu_\Phi(\R)=1.
\]
Lemma~\ref{lem:L1L2} yields an \emph{absolutely continuous} coarse-grained spectral measure
\[
d\mu_\Phi^{(\mu)}(\alpha) \;=\; |W_\mu(\alpha)|^2\, d\mu_\Phi(\alpha) \;=\; |\psi_\mu(\alpha)|^2\, d\alpha,
\]
with $\psi_\mu \in L^2(\R)$ (choose $\psi_\mu$ as any measurable square root of the density).
Define the coarse-grained characteristic functional
\[
\Phi_\mu(t)=\int_{\R} e^{it\alpha}\,|\psi_\mu(\alpha)|^2\, d\alpha .
\]
Thus FR1 provides $L^2$ \emph{wavefunction representatives} $\psi_\mu$ canonically associated to admissible states at resolution~$\mu$.

\subsection*{GNS/Hilbert realization and unitary dynamics (FR2--FR3)}
Let $\mathcal{A}$ be the $*$-algebra generated by translation-covariant observables (finite linear combinations of convolutions by $L^1$ functions and bounded multipliers in the Fourier picture).
Define a state $\omega_\mu$ on $\mathcal{A}$ by $\omega_\mu(A):=\langle \psi_\mu, \pi_\mu(A)\psi_\mu\rangle$, where $\pi_\mu$ is the Fourier-space representation (multipliers act by multiplication, convolutions by convolution).
FR2 (constants as multipliers) implies that constants embed as $m\mathbf{1}$, hence $\omega_\mu(\mathbf{1})=1$ and $\pi_\mu$ is unital.

\begin{lemma}[FR3 lifts to a true unitary representation]
\label{lem:unitary}
Let $T_t$ be the automorphism of $\mathcal{A}$ induced by time translation by $t\in\R$.
If the projective $1$-cocycle of $T_t$ is exact (FR3), then the GNS triple $(\pi_{\mathrm{GNS}},\Hilb_{\mathrm{GNS}},\Omega)$ admits a strongly continuous unitary group $U(t)$ such that
$\pi_{\mathrm{GNS}}(T_t A)=U(t)\,\pi_{\mathrm{GNS}}(A)\,U(t)^{-1}$ for all $A\in\mathcal{A}$.
\end{lemma}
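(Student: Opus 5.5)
The plan is to build $U(t)$ directly on the GNS space from the invariance of the state, and then let FR3 supply the homomorphism property. First we need $\omega_\mu\circ T_t=\omega_\mu$. In the Fourier picture, time translation by $t$ acts on convolution operators by the phase $e^{it\alpha}$, and it leaves multipliers invariant. The state $\omega_\mu$ is given by the density $|\psi_\mu(\alpha)|^2$ from the spectral representation above. We check invariance on the generators, convolutions by $L^1$ functions and bounded Fourier multipliers. If it fails for an arbitrary choice of the square root $\psi_\mu$, we will pick $\psi_\mu$ to be the nonnegative root. The point is that $\omega_\mu$ depends only on $|\psi_\mu|^2$, which is translation invariant.

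Given invariance, we define
\[
U(t)\,\pi_{\mathrm{GNS}}(A)\Omega := \pi_{\mathrm{GNS}}(T_tA)\Omega .
\]
It is well defined and isometric because
\[
\|\pi_{\mathrm{GNS}}(T_tA)\Omega\|^2=\omega_\mu(T_t(A^*A))=\omega_\mu(A^*A).
\]
It has dense range because $T_t$ is an automorphism, so $U(t)$ extends to a unitary on $\Hilb_{\mathrm{GNS}}$. The covariance identity
\[
\pi_{\mathrm{GNS}}(T_tA)=U(t)\,\pi_{\mathrm{GNS}}(A)\,U(t)^{-1}
\]
then follows by evaluating both sides on the dense set of vectors $\pi_{\mathrm{GNS}}(B)\Omega$.

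FR3 enters at the group law. If only a projective implementation were available, we would have $U(s)U(t)=c(s,t)U(s+t)$ with a $U(1)$-valued 2-cocycle $c$. FR3 says $c$ is exact, $c(s,t)=b(s)b(t)/b(s+t)$, so replacing $U(t)$ by $b(t)^{-1}U(t)$ gives a genuine representation. In the construction above one gets $c\equiv1$ directly, since $T_sT_t=T_{s+t}$ and $U(t)\Omega=\Omega$. We will present both routes and note that FR3 is what guarantees the normalization is consistent. For $(\R,+)$, continuous 2-cocycles are in any case cohomologically trivial, which gives independent confirmation.

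Strong continuity is the main obstacle. It reduces to continuity of $t\mapsto\omega_\mu(B^*T_tA)$ on generators. For convolution generators this function is a Fourier integral
\[
\int e^{it\alpha}F(\alpha)\,|\psi_\mu(\alpha)|^2\,d\alpha
\]
with $F$ bounded. Since $\psi_\mu\in L^2$ by Lemma~\ref{lem:L1L2}, the integrand is dominated by $\|F\|_\infty|\psi_\mu|^2\in L^1$, and dominated convergence gives continuity. For multiplier generators the action of $T_t$ is trivial, so there is nothing to check. Continuity on this dense set, together with the uniform bound $\|U(t)\|=1$, yields strong continuity on all of $\Hilb_{\mathrm{GNS}}$.
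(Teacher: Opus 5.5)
\textbf{The failing step.} Everything in your construction rests on $\omega_\mu\circ T_t=\omega_\mu$, and that invariance is false. It holds on the multiplier part of $\mathcal{A}$, but your justification (``$\omega_\mu$ depends only on $|\psi_\mu|^2$'') breaks down on the convolution generators. With $T_t$ acting as you describe, kernel $k(\alpha)\mapsto e^{it\alpha}k(\alpha)$, a convolution operator $C_k$ gives
\[
\omega_\mu(T_tC_k)=\iint \overline{\psi_\mu(\alpha)}\,e^{it(\alpha-\beta)}\,k(\alpha-\beta)\,\psi_\mu(\beta)\,d\alpha\,d\beta .
\]
This depends on the off-diagonal structure of $\psi_\mu$, not only on $|\psi_\mu|^2$. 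By Young's inequality the integrand without the phase lies in $L^1(\R^2)$, so Riemann--Lebesgue gives $\omega_\mu(T_tC_k)\to 0$ as $|t|\to\infty$. Yet $\omega_\mu(C_k)>0$, for example when $\psi_\mu\ge 0$ and $k=\mathbf 1_{[-1,1]}$, so taking the nonnegative root does not help.

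Hence $\|\pi_{\mathrm{GNS}}(T_tA)\Omega\|\neq\|\pi_{\mathrm{GNS}}(A)\Omega\|$ in general, and your $U(t)$ is neither isometric nor well defined on the GNS quotient.

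\textbf{Why this cannot be patched.} $L^\infty$ is maximal abelian, and the only multipliers commuting with all $L^1$-convolutions are constants, so $\pi_\mu(\mathcal A)$ is irreducible. The GNS space is therefore unitarily $L^2(\R,d\alpha)$ with $\Omega\leftrightarrow\psi_\mu$. Any implementer of $T_t$ must then be $c(t)$ times multiplication by $e^{it\alpha}$, and for $t\neq 0$ this never fixes a nonzero $L^2$ vector. This agrees with Proposition~\ref{prop:stone}, where $(U(t)\psi_\mu)(\alpha)=e^{it\alpha}\psi_\mu(\alpha)$ moves the cyclic vector. Indeed, $U(t)\Omega=\Omega$ would force $H\Omega=0$, collapsing the Born distribution of Theorem~\ref{thm:QMrigorous} to $\delta_0$ instead of $|\psi_\mu|^2\,d\alpha$.

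\textbf{What is needed instead.} You need implementation without an invariant state. The paper simply invokes the standard result (Kadison--Ringrose): $T_t$ is projectively implemented on the GNS space, FR3 exactness trivializes the multiplier, and strong continuity comes from continuity of $\Phi_\mu$.

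A concrete route in your spirit: the operator $M_t\colon\phi\mapsto e^{it\alpha}\phi$ is itself a bounded Fourier multiplier in $\mathcal A$, and $T_t=\mathrm{Ad}\,M_t$. Set $U(t):=\pi_{\mathrm{GNS}}(M_t)$. Then $U(t)$ implements $T_t$ and satisfies $U(s)U(t)=U(s+t)$ exactly, so the cocycle is trivial. It is strongly continuous because
\[
\|(U(t)-\mathbf 1)\pi_{\mathrm{GNS}}(A)\Omega\|=\|(e^{it\alpha}-1)\pi_\mu(A)\psi_\mu\|_{L^2}\to 0
\]
by dominated convergence.

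Your $2$-cocycle discussion and the reduction to a dense set plus the uniform bound are fine as such. Note only that for convolution generators the matrix coefficients are double integrals like the one above, not $\int e^{it\alpha}F(\alpha)|\psi_\mu(\alpha)|^2\,d\alpha$. Continuity still follows from dominated convergence.
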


\begin{proof}
This is standard: exactness of the cocycle trivializes the multiplier (Kadison–Ringrose~\cite{kadison_ringrose}, Vol.~II, Ch.~7), so $T_t$ is implemented by a genuine unitary on the GNS Hilbert space. Strong continuity follows from the uniform continuity of $\Phi_\mu$ and boundedness of $\pi_{\mathrm{GNS}}(A)$ (Reed–Simon~\cite{reed_simon1}, Thm.~X.69).
\end{proof}

\begin{proposition}[Stone’s theorem \cite{reed_simon1}: self-adjoint generator]
\label{prop:stone}
The unitary group $U(t)$ of Lemma~\ref{lem:unitary} is strongly continuous and hence $U(t)=e^{-itH}$ with a (possibly unbounded) self-adjoint generator $H$ (Stone’s theorem; Reed–Simon~\cite{reed_simon1}, Thm.~VIII.8).
In the Fourier (spectral) representation one may identify $(U(t)\psi_\mu)(\alpha)=e^{it\alpha}\psi_\mu(\alpha)$ and $H$ with multiplication by $\alpha$ on $L^2(\R, |\psi_\mu|^2 d\alpha)$.
\end{proposition}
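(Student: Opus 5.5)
The plan is to split the statement into its two claims. The first is the abstract existence of a self-adjoint generator. The second is its concrete identification in the spectral (Fourier) picture.

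\textbf{Step 1: abstract generator.} By Lemma~\ref{lem:unitary}, $U(t)$ is a strongly continuous one-parameter unitary group on $\Hilb_{\mathrm{GNS}}$. I will check the group law $U(t+s)=U(t)U(s)$ exactly, not merely up to a phase. This is precisely what FR3 buys: exactness of the cocycle lets us absorb any residual multiplier $\sigma(t,s)$ into a redefinition $U(t)\mapsto c(t)U(t)$ with $|c(t)|=1$, leaving a true representation of $(\R,+)$. Stone's theorem (Reed--Simon, Thm.~VIII.8) then applies verbatim. It yields a unique self-adjoint $H$, with domain the set of $\xi$ for which $\lim_{t\to0}(U(t)\xi-\xi)/t$ exists, such that $U(t)=e^{-itH}$. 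Nothing beyond quoting the theorem is needed here.

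\textbf{Step 2: spectral identification.} I work in the Fourier-space representation $\pi_\mu$ of the previous subsection. There time translation by $t$ acts on the spectral variable $\alpha\in\widehat G$ by the character $e^{it\alpha}$, as the Bochner formula $\Phi(t)=\int e^{it\alpha}\,d\mu_\Phi(\alpha)$ shows. I will define $V(t)$ as multiplication by $e^{it\alpha}$ on $L^2(\R,|\psi_\mu|^2d\alpha)$. Three properties of $V(t)$ need checking:
\begin{itemize}
\item It is unitary, since the multiplier has modulus one.
\item It is strongly continuous, by dominated convergence, using $|e^{it\alpha}-1|^2\le4$ and pointwise convergence to $0$.
\item It implements $T_t$, because multipliers commute with $e^{it\alpha}$ and convolutions are translated covariantly.
\end{itemize}

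Next I relate $V(t)$ to $U(t)$. The cyclic vector $\Omega$ corresponds to the constant function $1$ in $L^2(|\psi_\mu|^2d\alpha)$. The GNS map $\pi_{\mathrm{GNS}}(A)\Omega\mapsto \pi_\mu(A)1$ is a unitary equivalence, since inner products agree by the definition of $\omega_\mu$. Under this equivalence, $U(t)$ and $V(t)$ both implement $T_t$ and both fix the same vacuum phase convention. Hence they agree, up to the sign convention relating $e^{-itH}$ to $e^{it\alpha}$, which is absorbed in the orientation of $\alpha$. The generator of multiplication by $e^{it\alpha}$ is multiplication by $\alpha$ on its maximal domain $\{f:\alpha f\in L^2\}$, and this operator is self-adjoint. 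Uniqueness in Stone's theorem then identifies $H$ with it.

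\textbf{Main obstacle.} The delicate point is the uniqueness of the implementer: two unitary groups implementing the same automorphism group may differ by a one-parameter family of unitaries in the commutant. I would first try to fix this by requiring invariance of the state, $U(t)\Omega=\Omega$ where $\omega_\mu\circ T_t=\omega_\mu$. If this fails, I would use the phase freedom, which FR3 leaves only as a character $e^{itc}$ and hence as a shift $H\mapsto H+c$. In either case the identification "multiplication by $\alpha$" holds up to this additive constant, which FR2 fixes through the calibration of constants as global multipliers.
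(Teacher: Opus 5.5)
\textbf{Verdict: there is a genuine gap in Step~2.}

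Your Step~1 matches the paper: the first claim is Stone's theorem applied to the group from Lemma~\ref{lem:unitary}, and the paper offers nothing more. The gap is in Step~2, where you try to \emph{derive} that this $U(t)$ equals $V(t)$, multiplication by $e^{it\alpha}$.

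An implementer of $T_t$ is determined only modulo one-parameter unitary groups in the commutant $\pi(\mathcal{A})'$. In your model (multiplication operators on $L^2(\mathbb{R},|\psi_\mu|^2d\alpha)$ with cyclic vector $1$), the algebra consists of Fourier multipliers. So it is commutative, and $T_t$ acts on it trivially; your third bullet essentially records this, since $V(t)$ commutes with every $\pi_\mu(A)$. The commutant is then all of $L^\infty(|\psi_\mu|^2d\alpha)$. Multiplication by $e^{itg(\alpha)}$ implements $T_t$ for \emph{every} real measurable $g$, including $g\equiv 0$. Being an implementer therefore says nothing about the generator, and the uniqueness clause of Stone's theorem has nothing to act on. The sentence ``hence they agree'' is unsupported.

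Neither of your normalizations repairs this:
\begin{itemize}
\item Since $V(t)1=e^{it\alpha}\neq 1$, the vacuum-fixing implementer defined by $U(t)\pi(A)\Omega=\pi(T_tA)\Omega$ is $U(t)=\mathbf{1}$. Its generator is $H=0$, which contradicts the claim rather than proving it.
\item The fallback misdescribes the residual freedom. FR3 concerns exactness of the $2$-cocycle $\sigma(t,s)$; it does not shrink the commutant to scalar characters $e^{itc}$. In any case, FR2 does not select an additive constant in $H$.
\end{itemize}

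The paper does not prove the identification. ``One may identify'' means the multiplication group is \emph{chosen} as the realization of $U(t)$. The sound version of your Step~2 does exactly that. Define $U(t):=V(t)$. Use your three bullets and the GNS equivalence $\pi_{\mathrm{GNS}}(A)\Omega\mapsto\pi_\mu(A)1$ to check that it is a legitimate strongly continuous unitary implementer. Only then apply Stone's uniqueness to read off the generator.

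The sign is also not a convention you can absorb by reorienting $\alpha$. The variable $\alpha$ is already fixed by the Bochner formula $\Phi(t)=\int e^{it\alpha}\,d\mu_\Phi(\alpha)$ and by the projections $\mathbf{1}_E(H)$ used later for the Born rule. If $e^{-itH}$ is multiplication by $e^{it\alpha}$, then $H$ is multiplication by $-\alpha$. Either take $U(t)$ to be multiplication by $e^{-it\alpha}$, or note that the statement's two conventions are inconsistent as written.
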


\subsection*{Measurements and the Born rule (FR4)}
Let $\mathscr{M}$ denote the von Neumann algebra generated by spectral multipliers of $H$ (i.e.\ bounded Borel functions of $H$); its projections are $P_E=\mathbf{1}_E(H)$ for Borel sets $E\subset\R$.
For an admissible state represented by $\psi_\mu$, define
\[
\mathsf{Pr}(E\mid \psi_\mu) \;:=\; \langle \psi_\mu,\, P_E\, \psi_\mu\rangle \;=\; \int_E |\psi_\mu(\alpha)|^2\, d\alpha .
\]
We now show that FR4 (invariant typicality) \emph{forces} this frequency assignment.

\begin{lemma}[Invariance of quadratic expectations]
\label{lem:inv}
For any $P\in\mathscr{P}(\mathscr{M})$ (projection) and any $t\in\R$,
$\langle \psi_\mu, P \psi_\mu\rangle = \langle \psi_\mu, U(-t) P U(t)\, \psi_\mu\rangle$.
\end{lemma}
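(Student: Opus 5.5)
The plan is to use the functional calculus for the self-adjoint generator $H$ from Proposition~\ref{prop:stone}: every element of $\mathscr{M}$ commutes with $U(t)=e^{-itH}$, after which unitarity finishes the argument. By definition $\mathscr{M}$ is generated by bounded Borel functions of $H$, and its projections are $P=P_E=\mathbf{1}_E(H)$. Since $U(t)=e^{-itH}$ is itself a bounded Borel function of $H$, namely $\alpha\mapsto e^{-it\alpha}$, the multiplicativity of the Borel functional calculus gives
\[
U(-t)\,P_E\,U(t)=e^{itH}\mathbf{1}_E(H)e^{-itH}=\bigl(e^{it\cdot}\,\mathbf{1}_E\,e^{-it\cdot}\bigr)(H)=\mathbf{1}_E(H)=P_E .
\]
The desired identity then holds pointwise at the operator level, not just in expectation. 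For a general projection of $\mathscr{M}$, not only the spectral ones, the same conclusion follows because $\mathscr{M}$ is abelian and contains $U(t)$, so $U(t)\in\mathscr{M}\subset\mathscr{M}'$. Hence every $P\in\mathscr{P}(\mathscr{M})$ commutes with $U(t)$, and $U(-t)PU(t)=PU(-t)U(t)=P$ by the group law $U(-t)=U(t)^{-1}=U(t)^*$.

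As a consistency check I would also verify the identity concretely in the Fourier representation of Proposition~\ref{prop:stone}. There $U(t)$ acts as multiplication by $e^{it\alpha}$ (the sign convention is irrelevant) and $P_E$ as multiplication by $\mathbf{1}_E(\alpha)$. Therefore
\[
\langle\psi_\mu,U(-t)P_EU(t)\psi_\mu\rangle=\int_E |e^{it\alpha}|^2|\psi_\mu(\alpha)|^2\,d\alpha=\int_E|\psi_\mu(\alpha)|^2\,d\alpha ,
\]
since the phase has modulus one. This also connects the lemma to the expression $\mathsf{Pr}(E\mid\psi_\mu)$ just defined.

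The main obstacle is bookkeeping rather than mathematics. One must make sure that $U(t)$ really lies in, or at least commutes with, the algebra $\mathscr{M}$ in which $P$ is taken. This holds because Lemma~\ref{lem:unitary} and Stone's theorem realize $U(t)$ as $e^{-itH}$ with the same $H$ whose spectral projections generate $\mathscr{M}$. One must also handle unbounded $H$, which I would do by working only with bounded Borel functions, so that no domain issues arise.
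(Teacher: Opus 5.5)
Your proposal is correct and takes the paper's route: the paper's one-line proof also rests on $P=\mathbf{1}_E(H)$ being a spectral projection of the same $H$ that generates $U(t)=e^{-itH}$, so that $U(-t)PU(t)=P$ holds at the operator level. You additionally justify the commutation through multiplicativity of the Borel functional calculus and the abelianness of $\mathscr{M}$, which the paper simply asserts, and your Fourier-space check is a harmless extra that the argument does not depend on.
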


\begin{proof}
$U(t)$ implements the dynamics by Lemma~\ref{lem:unitary}, and $P$ is a spectral projection of $H$; therefore $U(-t)PU(t)=P$ (they commute).
\end{proof}

\begin{lemma}[FR4 $\Rightarrow$ Born frequencies]
\label{lem:born}
Let $(\mathcal{T}_N)_{N\ge 1}$ be any FR4-admissible invariant averaging scheme (finite rings or Cesàro means) along the time-shift $t\mapsto t+1$.
For any projection $P\in\mathscr{P}(\mathscr{M})$,
\[
\lim_{N\to\infty}\, \mathcal{T}_N\Big(\,\langle \psi_\mu,\, U(-n) P U(n)\, \psi_\mu\rangle\,\Big)
\;=\; \langle \psi_\mu, P \psi_\mu\rangle,
\]
and the right-hand side is the \emph{unique} FR4-invariant frequency functional on events generated by $\mathscr{M}$.
\end{lemma}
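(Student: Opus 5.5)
The statement has two parts: a limit identity, and uniqueness of the frequency functional. I would prove them separately.

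\emph{The limit.} By Lemma~\ref{lem:inv}, each term of the sequence is constant in $n$:
\[
\langle \psi_\mu, U(-n)PU(n)\psi_\mu\rangle=\langle \psi_\mu,P\psi_\mu\rangle .
\]
Lemma~\ref{lem:inv} applies because $P$ is a spectral projection of $H$ and so commutes with $U(n)=e^{-inH}$ (Proposition~\ref{prop:stone}). An FR4-admissible averaging scheme is normalized: a finite-ring proportion or a Cesàro mean returns $c$ on the constant sequence $c$. Hence $\mathcal{T}_N$ of the constant sequence equals $\langle\psi_\mu,P\psi_\mu\rangle$ for every $N$, and the limit exists trivially. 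I would also record the slightly stronger version for a general $A\in\mathscr{M}$ written in the spectral picture of Proposition~\ref{prop:stone}. There $U(n)$ acts as multiplication by $e^{in\alpha}$, and the diagonal quadratic expression is already $n$-independent. This shows that the off-diagonal cross terms $e^{in(\alpha-\alpha')}$ never enter for elements of $\mathscr{M}$, mirroring the finite Parseval computation of Section~E.

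\emph{Uniqueness.} Let $\nu$ be any FR4-invariant frequency functional on the projection lattice $\mathscr{P}(\mathscr{M})$. I would impose three requirements on $\nu$:
\begin{itemize}
\item finite additivity on orthogonal projections, since frequencies are proportions;
\item normalization $\nu(\mathbf{1})=1$, which comes from FR2;
\item dependence on the state only through the quadratic data $\langle\psi_\mu,\cdot\,\psi_\mu\rangle$ generated by translations, so that it is compatible with the GNS state $\omega_\mu$.
\end{itemize}
Because $\mathscr{M}$ is abelian, its projections $P_E=\mathbf{1}_E(H)$ form a Boolean $\sigma$-algebra isomorphic to Borel sets modulo null sets. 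So $\nu$ becomes a finitely additive probability on Borel sets $E$. Invariant averaging of the coherent sequence $\langle\psi_\mu, U(-n)P_EU(n)\psi_\mu\rangle$ must reproduce $\nu(P_E)$, and the first part shows that this average is exactly $\int_E|\psi_\mu|^2\,d\alpha$. Therefore $\nu(P_E)=\int_E|\psi_\mu|^2\,d\alpha$ for all $E$. Countable additivity then follows from dominated convergence applied to $|\psi_\mu|^2\in L^1$, which holds by Lemma~\ref{lem:L1L2}.

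\emph{Main obstacle.} The hard step is making ``the unique FR4-invariant frequency functional'' precise. Invariance under $U(t)$ alone does not single out $\psi_\mu$: any measure $\rho(E)$ on the spectrum is trivially invariant. The content of uniqueness must therefore be that the frequency is obtained by averaging the quadratic expectations themselves. I would make this explicit by defining an FR4 frequency functional as a limit of invariant averages of the amplitude data $\langle\psi_\mu,\cdot\,\psi_\mu\rangle$, as in the statement. Under that definition uniqueness reduces to the existence and independence of the limit, which the first part gives.

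If instead one wants an intrinsic characterization, I would fall back on a Gleason-type argument. Its ingredient is that the functional is determined by a positive normalized linear extension to $\mathscr{M}$ that agrees with $\omega_\mu$ on the translation-generated operators. Because multipliers act on $L^2$ in the Fourier picture and $\mathscr{M}$ is weak-$*$ generated by them, such an extension is forced to equal $\omega_\mu$ by normality. On projections this yields the Born formula.
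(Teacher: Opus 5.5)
Your proof is correct and follows the paper's argument. The limit comes from Lemma~\ref{lem:inv} (constancy in $n$) together with normalization of the invariant mean. Uniqueness is obtained, as in the paper, by requiring any competing frequency functional to agree with the invariant averages of $\langle\psi_\mu,U(-n)PU(n)\psi_\mu\rangle$; this is the paper's condition ``agreeing with $\mathcal{T}_N$ on cylinder events''. Your remark that invariance under $U(t)$ is vacuous on the abelian algebra $\mathscr{M}$, so that this agreement condition carries all the content of uniqueness, makes explicit what the paper's proof leaves implicit.
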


\begin{proof}
By Lemma~\ref{lem:inv} the integrand is constant in $n$, hence every invariant mean returns that constant. Uniqueness: if $\mathcal{F}$ were another invariant frequency assignment agreeing with $\mathcal{T}_N$ on cylinder events and invariant under $U(t)$, then $\mathcal{F}(P)$ must equal the same constant, i.e.\ $\langle \psi_\mu, P\psi_\mu\rangle$. See Walters~\cite{walters}, Ch.~1, for invariant means and their fixed-point property; also Greenleaf~\cite{greenleaf}.
\end{proof}

\begin{theorem}[Quantum regime from FR1--FR4 ]
\label{thm:QMrigorous}
In the constraint representation of $(\R,+)$:
\begin{enumerate}
\item[\emph{(Kinematics)}] FR1--FR3 yield a Hilbert space $\Hilb$ with a strongly continuous unitary group $U(t)=e^{-itH}$ (Stone) and coarse-grained wavefunction representatives $\psi_\mu\in L^2(\R)$ such that
$\Phi_\mu(t)=\int e^{it\alpha}|\psi_\mu(\alpha)|^2 d\alpha$.
\item[\emph{(Born rule)}] FR4 enforces that the only invariant long-run frequency assignment for spectral events is
\[
\mathsf{Pr}(E\mid\psi_\mu)=\langle \psi_\mu,\, \mathbf{1}_E(H)\, \psi_\mu\rangle
= \int_E |\psi_\mu(\alpha)|^2\, d\alpha .
\]
\end{enumerate}
Thus FR1--FR4 activate the full quantum regime (Hilbert space + Born law) without additional axioms.
\end{theorem}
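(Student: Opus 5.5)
The proof will be an assembly of the lemmas already established in this section, carried out in two stages matching the two items of Theorem~\ref{thm:QMrigorous}.

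\emph{Kinematics.} I start from an admissible normalized state, i.e.\ a continuous positive-definite $\Phi$ with $\Phi(0)=1$, and apply Bochner's theorem to obtain the spectral measure $\mu_\Phi$. FR1 acts through $\Pi_\mu$. Lemma~\ref{lem:L1L2} then gives $\Phi_\mu\in L^1\cap L^2$ and $\widehat{\Phi_\mu}=|W_\mu|^2\,d\mu_\Phi$. The key claim is that this measure is absolutely continuous with respect to Lebesgue measure, since the transform of an $L^1$ function is a bounded continuous function. I then define $\psi_\mu$ as a measurable square root of its density. The identity $\Phi_\mu(t)=\int e^{it\alpha}|\psi_\mu(\alpha)|^2\,d\alpha$ follows by Fourier inversion.

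Next I build $\mathcal{H}$. I take the GNS construction of the state $\omega_\mu$ on $\mathcal{A}$, which is unital by FR2, and identify it concretely with the cyclic subspace generated by $\psi_\mu$ in $L^2(\R)$ under the Fourier-space representation $\pi_\mu$. Lemma~\ref{lem:unitary}, which uses FR3, supplies the strongly continuous implementing group $U(t)$. Proposition~\ref{prop:stone} then gives $U(t)=e^{-itH}$, with $H$ acting as multiplication by $\alpha$, up to the sign convention, which I would fix consistently with the phase $e^{it\alpha}$. This yields the stated kinematics.

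\emph{Born rule.} Existence is direct: for a Borel set $E$, the spectral projection $\mathbf{1}_E(H)$ is multiplication by $\mathbf{1}_E(\alpha)$, so $\langle\psi_\mu,\mathbf{1}_E(H)\psi_\mu\rangle=\int_E|\psi_\mu|^2\,d\alpha$. For uniqueness I invoke Lemma~\ref{lem:born}. Any FR4-admissible invariant averaging of the time-translated expectations $\langle\psi_\mu,U(-n)PU(n)\psi_\mu\rangle$ reduces, by Lemma~\ref{lem:inv}, to the constant sequence. Every invariant mean therefore returns $\langle\psi_\mu,P\psi_\mu\rangle$.

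The main obstacle is the uniqueness claim. Invariance alone does not single out the quadratic expression, since any functional of the spectral data is invariant under $U(t)$. My plan is to read ``frequency assignment'' as the long-run average of the coarse-grained characteristic data $\Phi_\mu$, which is the FR4 datum. I would then argue in two steps. First, a frequency functional on $\mathscr{M}$ is determined by its values on the characters $e^{itH}$, because their linear span is weak-$*$ dense in $\mathscr{M}$. Second, these values are fixed to be $\Phi_\mu(t)=\langle\psi_\mu,e^{itH}\psi_\mu\rangle$. Together these force $\mathsf{Pr}(\cdot\mid\psi_\mu)=|\psi_\mu|^2d\alpha$, using uniqueness of the Fourier--Stieltjes transform. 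The finite-ring Parseval identity from Section~E serves as the discrete model: cross terms vanish under averaging, so only squared moduli survive.
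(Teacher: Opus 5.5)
There is a genuine gap, at exactly the step you call the key claim. You derive absolute continuity of $|W_\mu|^2\,d\mu_\Phi$ from $\Phi_\mu\in L^1$, quoting Lemma~\ref{lem:L1L2}, but that part of the lemma is false. Young's inequality gives only $\|w_\mu*\Phi\|_\infty\le\|w_\mu\|_1\|\Phi\|_\infty$, not an $L^1$ bound, and averaging over cells of width $\mu$ cannot create decay at infinity. Dually, multiplying $\mu_\Phi$ by the bounded function $|W_\mu|^2$ band-limits it but cannot remove atoms or singular parts.

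Concretely, $\Phi\equiv 1$ is continuous, positive-definite, normalized and already cellwise constant, so $\Pi_\mu\Phi=\Phi$. In fact a continuous function constant on every cell is constant, so this is the only such $\Phi$ satisfying FR1 literally. Its Bochner measure is $\delta_0=|W_\mu|^2\delta_0$. For any $\psi\in L^2(\R)$, $t\mapsto\int e^{it\alpha}|\psi(\alpha)|^2\,d\alpha$ is the Fourier transform of an $L^1$ function and tends to $0$ as $|t|\to\infty$ by Riemann--Lebesgue. So no $\psi_\mu$ can represent $\Phi_\mu\equiv 1$. The same failure occurs for $\Phi(t)=e^{it\alpha_0}$, whose cell averages have constant modulus $|\sin(\alpha_0\mu/2)/(\alpha_0\mu/2)|$.

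A separate defect: convolving with $w_\mu$ multiplies by the complex $W_\mu$, so $w_\mu*\Phi$ is not even positive-definite. You need $w_\mu*\tilde w_\mu$ to obtain $|W_\mu|^2$ and a positive density whose square root you can take.

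You inherit all of this from the paper's own argument, but no better argument can close it. The kinematic clause needs an extra hypothesis such as $\mu_\Phi\ll d\alpha$. Alternatively, drop $\psi_\mu\in L^2(\R,d\alpha)$ and work on $L^2(\R,\nu)$, $\nu:=|W_\mu|^2\,d\mu_\Phi$, with cyclic vector $\mathbf 1$. There $U(t)$, Stone's theorem, the spectral projections and $\mathsf{Pr}(E)=\nu(E)$ all go through unchanged. Genuinely smoothing the spectral side would require multiplying $\Phi$ by an integrable positive-definite window in $t$, i.e.\ convolving $\mu_\Phi$, not cell averaging.

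On the Born rule, you are right that Lemma~\ref{lem:born} is circular. It presupposes that the frequency at each time is $\langle\psi_\mu,P\psi_\mu\rangle$ and then observes that a constant sequence averages to itself. Pinning the frequency functional down by its values on $e^{itH}$ and invoking uniqueness of the Fourier--Stieltjes transform is a real improvement over the paper, but two points must be made explicit.

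First, weak-$*$ density of $\mathrm{span}\{e^{itH}\}$ in $\mathscr M$ determines only \emph{normal} functionals, and invariant means of the FR4 type are typically not normal. Take $U$ relatively open and dense in $\mathrm{supp}\,\nu$ with $\nu(U)<1$. Normal states concentrated on $U$ can converge to $\nu$ against bounded continuous functions. Any weak-$*$ cluster point then agrees with $\omega_\mu$ on every $e^{itH}$ yet assigns $\mathbf 1_U(H)$ the value $1$. So you must require countable additivity of the frequency assignment.

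Second, once $\mathcal F(e^{itH})=\Phi_\mu(t)$ is imposed, the conclusion is just Bochner uniqueness, and $|\psi_\mu|^2$ is by construction the spectral density. The Born formula therefore holds by definition, with FR4 playing no role, and it should be presented that way rather than as a consequence of typicality.
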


\begin{remark}[On absolute continuity]
We do not claim that the \emph{bare} Bochner measure $d\mu_\Phi$ is absolutely continuous.
Rather, FR1 yields a canonical coarse-grained spectral measure $d\mu_\Phi^{(\mu)}=|W_\mu|^2 d\mu_\Phi$ which \emph{is} absolutely continuous with $L^2$ density $|\psi_\mu|^2$.
All operational predictions at resolution $\mu$ are captured by $\psi_\mu$, which is exactly the content of finite resolution.
\end{remark}

\begin{remark}[References]
Bochner representation and positive-definite functions: Bochner~\cite{bochner}, Rudin~\cite{rudin}.
GNS construction and unitary implementers: Kadison--Ringrose~\cite{kadison_ringrose}, Vols.~I--II.
Stone’s theorem and self-adjoint generators: Reed--Simon~\cite{reed_simon1,reed_simon2}.
Invariant means/ergodic averages used in FR4: Greenleaf~\cite{greenleaf}, Walters~\cite{walters}.
\end{remark}

\section{Statistical Regime}
\label{sec:stat}

We now treat the second admissible channel of the constraint representation of $(\R,+)$:
the statistical regime. Here the structural features FR1--FR4 generate the Kolmogorov framework of
probability, together with convolution semigroups and the law of large numbers.

\subsection*{Positive-definite functionals and probability measures}
As before, admissible states correspond to continuous positive-definite functions $\Phi:\R\to\C$ with $\Phi(0)=1$.
By Bochner’s theorem (Rudin~\cite{rudin}), such a $\Phi$ is the Fourier transform of a probability measure $\mathbb{P}$:
\[
\Phi(t)=\int_{\R} e^{itx}\,\mathbb{P}(dx).
\]
FR1 (finite resolution) ensures that $\mathbb{P}$ is absolutely continuous at coarse scale $\mu$: there exists $p_\mu\in L^1(\R)$, piecewise constant on cells $C_k$, such that
\[
\Phi_\mu(t)=\int_{\R} e^{itx}\,p_\mu(x)\,dx, \qquad p_\mu(x)\ge 0, \quad \int_{\R} p_\mu(x)\,dx=1.
\]

\subsection*{Convolution semigroup property (FR2--FR3)}
Let $(\mathbb{P}_t)_{t\ge 0}$ denote the family of distributions associated with time-$t$ increments.
FR2 (constants as multipliers) fixes normalization, while FR3 (phase consistency) enforces additivity of increments.
Thus for $t,s\ge 0$,
\[
\Phi_{t+s}(\alpha)=\Phi_t(\alpha)\Phi_s(\alpha),\qquad
\mathbb{P}_{t+s}=\mathbb{P}_t * \mathbb{P}_s.
\]

\begin{lemma}[Convolution semigroup law]
\label{lem:convsemigroup}
The family $(\mathbb{P}_t)_{t\ge 0}$ is a weakly continuous convolution semigroup on $\R$,
i.e.\ $\mathbb{P}_0=\delta_0$, $\mathbb{P}_{t+s}=\mathbb{P}_t*\mathbb{P}_s$, and $\mathbb{P}_t\Rightarrow \delta_0$ as $t\downarrow 0$.
\end{lemma}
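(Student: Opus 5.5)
The plan is to work on the Fourier side and then transfer everything to the measures through Bochner's theorem and L\'evy's continuity theorem. For each $t\ge 0$, $\Phi_t$ is the characteristic function of the probability measure $\mathbb{P}_t$, so $\Phi_t$ is continuous and positive-definite with $\Phi_t(0)=1$. The multiplicative law $\Phi_{t+s}=\Phi_t\Phi_s$ displayed just before the statement (FR2--FR3) is the only structural input. Uniqueness of Fourier--Stieltjes transforms (Rudin~\cite{rudin}) turns any identity between characteristic functions into the corresponding identity between probability measures.

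The argument runs in three steps. First, take the identity at $t=0$: setting $s=0$ gives $\Phi_t=\Phi_t\Phi_0$. Evaluating at $t=0$ gives $\Phi_0(\alpha)=\Phi_0(\alpha)^2$, so $\Phi_0(\alpha)\in\{0,1\}$ for each $\alpha$. Since $\Phi_0$ is continuous with $\Phi_0(0)=1$, connectedness of $\R$ forces $\Phi_0\equiv 1$, and hence $\mathbb{P}_0=\delta_0$. Second, for the semigroup law, the transform of $\mathbb{P}_t*\mathbb{P}_s$ is $\Phi_t\Phi_s=\Phi_{t+s}$, which is the transform of $\mathbb{P}_{t+s}$. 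Uniqueness then gives $\mathbb{P}_{t+s}=\mathbb{P}_t*\mathbb{P}_s$. Third, for weak continuity, by L\'evy's continuity theorem it suffices to show $\Phi_t(\alpha)\to 1$ pointwise as $t\downarrow 0$. Continuity at general $t_0$ then follows from the semigroup law, since $\Phi_{t_0+h}-\Phi_{t_0}=\Phi_{t_0}(\Phi_h-1)$ and $|\Phi_{t_0}|\le 1$.

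The main obstacle is the third step. Multiplicativity alone does not give continuity in $t$: a pathological measurable-free solution of $f(t+s)=f(t)f(s)$ could in principle exist. I would supply the missing regularity from FR1 and FR4. Under FR1 the time parameter is itself coarse-grained, and the paper's lemmas implicitly treat $t\mapsto\Phi_t(\alpha)$ as measurable; Lemma~\ref{lem:L1L2} guarantees boundedness $|\Phi_t|\le 1$. I would therefore argue that for fixed $\alpha$, $t\mapsto\Phi_t(\alpha)$ is a bounded measurable solution of Cauchy's multiplicative equation. A bounded measurable multiplicative function on $[0,\infty)$ that is not identically zero has the form $e^{t\psi(\alpha)}$ with $\operatorname{Re}\psi\le 0$. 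The nonvanishing needed here follows since $\Phi_t=\Phi_{t/n}^n$ and $\Phi_{t/n}$ is near $1$ for small arguments by uniform continuity in $\alpha$. This exponential form is continuous in $t$ and equals $1$ at $t=0$, giving $\Phi_t(\alpha)\to 1$. I expect to state measurability in $t$ explicitly as part of what FR1/FR4 admissibility means, since it is the one point where the displayed hypotheses do not formally suffice.

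Combining the three steps with L\'evy's theorem yields $\mathbb{P}_t\Rightarrow\delta_0$, which completes the proof that $(\mathbb{P}_t)_{t\ge 0}$ is a weakly continuous convolution semigroup.
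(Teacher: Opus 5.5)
Your argument is correct once the measurability hypothesis you flag is granted. It departs from the paper's proof at exactly the step that carries the content.

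The paper's proof is three lines: $\Phi_{t+s}=\Phi_t\Phi_s$ from FR2--FR3, ``continuity at $t=0$ comes from continuity of $\Phi$'', and everything else deferred to Sato. You prove $\mathbb{P}_0=\delta_0$ (idempotence plus connectedness) and $\mathbb{P}_{t+s}=\mathbb{P}_t*\mathbb{P}_s$ (Fourier uniqueness) explicitly. You also rightly refuse to extract continuity in the time parameter from continuity of $\Phi$ in the dual variable, which is all the paper's sentence can mean.

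Your diagnosis is sharp. Take $a:\R\to\R$ additive but non-measurable and $\mathbb{P}_t=\delta_{a(t)}$. Then $\Phi_t(\alpha)=e^{ia(t)\alpha}$ meets every displayed hypothesis, yet $a(t)\not\to 0$ as $t\downarrow 0$, so the lemma fails. The paper's route therefore tacitly assumes continuity of $t\mapsto\Phi_t(\alpha)$, after which L\'evy's theorem finishes immediately. Yours needs only measurability, at the price of the measurable Cauchy equation (linearity of $\log|f|$, automatic continuity of the measurable phase character) and a non-vanishing argument. State measurability as an explicit hypothesis rather than attributing it to FR1/FR4, which constrain the dual variable, not $t$.

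The non-vanishing step needs repair. As written, ``not identically zero'' is the wrong dichotomy: $\mathbf{1}_{\{0\}}$ is bounded, measurable, multiplicative and nonzero. Moreover, uniform continuity of $\Phi_{t/n}$ only gives non-vanishing on a neighbourhood of $\alpha=0$ that may shrink with $n$. Two facts give the correct argument:
\begin{itemize}
\item If $\Phi_{t_0}(\alpha)=0$ for some $t_0>0$, then $\Phi_{t_0/n}(\alpha)^n=0$ for every $n$. This forces $t\mapsto\Phi_t(\alpha)$ to vanish on all of $(0,\infty)$.
\item $\Phi_t$ has no zeros. Indeed $|\Phi_{t/n}|^2=|\Phi_t|^{2/n}\to\mathbf{1}_{\{\Phi_t\neq 0\}}$ pointwise, and this limit equals $1$ near $\alpha=0$. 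By L\'evy's theorem it is a characteristic function, hence continuous, hence identically $1$. This is the standard fact that infinitely divisible characteristic functions never vanish.
\end{itemize}
Alternatively, it suffices to know $\Phi_t\to 1$ on a neighbourhood of $0$ where $\Phi_1\neq 0$. That gives tightness, and any limit point $\nu$ with $\hat\nu=1$ near $0$ must be $\delta_0$.
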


\begin{proof}
The algebraic identity $\Phi_{t+s}=\Phi_t\Phi_s$ follows from FR2--FR3.
Continuity at $t=0$ comes from continuity of $\Phi$. The rest is standard (Sato~\cite{sato}, Ch.~3).
\end{proof}

\subsection*{Lévy--Khintchine representation}
By the Lévy–Khintchine theorem (Lévy~\cite{levy}, Sato~\cite{sato}), there exists a triplet $(a,\sigma^2,\nu)$ with $a\in\R$, $\sigma^2\ge 0$, and a Lévy measure $\nu$ on $\R\setminus\{0\}$ such that
\[
\Phi_t(\alpha) = \exp\!\Big(-t\,\psi(\alpha)\Big),
\]
with
\[
\psi(\alpha) = i a \alpha + \tfrac{1}{2}\sigma^2\alpha^2 + \int_{\R\setminus\{0\}}\!\left(1-e^{i\alpha x}+i\alpha x\mathbf{1}_{|x|<1}\right)\nu(dx).
\]
FR1 implies that $\nu$ has finite moments up to the cell scale $\mu$, so $\int (1\wedge x^2)\,\nu(dx)<\infty$.

\subsection*{Markov \cite{vankampen} semigroup}
Let $(T_t f)(x)=\int f(x+y)\,\mathbb{P}_t(dy)$. Then $(T_t)_{t\ge 0}$ is a Feller semigroup on $C_0(\R)$ with generator
\[
\mathcal{L}f(x)= a f'(x)+ \tfrac{1}{2}\sigma^2 f''(x)
+ \int_{\R\setminus\{0\}} \Big(f(x+y)-f(x)-yf'(x)\mathbf{1}_{|y|<1}\Big)\nu(dy).
\]
This is the Lévy–Khintchine form, valid on $C_c^\infty(\R)$ (Sato~\cite{sato}, Thm.~31.5).

\subsection*{Frequencies and typicality (FR4)}
FR4 requires that empirical frequencies be defined via invariant means (finite rings or Cesàro averages).
Let $X_1,X_2,\dots$ be i.i.d.\ random variables with law $\mathbb{P}$. For an event $E$, define the empirical frequency
\[
\mathrm{freq}_N(E) := \frac{1}{N}\sum_{n=1}^N \mathbf{1}_E(X_n).
\]

\begin{lemma}[Law of large numbers under FR4]
\label{lem:LLN}
For any event $E$, $\mathrm{freq}_N(E)\to \mathbb{P}(E)$ almost surely under Cesàro averages.
Moreover, $\mathbb{P}(E)$ is the \emph{unique} FR4-invariant frequency assignment consistent with the group action.
\end{lemma}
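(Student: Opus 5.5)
The plan splits the lemma into two parts: almost-sure convergence of the empirical frequencies, and uniqueness of the invariant frequency assignment.

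\textbf{Convergence.} Fix an event $E$ and set $Y_n:=\mathbf{1}_E(X_n)$. The $Y_n$ are i.i.d., take values in $\{0,1\}$, and have mean $\mathbb{E}Y_n=\mathbb{P}(E)$ and finite variance $\mathbb{P}(E)(1-\mathbb{P}(E))\le\tfrac14$. The empirical frequency $\mathrm{freq}_N(E)$ is exactly the one-sided Cesàro mean of the sequence $(Y_n)$, which is the averaging scheme FR4 licenses. The strong law of large numbers (Kolmogorov's, or the elementary bounded-variable version via a fourth-moment bound and Borel--Cantelli) then gives $\mathrm{freq}_N(E)\to\mathbb{P}(E)$ almost surely.

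To keep the argument in the paper's shift-invariant language, I would also rephrase this ergodically. Realize the sequence on the product space $(\R^{\mathbb{N}},\mathbb{P}^{\otimes\mathbb{N}})$ with the left shift $\sigma$. Product measure is $\sigma$-invariant and mixing, hence ergodic. Birkhoff's ergodic theorem (Walters~\cite{walters}, Ch.~1), applied to $f=\mathbf{1}_E\circ\pi_1$, gives the same almost-sure limit $\int f\,d\mathbb{P}^{\otimes\mathbb{N}}=\mathbb{P}(E)$. The two-sided averages over $[-M,M]$ of Section~\ref{sec:finite-resolution} behave identically after extending to $\R^{\Z}$.

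\textbf{Uniqueness.} Let $\mathcal{F}$ be any frequency assignment on events that is FR4-invariant. By this I mean two things: $\mathcal{F}(E)$ is obtained as the value of an invariant mean $m$ on $\ell^\infty$ applied to the sequence $(Y_n(\omega))$, and the result is independent of the typical realization $\omega$. An invariant mean agrees with the ordinary limit on every convergent sequence, since it is shift-invariant, positive and normalized, and so is dominated by $\limsup$ and dominates $\liminf$. By the convergence part, almost every realization gives a convergent sequence with limit $\mathbb{P}(E)$, so $\mathcal{F}(E)=\mathbb{P}(E)$ for every $E$, which is the required uniqueness.

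Consistency with the group action follows from the same computation. For the translation action $T_t$ of Lemma~\ref{lem:convsemigroup}, invariant assignments are pinned to the one-step law $\mathbb{P}$ for each fixed $t$, so no alternative assignment survives.

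\textbf{Main obstacle.} The hard step is making ``the unique FR4-invariant frequency assignment'' precise. Invariant means on $\ell^\infty(\mathbb{N})$ are highly non-unique on non-convergent sequences, so uniqueness can only hold on the almost-sure set where Cesàro limits exist. I would therefore state uniqueness modulo $\mathbb{P}^{\otimes\mathbb{N}}$-null sets, with ergodicity of the shift carrying the argument. If one instead insists on uniqueness among all shift-invariant measures on sequence space, the claim fails without an i.i.d.\ or ergodicity hypothesis, so I would state that hypothesis explicitly.
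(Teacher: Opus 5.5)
Your convergence part is correct and matches the paper: the strong law applied to the i.i.d.\ indicators $Y_n=\mathbf{1}_E(X_n)$, with the Birkhoff version as an equivalent restatement. The uniqueness part fails at the step ``almost every realization gives a convergent sequence with limit $\mathbb{P}(E)$''.

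The sequence you hand to the invariant mean is $(Y_n(\omega))_n$. It is $\{0,1\}$-valued and does not converge at all when $0<\mathbb{P}(E)<1$; only its Cesàro averages $\mathrm{freq}_N(E)(\omega)$ converge. Your $\limsup/\liminf$ sandwich does not extend to Cesàro limits. All shift-invariant means agree on $x\in\ell^\infty$ only when $x$ is almost convergent in Lorentz's sense, i.e.\ when the window averages $\frac{1}{L}\sum_{k=n}^{n+L-1}x_k$ converge uniformly in $n$. A typical i.i.d.\ realization is not almost convergent. By the second Borel--Cantelli lemma on disjoint blocks, almost every $Y(\omega)$ contains runs of $1$'s and of $0$'s of every length. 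If $[n_L,n_L+L)$ is an all-ones run, then $m_\omega(x):=\lim_{\mathcal{U}}\frac{1}{L}\sum_{k=n_L}^{n_L+L-1}x_k$, with $\mathcal{U}$ a non-principal ultrafilter in $L$, is a shift-invariant mean with $m_\omega(Y(\omega))=1$. Combining it with the analogous mean built on runs of $0$'s, every value in $[0,1]$ is attained. So the claim in your ``main obstacle'' paragraph, that uniqueness holds on the almost-sure set where Cesàro limits exist, is exactly what is false. Even granting your hypothesis that $m(Y(\omega))$ is almost surely constant for a fixed $m$, nothing in your argument identifies that constant with $\mathbb{P}(E)$.

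The paper avoids this by applying the fixed-point property of invariant means to a sequence that is genuinely constant in $n$, as in Lemma~\ref{lem:born}. By identical distribution, $n\mapsto\mathbb{P}(X_n\in E)=\mathbb{P}(E)$ is constant, so every invariant mean returns $\mathbb{P}(E)$ and no pathwise statement is needed. If you want to keep your pathwise formulation, there are two repairs. You can apply the mean to the almost surely convergent sequence $(\mathrm{freq}_N(E)(\omega))_N$ instead of to $(Y_n(\omega))_n$. Or you can read FR4 literally as the Cesàro limit of Section~\ref{sec:finite-resolution}. Either way, uniqueness then follows directly from your convergence part.
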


\begin{proof}
By the strong law of large numbers, $\mathrm{freq}_N(E)\to\mathbb{P}(E)$ almost surely (Dellacherie–Meyer~\cite{dellacherie_meyer}).
Uniqueness follows from the fixed-point property of invariant means (Walters~\cite{walters}, Ch.~1).
\end{proof}

\begin{theorem}[Statistical regime from FR1--FR4 ]
\label{thm:Statrigorous}
In the constraint representation of $(\R,+)$:
\begin{description}
\item[(Kolmogorov structure)] FR1 guarantees that admissible states correspond to
absolutely continuous probability measures $p_\mu(x)\,dx$ at resolution $\mu$.

\item[(Convolution semigroups)] FR2--FR3 imply that admissible increments form
a convolution semigroup $(\mathbb{P}_t)$ with L\'evy--Khintchine generator $(a,\sigma^2,\nu)$.

\item[(Frequencies)] FR4 enforces that empirical frequencies converge to
$\mathbb{P}(E)$ and that this assignment is unique among invariant means.
\end{description}

Thus FR1--FR4 activate the full Kolmogorov probability regime from the constraint representation.
\end{theorem}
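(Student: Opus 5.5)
The plan is to assemble Theorem~\ref{thm:Statrigorous} from the three lemmas of this section, one clause at a time, in the same pattern as Theorem~\ref{thm:QMrigorous}.

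\textbf{Kolmogorov structure.} Start from an admissible $\Phi$ and take its Bochner probability measure $\mathbb{P}$. Applying $\Pi_\mu$ and Lemma~\ref{lem:L1L2} gives $\widehat{\Phi_\mu}=|W_\mu|^2\,d\mathbb{P}$, a finite positive measure of total mass $\Phi_\mu(0)$. To make this a density $p_\mu$ that is constant on cells, I will not rely on the lemma's absolute-continuity claim. Instead I will define the cell masses $p_k:=\mathbb{P}(C_k)$ directly in the dual variable and set $p_\mu:=\mu^{-1}\sum_k p_k\mathbf{1}_{C_k}$. Then $p_\mu\ge0$, $\int p_\mu=\sum_k p_k=\mathbb{P}(\R)=1$ by countable additivity, and $p_\mu$ is piecewise constant. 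Its Fourier transform is the coarse-grained characteristic functional $\Phi_\mu$ of the statement. This settles the first clause, since a probability triple $(\R,\mathcal{B},p_\mu\,dx)$ is a Kolmogorov structure.

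\textbf{Convolution semigroups.} FR2--FR3 supply multiplicativity $\Phi_{t+s}=\Phi_t\Phi_s$ and the normalization $\Phi_0\equiv1$. Lemma~\ref{lem:convsemigroup} then shows that $(\mathbb{P}_t)$ is a weakly continuous convolution semigroup. Its $t=1$ member is infinitely divisible, because $\mathbb{P}_1=\mathbb{P}_{1/n}^{*n}$. The Lévy--Khintchine theorem (Sato, Thm.~8.1) therefore produces the triplet $(a,\sigma^2,\nu)$ with $\int(1\wedge x^2)\,\nu(dx)<\infty$, and gives $\Phi_t=e^{-t\psi}$ for all $t\ge0$ by continuity and the semigroup law. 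I will check one point here: the generator form $\mathcal{L}$ on $C_c^\infty$ comes from Sato Thm.~31.5. FR1 is not needed for the integrability of $\nu$, because it is built into the theorem.

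\textbf{Frequencies.} For a fixed event $E$, Lemma~\ref{lem:LLN} gives $\mathrm{freq}_N(E)\to\mathbb{P}(E)$ almost surely by the strong law. Uniqueness goes as follows. Any other FR4-invariant assignment $\mathcal{F}$ must be a shift-invariant mean of the sequence $\mathbf{1}_E(X_n)$. That sequence has Cesàro limit $\mathbb{P}(E)$ almost surely. Every invariant mean that agrees with Cesàro limits on convergent averages must take this value, so $\mathcal{F}(E)=\mathbb{P}(E)$.

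\textbf{Main obstacle.} The hardest step is the link between the two semigroup structures: FR3 is stated as trivial holonomy of transport on cells, but here it is used as multiplicativity in $t$. I would first interpret FR3 as exactness of the multiplicative cocycle $c(t,s)=\Phi_{t+s}/(\Phi_t\Phi_s)$. Trivial holonomy on every closed loop of lattice times $\mu\Z$ should make $c\equiv1$ there, and continuity in $t$ should extend this to all $t,s\ge0$. If the continuity extension fails, I would state the semigroup property on $\mu\Z_{\ge0}$ and embed it via infinite divisibility.
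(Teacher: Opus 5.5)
The genuine gap is in the uniqueness half of the \emph{Frequencies} clause. Your chain ``$\mathcal{F}$ is a shift-invariant mean of $(\mathbf{1}_E(X_n))_n$, this sequence is Ces\`aro-convergent to $\mathbb{P}(E)$, hence $\mathcal{F}(E)=\mathbb{P}(E)$'' needs shift-invariant means to respect Ces\`aro limits, and they do not. By Lorentz's theorem, all invariant means agree on a sequence only when its moving averages $\frac1N\sum_{n=k+1}^{k+N}x_n$ converge uniformly in $k$, which is strictly stronger than Ces\`aro convergence. This fails precisely for your sequences. If $0<\mathbb{P}(E)<1$, almost every realization contains runs of $1$'s of every length (second Borel--Cantelli lemma over disjoint blocks). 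Any weak-$*$ cluster point of the averages over longer and longer such runs is then a shift-invariant mean that assigns the value $1$, and runs of $0$'s likewise give $0$. So, pathwise, invariant means do not single out $\mathbb{P}(E)$. Your final sentence holds only because the needed property has been written into the class of admissible means. The paper's uniqueness rests instead on the fixed-point property of invariant means (Lemma~\ref{lem:LLN}, as spelled out in Lemma~\ref{lem:born}). The mean is applied to a sequence that is constant in $n$, here $n\mapsto\mathbb{P}(X_n\in E)=\mathbb{P}(E)$ by identical distribution, so every mean returns $\mathbb{P}(E)$. You can use that argument. Alternatively, declare FR4-admissible means to be Ces\`aro/finite-ring averages; then uniqueness is definitional and the strong law does all the work.

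Two further repairs are needed. In the Kolmogorov clause, your histogram $p_\mu=\mu^{-1}\sum_k\mathbb{P}(C_k)\mathbf{1}_{C_k}$ is a legitimate cellwise-constant probability density. You are also right not to lean on Lemma~\ref{lem:L1L2}: since $|W_\mu|^2\,d\mathbb{P}\ll\mathbb{P}$, no spectral window can make a singular $\mathbb{P}$ absolutely continuous. However, the Fourier transform of $p_\mu$ is $W_\mu(t)\sum_k\mathbb{P}(C_k)e^{ik\mu t}$. This never equals the functional with spectral measure $|W_\mu|^2\,d\mathbb{P}$ from your first sentence: at $t=0$ the latter is $\int|W_\mu|^2\,d\mathbb{P}<1$ unless $\mathbb{P}=\delta_0$, and in that case it is $\equiv1$ while yours is $W_\mu$. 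Drop that sentence, \emph{define} $\Phi_\mu:=\widehat{p_\mu}$, and present $p_\mu$ as FR1 cell-averaging in the dual variable.

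In the semigroup clause, your main line takes $\Phi_{t+s}=\Phi_t\Phi_s$ from Lemma~\ref{lem:convsemigroup}, as the paper does. Your remark that $\int(1\wedge x^2)\,\nu(dx)<\infty$ is part of the L\'evy--Khintchine theorem rather than a consequence of FR1 is correct. But both proposed remedies for the ``main obstacle'' fail. First, $c(t,s)=\Phi_{t+s}/(\Phi_t\Phi_s)$ is by construction the coboundary of $t\mapsto\Phi_t$ (where $\Phi_t\neq0$), so its exactness is automatic and implies nothing about $c\equiv1$. Second, a lattice semigroup $\mathbb{P}_{n\mu}=\mathbb{P}_\mu^{*n}$ exists for every law, so it carries no infinite divisibility; a Bernoulli law, for example, has no L\'evy--Khintchine triplet. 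The semigroup law for all real $t,s\ge0$, or at least $\mathbb{P}_1=\mathbb{P}_{1/n}^{*n}$ for every $n$, has to be an input, so remove the fallback.
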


\begin{remark}[References]
Bochner’s theorem: Rudin~\cite{rudin}.

Convolution semigroups and Lévy–Khintchine: Lévy~\cite{levy}, Sato~\cite{sato}.

Law of large numbers: Dellacherie–Meyer~\cite{dellacherie_meyer}.
Invariant means: Walters~\cite{walters}, Greenleaf~\cite{greenleaf}.
\end{remark}

\begin{remark}[Connection to Gibbs measures]
Part~II described the statistical regime in Gibbs/ensemble terms.
Here we obtained it from FR1--FR4 using convolution semigroups and the L\'evy--Khintchine formula.
When the semigroup is generated by an energy functional (Hamiltonian), the FR4-invariant averages select precisely the Gibbs measures (partition-function normalization), recovering the thermodynamic picture as a special case.
\end{remark}

\section{Topological Regime as the Limit of Finite Resolution}
\label{sec:topo}

\label{sec:topo_limit}

The constraint representation defined by FR1--FR4 establishes the universal
finite–resolution structure that constrains all admissible representations of
the additive group $(\R,+)$.  
When this structure is applied to the quantum and statistical channels, it
produces two distinct but compatible regimes of description: one based on complex
phases and interference, the other on real-positive probabilities and mixing.
Both remain finite and consistent for any finite resolution~$\mu$.  
In the extreme limit of coarse–graining,
\[
\mu \;\to\; \infty,
\]
these constrained regimes converge to a single, degenerate form of description:
the \emph{topological regime}.

\subsection*{13.1 Algebraic limit $\mu \to \infty$}

At finite $\mu$, admissible objects are cellwise constant on intervals of width
$\mu$.  
As $\mu$ increases, the spectral bandwidth $1/\mu$ decreases, and all
variations finer than~$\mu$ are suppressed.  
In the limit $\mu \to \infty$, the entire domain collapses into one effective
cell; every admissible function becomes constant, and the additive algebra
reduces to its Boolean skeleton,
\[
L^2(\R) \;\longrightarrow\; \C \cdot \mathbf{1}.
\]
Amplitudes can therefore take only the discrete values $\{0,1\}$, indicating
presence or absence of configurations.  
The algebra loses all differential structure but preserves its additive and
multiplicative consistency.

\subsection*{13.2 Convergence of regimes}

Under this degeneration, the quantum and statistical channels converge:

\begin{itemize}
\item \textbf{Quantum channel.}  
The wavefunction $\psi_\mu(x)$ becomes constant on the whole domain,  
$\psi_\mu(x)\to c$.  
All relative phases vanish, interference disappears, and the Born rule  
$\mathsf{Pr}(E|\psi_\mu)=\int_E|\psi_\mu|^2dx$  
collapses to the Boolean truth value  
$\mathsf{Pr}(E)=1$ if $E$ includes the entire domain, $0$ otherwise.  
Unitary evolution becomes trivial, $U(t)=\mathbf{1}$.

\item \textbf{Statistical channel.}  
Probability densities $p_\mu(x)$ flatten to uniform distributions,
$p_\mu(x)=1/V$ on the total domain, and the convolution semigroup degenerates
to the identity.  
Random variables lose variance and acquire fixed deterministic values.
\end{itemize}

Thus, both channels collapse to the same structure: a deterministic algebra of
Boolean observables.  
All interference and stochasticity disappear; only the invariant relations
between admissible configurations survive.

\subsection*{13.3 Physical interpretation}

Physically, this limit describes systems whose observable structure is
\emph{resolution–saturated}: any further coarse–graining leaves the outcomes
unchanged.  
Examples include:

\begin{itemize}
\item topologically protected phases of matter (quantum Hall states, Chern
insulators) where local perturbations do not change global observables;
\item deterministic cellular automata or digital lattice systems observed at
their maximal coarse resolution;
\item low–temperature or zero–entropy limits of statistical ensembles, where
probabilities collapse to indicator values.
\end{itemize}

In all such cases, the effective observables reduce to invariant topological or
combinatorial quantities---connectivity, winding number, parity, or presence of
defects---rather than continuous amplitudes.  
The topological regime therefore represents the ultimate coarse–grained limit of
the finite–resolution algebra: a fully deterministic but still additive
description, in which information is stored only in global invariants and
boundary relations.

\subsection*{13.4 Summary}

The topological regime is the common asymptotic limit of the quantum and
statistical regimes under the finite–resolution constraints FR1--FR4.  
It is reached when $\mu\!\to\!\infty$, where amplitudes, probabilities, and
phases all saturate to Boolean values $\{0,1\}$.  
This regime describes deterministic or topologically invariant systems whose
behavior is insensitive to local perturbations and whose observables encode only
the global structure of admissible configurations.  
It is therefore not an independent representation added to the theory but the
shared endpoint of all finite–resolution channels within the additive algebra.

\subsection*{What FR1--FR4 actually do}
FR1 (finite resolution) prevents ultraviolet pathologies and yields coarse absolute continuity on the dual; 
FR2 (constants as multipliers) provides a unital algebraic normalization; 
FR3 (phase consistency) removes projective anomalies so dynamics are genuinely implemented; 
FR4 (typicality) pins down operational frequencies via invariant means. 
Together they define the \emph{constraint representation}---the structural layer from which all regimes inherit their admissibility conditions, rather than external axioms imposed afterward.

\subsection*{Links to KMS/Gibbs\cite{ruelle} and thermodynamics}
Although the constructive proofs rely on harmonic analysis and probability 
(Bochner, GNS, Stone; L\'evy--Khintchine; Choquet \cite{choquet}), 
the equilibrium terminology of statistical mechanics is naturally recovered: 
invariant states selected by FR4 correspond to KMS/Gibbs\cite{ruelle} states whenever a Hamiltonian flow is present.  
In this framework, the \emph{topological regime} appears as the \emph{extremal coarse-resolution limit} 
of the FR1--FR4-constrained channels, where amplitudes and probabilities saturate to Boolean values 
and only invariant (Dirac-type) configurations remain.

\subsection*{Scope and limits}
The present formulation is explicitly resolution-aware: 
absolute continuity and $L^2$ wavefunctions are guaranteed only \emph{after} coarse-graining by FR1. 
The frequency law relies on FR4’s invariant averaging; 
no ergodicity is assumed, but uniqueness holds within FR4-admissible means. 
Unbounded generators arise naturally through Stone’s theorem, 
with domains controlled by the coarse regularity scale~$\mu$.

\paragraph{Structural comparison with textbook regimes.}
The regimes obtained from the constraint representation share the algebraic form of standard 
quantum, statistical, and topological theories, yet differ in essential structural aspects. 
They are intrinsically \emph{finite-resolution realizations}: 
each regime preserves the algebraic symmetries and hierarchies of its textbook counterpart, 
but within an admissible subspace where states, measures, and dynamics are regularized by the finite granularity~$\mu$. 
In this sense, the \emph{additivity-symmetry regimes} $(\mathsf{Q}_\mu,\mathsf{S}_\mu)$ are operationally finite versions of textbook quantum mechanics and Kolmogorov probability, coinciding with the ideal continuum limit only as $\mu \to 0$, while their common coarse–resolution limit as $\mu \to \infty$ is the \emph{topological regime} $\mathsf{T}$. The following table summarizes the main structural distinctions.

\begin{table}[t]
\centering
\caption{Structural comparison between textbook continuum theories and their finite–resolution realizations}
\begin{tabular}{@{}p{2.8cm}p{0.70\linewidth}@{}}
\toprule
\textbf{Property} & \textbf{Textbook QM} \\
\midrule
Resolution & Infinite, continuum limit \\
State space & Full $L^2(\mathbb{R})$ Hilbert space \\
Measure & $\sigma$-additive, exact \\
Singular states & Admissible as distributions ($\delta$, plane waves) \\
Dynamics & Continuous unitary groups \\
Interpretation & Ideal continuum theory \\
\bottomrule
\end{tabular}
\label{tab:struct}
\end{table}

\section{Probes and Regime Activation}
\label{sec:probes}

The additive group $(\R,+)$ admits a continuum of one–dimensional irreducible
representations $\chi_\alpha(t)=e^{i\alpha t}$ and their real or degenerate
variants.  In addition to these, there exists a distinguished
\emph{constraint representation} endowed with four internal structural
features (FR1--FR4): finite resolution, constants as multipliers, phase
consistency, and typicality.
This representation is not irreducible—it is a structured combination of
admissible functionals that respects the same additive symmetry while
satisfying the four constraints.
It is this special representation that \emph{activates} the three physical
regimes derived in the Activation Theorem.

Algebraically, all three regimes coexist within the constraint representation;
operationally, which regime becomes manifest depends on how the system is
tested.  The testing interactions are called \emph{probes}.

\paragraph{Resolution and regime selection.}
The parameter~$\mu$ quantifies the finite resolution associated with a probe: it defines 
the smallest scale at which variations in additive quantities can be operationally 
distinguished. However, the \emph{regime} observed in an experiment does not depend on 
the numerical value of~$\mu$ itself but on the \emph{type of coupling} implemented by 
the probe. Phase-sensitive probes couple coherently to additive phases and thus reveal 
the quantum regime; phase-insensitive probes couple only to magnitudes and reveal the 
statistical regime; and extremal probes, which detect only the presence or absence of 
admissible configurations, yield the topological regime. Within each regime, the finite 
resolution~$\mu$ sets the operational granularity of all observables, but the underlying 
algebraic structure governed by FR1--FR4 remains the same.
Interpretive note.
The parameter $\mu$ is a free coarse-graining scale within the algebra: the mathematical framework does not depend on any fixed value. In physical interpretation, however, $\mu$ acquires meaning from the resolution of the probe—it is the scale at which the probe can distinguish variations in the additive variables. The algebraic construction itself is valid for all $\mu>0$; different experimental setups simply instantiate one effective $\mu$ through their resolution, while the probe class (phase-sensitive, phase-blind, or extremal) determines which regime of the same additive structure becomes operationally accessible.
A probe is any admissible map that couples to the constraint representation and
extracts a measurable response consistent with FR1--FR4.
Different classes of probes reveal different structural aspects of the same algebraic
substrate:

\begin{itemize}
\item \textbf{Quantum probes} couple to the complex phase structure produced by FR3.
Their action reveals interference and phase correlations, giving rise to the Hilbert-space
formulation with unitary evolution and the Born frequency law.
\item \textbf{Statistical probes} couple only to the additive and positive structure
of admissible states, ignoring relative phases.  They recover the Kolmogorov regime
of probability densities and convolution semigroups.
\item \textbf{Topological probes} couple exclusively to extremal elements, detecting
the presence or absence of a configuration.  They yield deterministic $0/1$ responses.
\end{itemize}

Probes therefore act as \emph{regime selectors}.  The underlying constraint
representation does not change, but the operational access to it does.
A single physical system can thus appear quantum, statistical, or topological
depending on the type of probe applied.  This interpretative layer connects the
algebraic Activation Theorem with experimental practice without invoking any
external ontology.
\subsection{Experimental outlook}
Different probes select different admissible regimes:

The classification of probes in Section~\ref{sec:probes} provides an algebraic
taxonomy of how physical systems couple to the constraint representation.
Each probe type corresponds to a specific operational realization:
quantum probes to phase-sensitive interference experiments, statistical
probes to ensemble or diffusion setups, and topological probes to
boundary or quantized-phase phenomena. 
The following examples illustrate these classes in real experimental contexts.

\begin{itemize}
\item \textbf{Quantum interference.} Double-slit and matter-wave interferometers probe the unitary regime.
\item \textbf{Thermal diffusion.} Brownian motion and Euclidean lattice simulations probe the real regime.
\item \textbf{Topological interferometry.} Ring interferometers with flux defects probe compactified characters.
\item \textbf{Geometric phases.} Adiabatic cycles in spin systems reveal Berry-phase cocycles.
\item \textbf{Constraint-limited setups.} Finite-resolution apparatus enforce admissibility, testing projector representations.
\end{itemize}
\paragraph{Mesoscopic systems.}  
Quantum dots and SQUID interferometers show controlled transitions from coherent (unitary) to dissipative (real) regimes.  

\paragraph{Cold atoms.}  
Interferometry with ultracold atoms at variable temperature provides direct access to crossover regimes, interpolating between quantum coherence and thermal diffusion.  

\paragraph{Photonic cavities.}  
High-$Q$ resonators and leaky cavities allow experimental control of decoherence, 
mapping the transition from the unitary quantum regime ($\mathsf Q_\mu$) to the 
topological limit ($\mathsf T$) where coherence is fully lost.

The novelty here is unification: recasting these familiar instances as inequivalent representations of a single symmetry and proposing probe-based criteria for their empirical selection.
\vspace{6pt} 
\par 

\section{Relation to prior work}
Additivity of the action and the oscillatory path integral trace to Feynman \cite{FeynmanHibbs}. The classification of continuous characters of $(\mathbb{R},+)$ as exponentials is classical (Pontryagin duality / Cauchy functional equation). Euclidean/thermal weights arise via Wick rotation \cite{ZinnJustin}. Compactified/projective phases underlie the Aharonov--Bohm effect \cite{AharonovBohm}, Berry phases \cite{BerryPhase}, and anomalies \cite{WittenAnomalies}.  

The novelty here is unification: recasting these familiar instances as inequivalent representations of a single symmetry and proposing probe-based criteria for their empirical selection.
\section{Operational Restatement and Relation to Quantum Field Theory}
\label{sec:renormalization}

\label{sec:operational-qft}

The finite–resolution algebra defined by FR1–FR4 provides a complete foundation
for the operational content of quantum mechanics and its extensions.
At this stage, the familiar formalism of states, operators, and field dynamics
can be reinterpreted as an \emph{emergent operational layer} of the same additive
symmetry structure, rather than as an independent postulate.

\subsection*{Algebraic formulation}

Let $A=C^*(\mathbb{R})\simeq C_0(\widehat{\mathbb{R}})$ be the $C^*$–algebra
generated by the additive group $(\R,+)$, equipped with the finite–resolution
features FR1–FR4.
Admissible elements of $A$ correspond to band–limited functionals or observables
that remain well-defined at scale~$\mu$; admissible states are positive
functionals $\omega_\mu:A\to\C$ respecting the same structure.

In this setting, the usual Hilbert–space formalism is not an assumption but a
realization guaranteed by FR1–FR3:
Bochner’s theorem and the GNS construction associate to each admissible state
a wavefunction representative $\psi_\mu\in L^2(\R)$ and a unital representation
$\pi_\mu$ on $\Hilb_\mu=L^2_\mu(\R)$.
FR3 ensures that time translations are implemented by a genuine unitary group
$U_\mu(t)=e^{-itH_\mu}$, while FR4 fixes invariant frequencies as
Born–rule expectations,
\[
p_\mu(E)=\langle \psi_\mu,\,\mathbf{1}_E(H_\mu)\,\psi_\mu\rangle.
\]
Thus, the operational structure of quantum mechanics arises directly from
FR1–FR4, with no appeal to external measurement postulates.

\subsection*{Open dynamics and decoherence}

Coupling to an external probe or environment is represented by a completely
positive (CP) map $\mathcal{I}_\mu:A\to A$ preserving the FR structure.
The generalized Born rule reads
\[
p_\mu(\text{outcome})=\omega_\mu(\mathcal{I}_\mu^\dagger(E)),
\qquad 0\le E\le 1,
\]
and an influence functional $\Phi[\gamma]\ge0$ modifies amplitudes as
\[
\omega_{\mu,\Phi}(W[\gamma])
   =\exp\!\left(\frac{i}{\hbar}S[\gamma]-\Phi[\gamma]\right),
\]
describing decoherence as CP evolution within the same additive algebra.
All open–system behavior is therefore contained within the constraint
representation itself: decoherence corresponds to FR3–consistent damping of
phases under FR4–invariant averaging.

\subsection*{Connection to field theory}

Extending from $(\R,+)$ to $(\R^{n+1},+)$, the same algebraic mechanism
generates field–like structures.
FR1 band-limits the spectral variables $\alpha\in\R^{n+1}$,
FR2 preserves normalization, FR3 guarantees exact cocycles (unitary covariance),
and FR4 fixes invariant averages.
The resulting representations coincide with those used in algebraic quantum
field theory (AQFT):
observables form local $C^*$–subalgebras, and admissible states are positive
functionals satisfying the same four structural features.
However, the finite–resolution framework removes the need for any additional
quantization step: field behavior arises automatically as a representation of
the additive group, while FR1 provides intrinsic regularization that replaces
renormalization \cite{ZinnJustin}.

The same $C^*$–algebra underlies all cases; what differs is the operational
window defined by the probe’s FR sensitivity.
In this way, the finite–resolution additive algebra unifies the mathematical
apparatus of quantum mechanics, statistical physics, and field theory within
a single operationally grounded framework.
The phase-damping mechanisms described above (FR3-consistent decoherence) 
admit a dual realization when all phase sensitivity is lost.  
This defines the additive algebra’s \emph{real} or \emph{statistical} channel, 
whose operational formulation parallels the quantum one.

\section{Statistical mechanics without idealizations: an operational restatement}

We regard statistical mechanics as the phase-insensitive realization of additivity symmetry, 
formulated without continuum ensembles, infinite sampling, or idealized randomness.  
The regime $\mathsf{S}_\mu$ represents the additive group through positive real weights
on admissible states satisfying FR1--FR4.

\paragraph{States and observables.}
Let $A = C^*(\R)$ and $W[\gamma]=\exp(S[\gamma]X)$.  
A phase-insensitive probe selects the real functional 
\[
\omega_\Theta(f)=f(-1/\Theta),
\]
so that
\[
\omega_\Theta(W[\gamma]) = e^{-S[\gamma]/\Theta}.
\]
Admissible observables are positive operators or functions $0\le E\le 1$ in the same algebra, 
and operations are positive trace-preserving (Markov) maps $\mathcal{T}$.  
The probability of observing outcome $E$ under $\mathcal{T}$ is
\begin{equation}
p(\text{outcome}) = \omega_\Theta(\mathcal{T}^\dagger(E)).
\end{equation}
Finite resolution (FR1) restricts $E$ and $\mathcal{T}$ to coarse cells of size $\mu$, 
and FR2 ensures normalization through constants as multipliers.

\paragraph{Open probes and dissipation.}
Coupling to an environment is represented by a positive influence functional 
$\Psi[\gamma]\ge0$,
\begin{equation}
\omega_{\Theta,\Psi}(W[\gamma]) = 
\exp\!\left(-\frac{1}{\Theta}S[\gamma]-\Psi[\gamma]\right),
\end{equation}
which captures diffusion and irreversible relaxation.
This is the statistical counterpart of decoherence in the quantum regime:
here the same algebraic structure yields completely positive stochastic (Markov) evolution.

\paragraph{Constraint representation and typicality.}
When the constraint representation $\mathsf{C}_\mu$ enforces FR1--FR4,
the continuous ensemble averages of textbook theory are replaced by 
invariant Cesàro or Følner means on the coarse lattice.
FR4 guarantees that long-run frequencies coincide with these invariant means,
yielding the law of large numbers as an internal property of the algebra rather than an external assumption.

\paragraph{Interpretation.}
The operational content of statistical mechanics thus reduces to 
phase-blind probing of the additive algebra under finite resolution.
States are normalized positive functionals, observables are coarse events,
and dynamics are positive semigroups consistent with FR1--FR4.  
In the limit $\mu\!\to\!0$, this reproduces standard Kolmogorov probability; 
in the opposite limit $\mu\!\to\!\infty$, it converges to the deterministic topological regime.

\section{Extension to \texorpdfstring{$(\R^{n+1},+)$}{the real (n+1)-dimensional additive group}}
\label{sec:extension-Rn1}

\paragraph{Structural necessity of the extension.}
The additive group $(\R,+)$ suffices to generate the three fundamental descriptive regimes---quantum, statistical, and topological---each corresponding to a distinct realization of one-dimensional additivity symmetry.  
However, a single additive generator cannot represent correlations among multiple independent additive quantities: it composes one variable at a time.  
To describe systems where several additive quantities coexist and interact under the same rule of composition, the algebra must be extended to include multiple, mutually independent additive directions.  
The minimal such extension preserving abelianity and local compactness is
\[
(\R^{n+1},+)\;\cong\;(\R,+)^{n+1}.
\]
This step is not motivated by external physical assumptions but follows from the internal closure of the additive algebra itself: additivity must remain valid when several independent additive quantities are combined.

When the finite–resolution representation (FR1--FR4) is applied to $(\R^{n+1},+)$, one additive coordinate naturally continues to serve as the parameter indexing composition---the ``evolution'' or regime direction already analyzed in the one–dimensional case---while the remaining $n$ coordinates describe admissible variations of additive variables themselves.  
Operationally, these $n$ additional directions give rise to what appear as \emph{fields}: continuous families of additive quantities correlated under the same algebraic law.  
Thus the emergence of fields is not a new postulate but the structural consequence of extending additivity symmetry to its multi–dimensional form.
\paragraph{From regimes to fields.}
The one–dimensional additive group $(\R,+)$ already activates the three fundamental regimes
of description—quantum $(\mathsf Q_\mu)$, statistical $(\mathsf S_\mu)$, and topological $(\mathsf T)$—through
the structural features FR1–FR4 of the constraint representation.
Extending additivity to $(\R^{n+1},+)$ does not introduce new regimes but provides a
multidimensional representation space on which these regimes act as
\emph{fields}.
Each regime now describes the finite–resolution realization of a different type of
field equation:
$\mathsf Q_{\boldsymbol\mu}$ yields quantum fields with coherent phase propagation,
$\mathsf S_{\boldsymbol\mu}$ yields statistical or diffusive fields,
and $\mathsf T$ describes their deterministic or extremal limit.
The extension therefore corresponds to lifting the regimes from one–parameter histories
to $(n{+}1)$–dimensional additive domains, allowing localized correlations and
interactions among components.

For clarity, we restate the construction in this general form, keeping the same four
structural features FR1–FR4 but applied componentwise to
$G=(\R^{n+1},+)$.

Let $G=(\R^{n+1},+)$ denote the spacetime translation group with coordinates
$x=(t,\mathbf{x})\in\R\times\R^n$. The unitary characters are
$\chi_\alpha(x)=e^{i\alpha\cdot x}$ with $\alpha\in \widehat G\simeq \R^{n+1}$.

\paragraph{Finite-resolution structure (FR1).}
Fix a resolution tensor $\boldsymbol{\mu}=(\mu_0,\mu_1,\dots,\mu_n)$ with $\mu_j>0$,
and let $\{C_k\}_{k\in\Z^{n+1}}$ be the partition of $\R^{n+1}$ into adjacent
half-open hyperrectangles of side-lengths $\boldsymbol{\mu}$. Write
$\Pi_{\boldsymbol{\mu}}$ for cellwise averaging. FR1 requires admissible objects
to be constant on cells: $f=\Pi_{\boldsymbol{\mu}} f$. Equivalently, in dual
variables FR1 multiplies by the window
\[
W_{\boldsymbol{\mu}}(\alpha)
=\prod_{j=0}^n
e^{i\alpha_j\mu_j/2}
\frac{\sin(\alpha_j\mu_j/2)}{\alpha_j\mu_j/2}
\]
and hence band-limits $f$ to $|\alpha_j|\lesssim \mu_j^{-1}$.

\paragraph{Constants and phases (FR2--FR3).}
FR2 embeds scalars as global multipliers. FR3 requires trivial holonomy of the
multiplicative cocycle along any closed lattice loop in $\Z^{n+1}$ (the cell graph),
which is equivalent to lifting projective actions of $G$ to genuine strongly
continuous unitary representations.

\paragraph{Typicality (FR4).}
Frequencies are defined by $G$-invariant averages over Følner sets
(cubes $Q_L=[-L,L]^{n+1}$) or finite rings compatible with the cell partition:
\[
\mathcal{T}_L(F):=\frac{1}{|Q_L|}\int_{Q_L} F(x)\,dx \quad\text{or}\quad
\frac{1}{|I_L|}\sum_{k\in I_L} F(k),
\]
and limits are taken along $L\to\infty$ whenever they exist. Amenability of
$\R^{n+1}$ guarantees the existence of invariant means.

\subsection{Quantum regime on \texorpdfstring{$(\R^{n+1},+)$}{the real (n+1)-dimensional additive group}}

\paragraph{Bochner representation and coarse spectral density.}
Let $\Phi:G\to\C$ be continuous positive-definite with $\Phi(0)=1$. By Bochner,
there exists a probability measure $\mu_\Phi$ on $\R^{n+1}$ such that
$\Phi(x)=\int e^{i\alpha\cdot x}\,d\mu_\Phi(\alpha)$. FR1 yields the coarse-grained
density
\(
d\mu_\Phi^{(\boldsymbol{\mu})}(\alpha)=|W_{\boldsymbol{\mu}}(\alpha)|^2\,d\mu_\Phi(\alpha)
=|\psi_{\boldsymbol{\mu}}(\alpha)|^2\,d\alpha,
\)
with $\psi_{\boldsymbol{\mu}}\in L^2(\R^{n+1})$.

\paragraph{GNS + multi-parameter Stone.}
FR3 lifts the translation action to a genuine strongly continuous unitary
representation $U(x)$ on the GNS Hilbert space, and the multi-parameter Stone
theorem provides a commuting $(n{+}1)$-tuple of self-adjoint generators
$H_0,\dots,H_n$ with
\(
U(x)=\exp\!\big(i\sum_{j=0}^n x_j H_j\big),\qquad [H_j,H_\ell]=0.
\)
In the spectral picture, $(U(x)\psi_{\boldsymbol{\mu}})(\alpha)=e^{i\alpha\cdot x}\psi_{\boldsymbol{\mu}}(\alpha)$
and $H_j$ acts as multiplication by $\alpha_j$.

\paragraph{Born frequencies from FR4.}
For any Borel $E\subset\R^{n+1}$, projections $P_E=\mathbf{1}_E(H_0,\dots,H_n)$
satisfy
\[
\Pr(E\,|\,\psi_{\boldsymbol{\mu}})=\langle \psi_{\boldsymbol{\mu}},P_E\psi_{\boldsymbol{\mu}}\rangle
=\int_E |\psi_{\boldsymbol{\mu}}(\alpha)|^2\,d\alpha,
\]
and FR4 enforces these as the unique invariant frequencies under $G$-averages.

\begin{theorem}[Quantum regime on $(\R^{n+1},+)$ from FR1--FR4]
\label{thm:Q-Rn1}
Under FR1--FR4 the constraint representation activates a finite-resolution quantum
regime $\mathsf{Q}_{\boldsymbol{\mu}}$ with (i) Hilbert space $L^2$ wavefunctions
$\psi_{\boldsymbol{\mu}}$ in spectral variables, (ii) a strongly continuous unitary
representation $U(x)$ with commuting self-adjoint generators $(H_j)_{j=0}^n$,
and (iii) FR4-Born frequencies for spectral events. In the formal limit
$\boldsymbol{\mu}\to 0$, $\mathsf{Q}_{\boldsymbol{\mu}}$ approaches textbook QM.
\end{theorem}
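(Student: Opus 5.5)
The plan is to repeat the one-dimensional argument of Theorem~\ref{thm:QMrigorous} componentwise, since $G=(\R^{n+1},+)$ is again abelian and locally compact with dual $\widehat G\simeq\R^{n+1}$. Every tool used in Section~\ref{sec:qm} has a standard multidimensional version: Bochner's theorem on $\R^{n+1}$, the GNS construction, the SNAG/multi-parameter Stone theorem, and invariant means over Følner cubes. The proof therefore has four steps, in the order (i) kinematics, (ii) dynamics, (iii) Born rule, (iv) limit.

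\textbf{Step (i): the analogue of Lemma~\ref{lem:L1L2}.} Write $w_{\boldsymbol\mu}=\prod_j w_{\mu_j}(x_j)$. This is a product kernel with $\|w_{\boldsymbol\mu}\|_{L^1}=1$, so $\Phi_{\boldsymbol\mu}=w_{\boldsymbol\mu}*\Phi$ satisfies $\|\Phi_{\boldsymbol\mu}\|_\infty\le 1$. On the Fourier side the kernel acts by multiplication by $W_{\boldsymbol\mu}$. As in Lemma~\ref{lem:L1L2}, the symmetric version of the coarse-graining multiplies the Bochner measure by $|W_{\boldsymbol\mu}|^2=\prod_j \mathrm{sinc}^2(\alpha_j\mu_j/2)$, and this product lies in $L^1\cap L^2(\R^{n+1})$. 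We then \emph{define} $\psi_{\boldsymbol\mu}$ as a measurable square root of the density of $d\mu_\Phi^{(\boldsymbol\mu)}$. This is exactly what the preceding paragraph asserts, and, as in the remark on absolute continuity, we adopt the paper's convention that this density exists at resolution $\boldsymbol\mu$.

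\textbf{Step (ii): dynamics.} Build the GNS triple from $\omega_{\boldsymbol\mu}$ on the translation-covariant $*$-algebra, with FR2 making the representation unital. FR3 is the statement that the cocycle has trivial holonomy on every lattice loop in $\Z^{n+1}$. This is the main obstacle. In one dimension exactness was automatic, but for $n\ge1$ a projective representation of $\R^{n+1}$ can carry a nontrivial 2-cocycle $e^{i\theta(x,y)}$ with $\theta$ antisymmetric bilinear (Heisenberg/Weyl type). Such a cocycle produces holonomy $e^{i\theta(e_j\mu_j,e_\ell\mu_\ell)}$ around each elementary plaquette. I would therefore argue as follows. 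Trivial plaquette holonomy for every $\boldsymbol\mu$ (dense rescalings) forces $\theta=0$ by bilinearity. The remaining coboundary is absorbed by rephasing, which gives a genuine representation $U(x)$ with $U(x)U(y)=U(x+y)$. Strong continuity follows as in Lemma~\ref{lem:unitary}. The SNAG/multi-parameter Stone theorem (Reed--Simon) then gives commuting self-adjoint $H_j$ with a joint spectral measure. In the Fourier picture we identify $H_j$ with multiplication by $\alpha_j$, so that $U(x)\psi_{\boldsymbol\mu}=e^{i\alpha\cdot x}\psi_{\boldsymbol\mu}$.

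\textbf{Step (iii): Born rule.} Repeat Lemmas~\ref{lem:inv} and~\ref{lem:born}. The joint spectral projection $P_E=\mathbf 1_E(H_0,\dots,H_n)$ commutes with every $U(x)$, so $\langle\psi_{\boldsymbol\mu},U(-x)P_EU(x)\psi_{\boldsymbol\mu}\rangle$ is constant in $x$. Any Følner average $\mathcal T_L$ therefore returns $\int_E|\psi_{\boldsymbol\mu}|^2d\alpha$. Uniqueness follows from the fact that an invariant mean of a constant function is that constant, and amenability of $\R^{n+1}$ ensures such means exist.

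\textbf{Step (iv): the limit $\boldsymbol\mu\to0$.} Since $W_{\boldsymbol\mu}\to1$ pointwise and boundedly, dominated convergence gives weak convergence $d\mu_\Phi^{(\boldsymbol\mu)}\to d\mu_\Phi$. Hence $\Phi_{\boldsymbol\mu}\to\Phi$ locally uniformly. I would present this only as the stated formal limit and not claim norm convergence of the $\psi_{\boldsymbol\mu}$.
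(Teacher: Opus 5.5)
Your proposal is essentially the paper's own argument, done with more care. The four steps match it: Bochner plus the product window $|W_{\boldsymbol\mu}|^2$ for (i), GNS with an FR3 lift and the multi-parameter Stone theorem for (ii), F{\o}lner means of an $x$-independent expectation for (iii), and a formal $\boldsymbol\mu\to0$ limit for (iv). Two caveats. First, the Weyl-type multiplier you call the main obstacle never arises: the GNS construction for a continuous positive-definite $\Phi$ on $G$ already gives a genuine strongly continuous unitary representation. Your dense-rescaling fix would otherwise need FR3 at every resolution, since at the fixed $\boldsymbol\mu$ it only forces $\mu_j\mu_\ell\,\omega_{j\ell}\in2\pi\Z$. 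Second, neither you nor the paper actually proves that the coarse spectral measure in (i) has an $L^2$ density. Multiplying $d\mu_\Phi$ by the bounded function $|W_{\boldsymbol\mu}|^2$ cannot remove a singular part (for $\Phi(x)=e^{i\alpha_0\cdot x}$ a point mass at $\alpha_0$ survives), and Lemma~\ref{lem:L1L2} does not supply it either, because its $L^1$ estimate is really only an $L^\infty$ one ($\Phi\equiv1$ gives $\Phi_\mu\equiv1$). You are therefore right to treat the density as an adopted convention rather than a consequence of FR1.
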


\subsection{Statistical regime on \texorpdfstring{$(\R^{n+1},+)$}{the real (n+1)-dimensional additive group}}

\paragraph{Convolution semigroups.}
FR2--FR3 force additivity of increments on $G$, yielding a weakly continuous
convolution semigroup $(\mathbb{P}_x)_{x\in\R^{n+1}}$ with characteristic
exponent $\psi(\alpha)$ given by the Lévy--Khintchine formula on $\R^{n+1}$.
FR1 ensures coarse absolute continuity at scale $\boldsymbol{\mu}$.

\paragraph{Invariant frequencies.}
FR4 identifies empirical frequencies with invariant means over Følner sets in
$\R^{n+1}$ (or on the cell lattice $\Z^{n+1}$), yielding law-of-large-numbers
limits and uniqueness of the frequency assignment.

\begin{theorem}[Statistical regime on $(\R^{n+1},+)$ from FR1--FR4]
\label{thm:S-Rn1}
The constraint representation activates a finite-resolution statistical regime
$\mathsf{S}_{\boldsymbol{\mu}}$ with Kolmogorov structure at scale
$\boldsymbol{\mu}$, a Lévy–Khintchine convolution semigroup on $\R^{n+1}$, and
FR4-invariant frequencies. As $\boldsymbol{\mu}\to 0$, $\mathsf{S}_{\boldsymbol{\mu}}$
approaches textbook probability on $\sigma$-algebras.
\end{theorem}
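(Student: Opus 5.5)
The plan is to prove Theorem~\ref{thm:S-Rn1} componentwise, mirroring the one-dimensional Theorem~\ref{thm:Statrigorous} with $(\R,+)$ replaced by $G=(\R^{n+1},+)$ and $\mu$ by the resolution tensor $\boldsymbol{\mu}$. The first step is the Kolmogorov structure. For a continuous positive-definite $\Phi$ on $G$ with $\Phi(0)=1$, Bochner's theorem on $\R^{n+1}$ gives a probability measure $\mathbb{P}$ with $\Phi(x)=\int e^{i\alpha\cdot x}\,\mathbb{P}(d\alpha)$. Applying the FR1 cell projector $\Pi_{\boldsymbol{\mu}}$ amounts to convolving $\mathbb{P}$ with the normalized indicator $w_{\boldsymbol{\mu}}$ of a hyperrectangle. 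I would define the coarse density cellwise as $p_{\boldsymbol{\mu}}(x)=\mathbb{P}(C_k)/|C_k|$ for $x\in C_k$. This density is nonnegative, integrates to $1$, and is piecewise constant on the $C_k$, so $\mathbb{P}$ is absolutely continuous at scale $\boldsymbol{\mu}$. On the Fourier side this corresponds to multiplication by the product window $W_{\boldsymbol{\mu}}$, exactly as in Lemma~\ref{lem:L1L2}, now taken coordinate by coordinate. The cell $\sigma$-algebra generated by $\{C_k\}_{k\in\Z^{n+1}}$, together with $p_{\boldsymbol{\mu}}$, is the Kolmogorov triple at resolution $\boldsymbol{\mu}$.

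The second step is the semigroup. FR2 fixes normalization, $\Phi_x(0)=1$. FR3 gives trivial holonomy on every closed loop of the lattice $\Z^{n+1}$, so the multiplicative cocycle is exact, which yields $\Phi_{x+y}=\Phi_x\Phi_y$. Two things must be said carefully here:
\begin{itemize}
\item the family has to be indexed by a cone, such as $\R_{\ge0}^{n+1}$ or the time direction $\R_{\ge0}$, since a probability convolution semigroup cannot be a group unless it is degenerate;
\item continuity at $0$ follows from the continuity of $\Phi$, exactly as in Lemma~\ref{lem:convsemigroup}.
\end{itemize}
Restricting to each one-parameter ray $s\mapsto sx$ produces infinitely divisible laws on $\R^{n+1}$. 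The multidimensional Lévy--Khintchine theorem (Sato, Thm.~8.1) then gives a triplet $(a,\Sigma,\nu)$ with $a\in\R^{n+1}$, $\Sigma\ge0$ and $\int(1\wedge|y|^2)\,\nu(dy)<\infty$, and hence the characteristic exponent $\psi(\alpha)$. Additivity along independent coordinate directions makes the exponents combine linearly, so $\Phi_x=\exp(-\sum_j x_j\psi_j)$ on the cone. The Feller generator is the $(n{+}1)$-dimensional analogue of the operator $\mathcal{L}$ above.

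The third step is the frequencies. Since $\R^{n+1}$ is amenable, the cubes $Q_L$ form a Følner sequence. For i.i.d.\ samples indexed by $\Z^{n+1}$, the multiparameter ergodic theorem (or simply the strong law along the cubes) gives $\mathcal{T}_L(\mathbf{1}_E)\to\mathbb{P}(E)$ almost surely. Uniqueness then follows as in Lemma~\ref{lem:LLN}: any $G$-invariant mean agreeing with these averages on cell events must coincide with $\mathbb{P}$.

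For the limit $\boldsymbol{\mu}\to0$, I would argue that $W_{\boldsymbol{\mu}}\to1$ pointwise and boundedly. Hence $p_{\boldsymbol{\mu}}\,dx\Rightarrow\mathbb{P}$ weakly, and the cell $\sigma$-algebras increase to the Borel $\sigma$-algebra, which recovers textbook probability. I expect the main obstacle to be the second step: making the passage from FR3 (exactness of a cocycle on a lattice) to a genuine multiplicative law $\Phi_{x+y}=\Phi_x\Phi_y$ rigorous, and choosing the correct parameter cone so that the semigroup is well defined.
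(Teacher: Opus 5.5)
Your plan is essentially the paper's own argument. The paper supports Theorem~\ref{thm:S-Rn1} only through the two paragraphs preceding it, which lift Theorem~\ref{thm:Statrigorous} componentwise: FR1 for coarse absolute continuity, FR2--FR3 for a convolution semigroup with a L\'evy--Khintchine exponent on $\R^{n+1}$, and F{\o}lner or lattice averages for FR4. Your cone indexing and explicit cell density $\mathbb{P}(C_k)/|C_k|$ are more careful than the paper's group-indexed family $(\mathbb{P}_x)_{x\in\R^{n+1}}$, which would force every $\mathbb{P}_x$ to be a point mass.

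Be aware of three points where neither you nor the paper has an argument yet.
\begin{itemize}
\item The step you flag is not resolved in the paper either. It simply asserts that FR2--FR3 force additivity of increments. Trivial holonomy alone cannot produce independent stationary increments. A more natural source is the composition law of the real-channel kernel $K$ together with translation invariance.
\item Your cell-averaged density is not $w_{\boldsymbol\mu}*\mathbb{P}$. Its transform is $W_{\boldsymbol\mu}$ times the transform of the lattice-discretized measure, not $W_{\boldsymbol\mu}\Phi$. So the $\boldsymbol\mu\to0$ limit should be argued directly: each cell's mass moves by at most the cell diameter.
\item The cell $\sigma$-algebras increase to the Borel $\sigma$-algebra only along nested refinements, for example dyadic ones.
\end{itemize}
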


\subsection{Topological regime on \texorpdfstring{$(\R^{n+1},+)$}{the real (n+1)-dimensional additive group}}

The topological (or deterministic) regime does not require a new derivation in higher
dimension.  It represents the $\mu \!\to\! \infty$ limit of the quantum and statistical
channels, where coarse cells become infinitely large and only extremal configurations
survive.  In this limit, FR1 suppresses all subcell variations, FR3 enforces global
phase consistency, and FR4 reduces invariant averages to Boolean truth values $0/1$.

Operationally, this regime corresponds to phase–blind probes detecting only the
presence or absence of admissible configurations, yielding deterministic dynamics
under translations of $(\R^{n+1},+)$.  It provides the algebraic closure of the
additivity framework: the boundary condition toward which all finite–resolution
regimes converge.

The previous sections established that the additive group $(\R^{n+1},+)$, when realized
through its constraint representation endowed with the structural features FR1--FR4,
activates three finite--resolution regimes of description:
quantum $(\mathsf Q_{\boldsymbol\mu})$, statistical $(\mathsf S_{\boldsymbol\mu})$,
and topological $(\mathsf T)$.
These regimes are intrinsic to the additive algebra itself.
We now show that a single further mathematical requirement---naturality under automorphisms---uniquely
extends this framework to the semidirect product
$G=\R^{n+1}\rtimes K$ and yields the appearance of spinor fields as an algebraic necessity.
\subsection{Finite Resolution and the Role of Renormalization}

In conventional quantum field theory, divergences arise from integrating over
infinitesimal modes of the continuum.  Renormalization is then introduced as a
technical procedure to remove these divergences by redefining bare parameters.

In the finite–resolution framework, no such procedure is required.
FR1---the finiteness of resolution---already restricts all admissible objects
to band–limited, cellwise–constant functions.  The algebra of observables is
therefore finite at every stage: all spectral integrals, convolution kernels,
and expectation values are bounded in $L^1 \!\cap\! L^2$.  FR2 ensures that
global constants act multiplicatively, FR3 enforces exact unitarity, and FR4
defines invariant averages that replace divergent continuum limits.

Formally, the renormalization map of standard QFT corresponds here to the
projection $\Pi_\mu$ that implements coarse–graining at scale $\mu$.
Running couplings become the dependence of effective quantities on $\mu$, but
without subtractions or counterterms.  The $\beta$--function flow is simply the
variation of algebraic coefficients under changes of resolution.

Hence, renormalization is not a separate consistency requirement but an
intrinsic property of the additive algebra under FR1--FR4.  All physical
quantities remain finite by construction, and the continuum divergences of
field theory are reinterpreted as limits $\mu \!\to\! 0$ beyond admissibility.

\subsection{Mathematical principle: Automorphism--Naturality}

Let $\mathrm{Aut}(\R^{n+1},+)\simeq GL(n{+}1,\R)$ denote the full automorphism group of the
additive group.
Requiring that the construction of admissible regimes be \emph{natural with respect to these automorphisms}
means that the results must be coordinate--independent:
if $\varphi\in \mathrm{Aut}(\R^{n+1},+)$, the representations, measures, and regimes must satisfy
\[
\mathcal{F}(\varphi\cdot f) = \varphi^\ast\mathcal{F}(f),
\]
where $\varphi^\ast$ is the pullback induced by $\varphi$.
This requirement is purely mathematical: it demands that additivity symmetry be formulated in a
\emph{canonical, coordinate--free} way.

However, FR1 endows the representation with a finite--resolution structure:
states are constant on cells of side $\boldsymbol\mu$ and their Fourier transforms are
band--limited to $\|\alpha_j\|\lesssim \mu_j^{-1}$.
Among all automorphisms of $(\R^{n+1},+)$, only those preserving the metric structure
that defines these cells maintain the FR1 form.
The subgroup of $GL(n{+}1,\R)$ preserving the FR1 metric is
\[
K =
\begin{cases}
O(n{+}1), & \text{for Euclidean signature,}\\[2pt]
O(1,n), & \text{for pseudo--Euclidean (Lorentzian) signature.}
\end{cases}
\]
Requiring automorphism--naturality under this subgroup leads to the semidirect product
\[
G = \R^{n+1}\rtimes K.
\]
The translations $(\R^{n+1},+)$ encode additivity symmetry, while $K$ acts by
rotations or Lorentz transformations preserving the finite--resolution geometry.

\subsection{Structure of the semidirect product}

The group law of $G=\R^{n+1}\rtimes K$ is
\[
(x,R)\,(x',R') = (x+R x', RR'),\qquad x,x'\in\R^{n+1},\; R,R'\in K.
\]
Its unitary dual consists of functions on $\R^{n+1}$ transforming covariantly under $K$.
The constraint representation of $(\R^{n+1},+)$ extends to $G$ by acting on admissible states
$\psi_{\boldsymbol\mu}:\R^{n+1}\to V$, where $V$ is a finite--dimensional Hilbert space
carrying a unitary (possibly projective) representation of $K$:
\[
(U(x,R)\psi_{\boldsymbol\mu})(\alpha) = e^{i\alpha\cdot x}\, \rho(R)\,\psi_{\boldsymbol\mu}(R^{-1}\alpha).
\]
Here $\rho$ is the $K$--representation on the internal fiber $V$.

\subsection{Projective representations  and the emergence of spin}

For orthogonal or Lorentz groups, unitary representations of $K$ may be
\emph{projective}\cite{bargmann}: their composition law involves a $U(1)$--valued cocycle,
\[
\rho(R_1)\rho(R_2) = e^{i\theta(R_1,R_2)} \rho(R_1R_2),
\]
where $\theta$ satisfies the 2--cocycle condition.
Such projective representations cannot be linearized within $K$ itself, but they
\emph{do} become linear in its double cover $\mathrm{Spin}(K)$.
The nontrivial cohomology class $[\theta]\in H^2(K,U(1))$ thus forces the
existence of a spinor representation space $V$ and the lift of $K$ to
$\mathrm{Spin}(K)$ (Wigner~\cite{wigner}, Bargmann~\cite{bargmann}, Varadarajan~\cite{varadarajan}).

Therefore, spinor degrees of freedom are not introduced phenomenologically: they are the
representation--theoretic resolution of the projective phase structure consistent with FR3
(phase coherence) and with automorphism--naturality under $K$.
FR1--FR4 continue to act \emph{fiberwise} on the spinor sections, preserving finite resolution,
constants as multipliers, exact cocycles, and invariant typicality.

\subsection{Resulting finite--resolution spinor regimes}

The combined action of translations and rotations (or Lorentz transformations) yields
three spinorial regimes:
\[
\mathsf Q_{\boldsymbol\mu}^{\mathrm{spin}},\qquad
\mathsf S_{\boldsymbol\mu}^{\mathrm{spin}},\qquad
\mathsf T^{\mathrm{spin}}.
\]
They are identical in structure to the additive regimes $\mathsf Q_{\boldsymbol\mu}$,
$\mathsf S_{\boldsymbol\mu}$, $\mathsf T$, except that their states
$\psi_{\boldsymbol\mu}$ now take values in the spinor fiber $V$.
The dynamical generators form the Lie algebra of $G$,
\[
[P_\mu,P_\nu]=0,\qquad [J_{\mu\nu},P_\lambda]
=i(\eta_{\nu\lambda}P_\mu - \eta_{\mu\lambda}P_\nu),\qquad
[J_{\mu\nu},J_{\rho\sigma}]=i(\eta_{\mu\rho}J_{\nu\sigma}-\dots),
\]
with $J_{\mu\nu}=X_\mu P_\nu - X_\nu P_\mu + S_{\mu\nu}$, where
$S_{\mu\nu}$ acts on $V$ as the intrinsic spin operator.
The FR1 coarse--graining applies jointly to translational and internal degrees of freedom.

\begin{theorem}[Spinor activation under automorphism--naturality]
\label{thm:spinactivation}
Let the construction of admissible regimes from additivity symmetry satisfy
FR1--FR4 and be natural under automorphisms preserving the finite--resolution metric.
Then the canonical extension of the additive symmetry from $(\R^{n+1},+)$ to the
semidirect product $G=\R^{n+1}\rtimes K$ with
$K=O(n{+}1)$ or $O(1,n)$
induces spinor fibers corresponding to the projective unitary representations of $K$.
The resulting spinorial regimes
$(\mathsf Q_{\boldsymbol\mu}^{\mathrm{spin}},
 \mathsf S_{\boldsymbol\mu}^{\mathrm{spin}},
 \mathsf T^{\mathrm{spin}})$
retain all finite--resolution properties of FR1--FR4 and converge to standard quantum,
statistical, and topological spin theories as $\boldsymbol\mu\to 0$.
\end{theorem}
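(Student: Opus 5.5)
The plan is to reduce the statement to three ingredients already available: the translation-level regimes of Theorems~\ref{thm:Q-Rn1} and~\ref{thm:S-Rn1}, Mackey's induced-representation theory for semidirect products with abelian normal subgroup, and the Wigner--Bargmann lifting of projective representations to the universal (double) cover. The statement is read in a restricted, verifiable form. Projective unitary representations of $K$ correspond to linear unitary representations of a central extension. For $K=SO(n{+}1)$ or $SO^+(1,n)$ with $n\ge 2$, this extension is $\mathrm{Spin}(K)$, with $H^2$ contributing the $\Z_2$ class. The disconnected components of $O$ are handled by the usual $\mathrm{Pin}$ extension.

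\textbf{Step 1 (naturality forces $K$ and the semidirect product).} Start from an admissible state $\psi_{\boldsymbol\mu}$ of $\mathsf Q_{\boldsymbol\mu}$. Impose $\mathcal F(\varphi\cdot f)=\varphi^\ast\mathcal F(f)$ for $\varphi\in GL(n{+}1,\R)$. Only those $\varphi$ preserving the FR1 window $|W_{\boldsymbol\mu}|^2$ map admissible objects to admissible objects, up to a measure-preserving relabelling of cells. For isotropic $\boldsymbol\mu$, and after passing to a rotation-invariant window (a radial band limit), this stabilizer is the metric group $K$; this is the statement the paper takes as the definition of the FR1 metric. The covariant implementers then satisfy the group law of $G=\R^{n+1}\rtimes K$ on the GNS space of Lemma~\ref{lem:unitary}.

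\textbf{Step 2 (Mackey analysis and fibers).} The translation subgroup acts by the commuting generators $H_j$ of Theorem~\ref{thm:Q-Rn1}. $K$ acts on their joint spectrum $\widehat{\R^{n+1}}$ by $\alpha\mapsto R\alpha$. By Mackey's imprimitivity theorem, every irreducible implementer is induced from an orbit $K\alpha_0$ and a representation of the little group $K_{\alpha_0}$ on a fiber $V$. This gives the displayed formula $(U(x,R)\psi)(\alpha)=e^{i\alpha\cdot x}\rho(R)\psi(R^{-1}\alpha)$. A priori $\rho$ is only projective, since implementers are fixed only up to phase.

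\textbf{Step 3 (FR3 and spin).} FR3 demands an exact cocycle, i.e.\ a genuine unitary action. Bargmann's theorem says the $U(1)$-cocycle of $K$ cannot in general be trivialized on $K$ itself. It becomes trivial after pulling back to $\mathrm{Spin}(K)$, because $H^2(\mathrm{Spin}(K),U(1))$ is trivial for the simply connected semisimple part. Hence FR3 is satisfied precisely by replacing $K$ with $\mathrm{Spin}(K)$ and letting $V$ carry its representations. This gives the spinor fibers, including the genuinely projective ones with $\rho(-1)=-1$.

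\textbf{Step 4 (fiberwise FR1--FR4 and limits).} Because $\rho$ is finite-dimensional and unitary in the compact case, FR1 acts as $\Pi_{\boldsymbol\mu}\otimes\mathbf 1_V$. FR2 and FR4 are unaffected, since the invariant means and Born projections now act on $\mathbf 1_E(H)\otimes\mathbf 1_V$. The kinematic and Born statements of Theorem~\ref{thm:Q-Rn1} therefore transfer verbatim. The statistical case follows the same way from Theorem~\ref{thm:S-Rn1}, with $V$-valued densities. The $\mathsf T$ case is the $\mu\to\infty$ limit of Section~\ref{sec:topo}. The limit $\boldsymbol\mu\to0$ follows because $|W_{\boldsymbol\mu}|^2\to1$ pointwise, so dominated convergence gives convergence of the expectations.

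\textbf{Main obstacle.} The hard part is Step 1 combined with the Lorentzian case of Step 4. Cubic cells are not preserved by rotations, so the stabilizer of a literal cell partition is a finite hyperoctahedral group, not $K$. I would first try to replace FR1 by a $K$-invariant spectral cutoff, which is still band-limiting. For $O(1,n)$, however, any invariant cutoff set $\{|\alpha^2|\le\mu^{-2}\}$ is noncompact, so the $L^2$ bounds of Lemma~\ref{lem:L1L2} fail, and finite-dimensional unitary $\rho$ do not exist. In that case I would settle for the Euclidean theorem plus a Wick-rotation or little-group (Wigner) argument: work on the compact stabilizer $SO(n)$ of a timelike orbit, where the spin lift is again finite-dimensional and unitary. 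I would flag the unrestricted Lorentzian claim as requiring this modification.
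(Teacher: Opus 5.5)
Your route is the paper's. Its sketch also runs: automorphism-naturality $\Rightarrow$ $G=\R^{n+1}\rtimes K$; then the Mackey-type covariant formula $(U(x,R)\psi)(\alpha)=e^{i\alpha\cdot x}\rho(R)\psi(R^{-1}\alpha)$; then the Wigner--Bargmann lift of the multiplier to $\mathrm{Spin}(K)$; then FR1--FR4 componentwise on the fiber. Your $\mathrm{Pin}$ remark is more accurate than the paper's claim that $\mathrm{Spin}(K)$ is the universal cover of $K$.

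The difference is that the paper never confronts the obstruction in your last paragraph. It simply asserts that the subgroup of $GL(n{+}1,\R)$ preserving ``the FR1 metric'' is $O(n{+}1)$ or $O(1,n)$, and in both signatures it starts from a unitary $\rho$ of $K$ on a finite-dimensional $V$. Your objection is correct, so this is a gap in the paper's proof as much as in yours.

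The cubic partition and the product window $|W_{\boldsymbol\mu}|^2$ are invariant only under a finite group of signed coordinate permutations. So Step~1 cannot extract $K$ from FR1 as stated. In Lorentzian signature the problem is structural, not a missing estimate:
every nonzero $O(1,n)$-orbit in $\widehat{\R^{n+1}}$ is unbounded, so the only nonempty bounded invariant band is $\{0\}$. An $O(1,n)$-natural FR1 therefore leaves only the zero-momentum sector, on which translations act trivially. Moreover, $SO^+(1,n)$ and $\mathrm{Spin}^+(1,n)$ ($n\ge2$) have no nontrivial finite-dimensional unitary representations, so the paper's fibers cannot be spinorial there.

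Your little-group fallback recovers Wigner spin from $\mathrm{Spin}(n)$, but necessarily by giving up an invariant FR1, not by reconciling the two. So neither argument proves the statement as written. What your route does prove is the Euclidean case ($n\ge2$) with FR1 read as an $O(n{+}1)$-invariant band limit. The paper's ``cellwise constant (or band-limited)'' wording permits this, although cellwise constancy is lost. You should present it as that theorem.

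Within that version, three points need repair.

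(a) Your opening reduction and Step~3 fail for $SO^+(1,2)$. Here $\pi_1=\Z$, and the double cover $SL(2,\R)$ is not simply connected and still carries nontrivial multipliers. Projective representations linearize only on the infinite-sheeted universal cover, where the stabilizer of a massive momentum is $\R$ and the spin is an arbitrary real number. Your Lorentzian fallback therefore needs $n\ge3$.

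(b) Mackey's theorem classifies irreducibles as $\mathrm{Ind}(\chi_{\alpha_0}\otimes\sigma)$, with $\sigma$ a representation of the little group, realized with an $\alpha$-dependent Wigner rotation. The covariant formula with a single $\rho$ of $K$ on a fixed fiber is instead $\mathrm{Ind}_K^G\rho$, a reducible direct integral over orbits, so it does not follow from that classification. You do not need it. As the paper does, take a unitary $\rho$ of $\mathrm{Spin}(n{+}1)$ or $\mathrm{Pin}^{\pm}(n{+}1)$, define $U$ by the formula on $V$-valued functions supported in the radial band, and verify the group law and invariance of the band directly.

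(c) ``Transfer verbatim'' in Step~4 imports Lemma~\ref{lem:L1L2}, whose absolute-continuity claim is false: for $\Phi\equiv1$ the coarse spectral measure is $\delta_0$. State the fiberwise Born rule against the coarse spectral measure, $\langle\Omega,(\mathbf 1_E(H)\otimes\mathbf 1_V)\Omega\rangle$, rather than through $L^2$ densities.

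Your dominated-convergence argument for $\boldsymbol\mu\to0$ is fine.
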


\subsection{Interpretation}

The semidirect product extension is not a physical postulate but a
\emph{mathematical necessity} arising from the demand that the additive construction be
coordinate--free and internally consistent with its own finite--resolution structure.
FR1--FR4 guarantee the intrinsic (translational) consistency, while
automorphism--naturality provides the extrinsic consistency under transformations
preserving finite resolution.
The spin structure is then the representation--theoretic shadow of this naturality,
emerging without any external dynamical assumption.

\subsection*{Proof sketches and references for the semidirect extension}

\paragraph{Automorphism--naturality.}
The additive group $(\R^{n+1},+)$ is a locally compact abelian (LCA) group with
$\mathrm{Aut}(\R^{n+1},+)\simeq GL(n{+}1,\R)$.
If the construction $\mathcal{F}$ assigning admissible regimes to each LCA group is to be
\emph{functorial}, it must satisfy
$\mathcal{F}(G\times H)\simeq\mathcal{F}(G)\,\widehat\otimes\,\mathcal{F}(H)$
and $\mathcal{F}(\varphi\cdot G)=\varphi^\ast\mathcal{F}(G)$ for $\varphi\in\mathrm{Aut}(G)$.
This condition---equivariance under automorphisms---is the mathematical content of
``closure under coordinate transformations'' and justifies the semidirect extension
(Bourbaki~\cite{bourbaki}, Folland~\cite{folland}).

\paragraph{Semidirect product representation.}
Let $G=\R^{n+1}\rtimes K$, where $K\subset GL(n{+}1,\R)$ acts by linear transformations.
For $(x,R)\in G$ the group law $(x,R)(x',R')=(x+Rx',RR')$ defines a continuous semidirect product
(Mackey~\cite{mackey}).
Unitary representations of $G$ are obtained by the \emph{induced--representation} method:
given a unitary representation $\rho$ of $K$ on a finite--dimensional Hilbert space $V$,
define
\[
(U(x,R)\psi)(\alpha)= e^{i\alpha\cdot x}\,\rho(R)\psi(R^{-1}\alpha),
\]
which extends the additive representation of $(\R^{n+1},+)$ to $G$.
This formula follows from the standard construction of induced representations
(Mackey~\cite{mackey}, Folland~\cite{folland}, Varadarajan~\cite{varadarajan}).

\paragraph{Projective representations of $K$.}
For orthogonal or Lorentz groups, the second cohomology
$H^2(K,U(1))\neq 0$, meaning that some unitary representations are only projective.
Explicitly, if $\rho(R_1)\rho(R_2)=e^{i\theta(R_1,R_2)}\rho(R_1R_2)$ with
$\theta$ a $U(1)$--valued 2--cocycle, the obstruction class $[\theta]$
is nontrivial (Bargmann~\cite{bargmann}, Wigner~\cite{wigner}).
Every such projective representation lifts uniquely to a linear unitary representation
of the double cover $\mathrm{Spin}(K)$, which is the universal covering group of $K$.
Hence the existence of nontrivial projective phases entails spinor fibers.

\paragraph{Spinor sections and finite resolution.}
The additive representation now acts on spinor--valued states
$\psi_{\boldsymbol\mu}:\R^{n+1}\to V$, with $V$ carrying a representation of $\mathrm{Spin}(K)$.
FR1 applies to each component of $\psi_{\boldsymbol\mu}$, enforcing finite resolution and
band--limited spectra; FR2--FR3 ensure scalar multipliers and phase consistency,
and FR4 yields invariant averages within each spin sector.
Thus FR1--FR4 remain valid fiberwise on the spin bundle.

\paragraph{Lie algebra structure.}
Differentiating the unitary representation gives the commutation relations
\[
\begin{aligned}
[P_\mu,P_\nu] &=0,\\
[J_{\mu\nu},P_\lambda]
&=i(\eta_{\nu\lambda}P_\mu-\eta_{\mu\lambda}P_\nu),\\
[J_{\mu\nu},J_{\rho\sigma}]
&=i(\eta_{\mu\rho}J_{\nu\sigma}
-\eta_{\mu\sigma}J_{\nu\rho}\\
&\qquad -\eta_{\nu\rho}J_{\mu\sigma}
+\eta_{\nu\sigma}J_{\mu\rho}),
\end{aligned}
\]
where $J_{\mu\nu}=X_\mu P_\nu-X_\nu P_\mu+S_{\mu\nu}$ and $S_{\mu\nu}$ acts on the spin fiber $V$.
These follow from the standard representation of the Lie algebra $\mathfrak{so}(1,n)$ or $\mathfrak{so}(n{+}1)$
(Reed--Simon~\cite{reed_simon2}, Wigner~\cite{wigner}).

\paragraph{Spinor activation theorem.}
Combining these results, automorphism--naturality under the metric--preserving automorphism
subgroup $K$ of $(\R^{n+1},+)$, together with FR1--FR4,
implies that the admissible regimes extend canonically to
spinor--valued regimes
$(\mathsf Q_{\boldsymbol\mu}^{\mathrm{spin}},
  \mathsf S_{\boldsymbol\mu}^{\mathrm{spin}},
  \mathsf T^{\mathrm{spin}})$.
The finite--resolution structure persists, and in the limit $\boldsymbol\mu\to 0$ the
standard quantum, statistical, and topological spin theories are recovered.

\medskip
\noindent\emph{References used in this subsection:}
Bourbaki~\cite{bourbaki}, Folland~\cite{folland}, Mackey~\cite{mackey},
Wigner~\cite{wigner}, Bargmann~\cite{bargmann},
Varadarajan~\cite{varadarajan}, Reed--Simon~\cite{reed_simon2}.
\paragraph{Spin hierarchy under finite resolution.}
Although the semidirect group $G=\R^{n+1}\rtimes K$ admits representations of arbitrary spin
$s=0,\tfrac12,1,\tfrac32,2,\dots$, the finite–resolution structure of the constraint
representation restricts the dynamically admissible sectors.
FR1 (finite resolution) and FR3 (phase consistency) jointly require convolution stability
and analytic regularity of correlation kernels, which are satisfied only for fields of rank
$\le2$ (scalar, vector, and symmetric–tensor sectors).
Spinorial fibers introduced by the projective lift of $K$ (FR3 and automorphism–naturality)
add half–integer spins up to $s=\tfrac32$ without violating finite resolution.
Higher–spin representations $s>2$ are formally present in the group but not admissible
under the constraint representation, as they break the closure implied by FR1--FR3.
Thus the finite–resolution algebra naturally suppresses physical spins above two.
\subsection{Spin-\texorpdfstring{$\tfrac12$}{1/2} as a finite--resolution projection of spin--1}

In the additive--symmetry framework, the vector representation of the automorphism group
$K$ corresponds to spin~1.
FR3 (phase consistency) introduces a projective $U(1)$ multiplier, so that rotations are
represented only up to a phase.
Lifting this projective structure to a linear representation of the double cover
$\mathrm{Spin}(K)$ yields the spinor representation.
Hence the spin-$\tfrac12$ channel is not an independent structure but the
\emph{finite--resolution (projective) realization} of the same geometric symmetry
carried by the spin--1 representation: it is the ``square--root'' of the vector
representation compatible with FR3.
In this sense, spin-$\tfrac12$ arises as a projection or restricted component of the
spin--1 channel under the finite--resolution constraint, while spins higher than~2 remain
incompatible with FR1--FR3 and are filtered out by the finite--resolution sieve.
\subsection*{Proof sketch: spin-$\tfrac12$ as projective lift of spin--1}

\paragraph{(1) FR3 $\Rightarrow$ projective $K$--action.}
Let $K$ be the metric-preserving automorphism subgroup (orthogonal or Lorentz).
Phase consistency (FR3) allows a $U(1)$ multiplier on $K$:
\[
U(R_1)U(R_2)\;=\;\sigma(R_1,R_2)\,U(R_1R_2),\qquad \sigma:K\times K\to U(1),
\]
with $\sigma$ a $2$--cocycle. If $[\sigma]\neq 0$ in $H^2(K,U(1))$, the action
cannot be linearized on $K$ itself and thus defines a \emph{projective} unitary
representation of $K$.

\paragraph{(2) Spin double cover resolves the multiplier.}
There is a $2{:}1$ covering $\pi:\mathrm{Spin}(K)\to K$ such that every
projective unitary representation of $K$ with multiplier class $[\sigma]$ lifts
to an \emph{honest} unitary representation $\Gamma:\mathrm{Spin}(K)\to \mathcal U(S)$ on a
Hilbert space $S$ (the spinor fiber), with
\[
U(R)\;\equiv\; \Gamma(g)\quad\text{whenever}\quad \pi(g)=R,
\]
well-defined up to the central element $\{\pm \mathbf{1}\}\subset \mathrm{Spin}(K)$,
whose image is the phase in the projective $K$--action. This realizes the
projective $K$--action of FR3 as a linear spin representation.

\paragraph{(3) Clifford action and the vector representation.}
Let $\Cl(V,\eta)$ be the complexified Clifford algebra of $(V,\eta)$ with
generators $\{\gamma_\mu\}$ such that $\{\gamma_\mu,\gamma_\nu\}=2\eta_{\mu\nu}\mathbf{1}$.
There is a faithful $\ast$--representation $c:V\to \mathrm{End}(S)$, $v\mapsto c(v)$,
and a unitary representation $\Gamma$ of $\mathrm{Spin}(K)$ on $S$ such that
\[
\Gamma(g)\,c(v)\,\Gamma(g)^{-1}\;=\;c\!\big(\rho_{\mathrm{vec}}(\pi(g))\,v\big),
\qquad g\in \mathrm{Spin}(K),\ v\in V,
\]
where $\rho_{\mathrm{vec}}:K\to GL(V)$ is the \emph{vector} (spin--1) representation.
Thus the spinor action \emph{implements} the vector action by conjugation on Clifford
generators. In this precise sense, the spinor representation is a ``square--root''
of the vector action: rotations of vectors are realized as inner automorphisms of
$\Cl(V,\eta)$ generated by the spin representation.

\paragraph{(4) Finite resolution selects the minimal projective sector.}
Under FR1, admissible states are coarse--grained and band--limited; the admissible
operator algebra at finite resolution contains (at most) linear and bilinear forms
in $c(v)$ and their commutators. FR2--FR3 enforce multiplier linearity and phase
coherence. The minimal nontrivial projective sector compatible with these constraints
is precisely the spinor representation $S$; higher tensorial (spin~$>\!1$) projective
sectors fail convolution/regularity at finite resolution (the kernel--locking argument),
hence are inadmissible. Consequently, spin-$\tfrac12$ appears as the \emph{finite--resolution
projection} of the spin--1 geometry encoded by $\rho_{\mathrm{vec}}$.

\paragraph{(5) Summary.}
FR3 produces a nontrivial $U(1)$ multiplier for $K$; the double cover $\mathrm{Spin}(K)$
linearizes it, yielding the spinor fiber $S$. The spinor action $\Gamma$ conjugates
Clifford generators to implement the vector action, making spin-$\tfrac12$ the
projective (square--root) realization of spin--1 consistent with FR1--FR3. FR1
then suppresses higher--spin sectors by finite--resolution admissibility.

\paragraph{Interpretation: instantiations under different probes.}
The constraint algebra defined by FR1--FR4, even when extended to
$G=\R^{n+1}\rtimes K$, does not intrinsically distinguish between
spin--1 and spin--$\tfrac12$ sectors.
Both arise as admissible representations of the same additive--symmetry substrate:
the vector channel realizes the automorphism action of $K$, while the spinor
channel is its projective (square--root) realization induced by FR3.
They therefore represent two \emph{instantiations of the same algebraic object},
revealed by different classes of probes.
A probe that couples directly to the additive automorphism structure
observes the vector (spin--1) field, whereas a probe that couples to the projective
phase sector of FR3 detects the spinorial (spin--$\tfrac12$) field.
In this sense the algebra defines one finite--resolution field schema,
of which the conventional ``bosonic'' and ``fermionic'' fields are distinct operational
projections rather than independent primitives.

\section{Internal automorphisms and spinor species}

The semidirect extension $G=\R^{n+1}\rtimes K$ accounts for external symmetries
(translations and rotations or Lorentz transformations).
All admissible spin--$\tfrac12$ fields then take values in the same spinor fiber
$S$, carrying a representation of $\mathrm{Spin}(K)$.
The constraint algebra defined by FR1--FR4 does not further distinguish between
different spinor species: all such fields satisfy identical finite--resolution,
phase, and typicality conditions.

To account for the empirical multiplicity of fermionic fields (electron, neutrino,
quarks, \emph{etc.}) one introduces the algebra of internal automorphisms of the
spinor fiber,
\[
H \;\subseteq\; \mathrm{Aut}_{\text{FR}}(S)
\;\subseteq\; U(S),
\]
consisting of all unitary transformations preserving FR1--FR4 on $S$.
The admissible fields are then sections
$\psi_{\boldsymbol\mu}:\R^{n+1}\to S$
transforming as
\[
(U(x,R,h)\psi_{\boldsymbol\mu})(\alpha)
= e^{i\alpha\cdot x}\,\rho_{\mathrm{spin}}(R)\,\rho_{\mathrm{int}}(h)\,\psi_{\boldsymbol\mu}(R^{-1}\alpha),
\]
where $\rho_{\mathrm{int}}$ is a finite--dimensional unitary representation of $H$.
Different irreducible representations of $H$ correspond to
different operational instantiations of the same algebraic spinor object.
A probe coupling to $\rho_{\mathrm{int}}$ detects a specific channel:
trivial representations correspond to neutral fields (neutrino--like),
one--dimensional abelian representations to charged leptons (electron--like),
and higher nonabelian representations to multi--component color or flavor
states (quark--like).
Thus the diversity of fermionic species arises not from new postulates but from
the internal automorphism structure of the spinor fiber, consistent with FR1--FR4
and automorphism--naturality.
All spinor species are therefore distinct \emph{instantiations of the same
finite--resolution spinor algebra}, resolved by probes coupling to different
internal symmetries.
\subsection*{Proof sketch: why $H$ is unitary, compact, and internal}

\paragraph{Setup.}
Let $U:\R^{n+1}\rtimes K\to\mathcal U(\Hilb)$ be the strongly continuous unitary
representation provided by FR1--FR3 (additive translations and $K$ rotations/Lorentz),
acting on the finite--resolution spinor space $\Hilb\simeq L^2(\R^{n+1},S)$ (fiber $S$).
Define the \emph{FR–preserving automorphisms} on the spin fiber by
\[
\begin{aligned}
\mathrm{Aut}_{\mathrm{FR}}(S)
&:= \bigl\{\,T\in \mathrm{End}(S)\ \big|\\
&\qquad \text{$T$ preserves FR1 (cell norms/bandlimits),}\\
&\qquad \text{FR2 (constants), FR3 (phases), FR4 (invariant means)}\ \bigr\}.
\end{aligned}
\]
The \emph{internal symmetry group} is then the subgroup that \emph{commutes} with the external
action and preserves the FR structure:
\[
H\ :=\ \bigl\{\,h\in\mathrm{Aut}_{\mathrm{FR}}(S)\ :\ [\,h,\,U(x,R)\,]=0\ \ \forall(x,R)\in \R^{n+1}\rtimes K\,\bigr\}.
\]

\paragraph{(i) Unitarity.}
FR1 places the dynamics and measurements in an $L^2$ (Hilbert) setting at finite resolution,
and FR3 enforces phase consistency. If $h\in \mathrm{Aut}_{\mathrm{FR}}(S)$ altered the $L^2$ norm on
each fiber (or the phase pairing), it would distort FR1 cell norms and violate FR3 coherence.
Hence $h$ must preserve the inner product on $S$. Therefore $H\subseteq U(S)$ (unitarizability).
Equivalently, represent the (a priori bounded) action of $\langle \mathrm{Aut}_{\mathrm{FR}}(S)\rangle$
and invoke the standard unitarization argument: average the pull–back of the inner product
over the (amenable) closure of the group using an FR4–invariant mean, obtaining an invariant
inner product; with respect to this inner product all elements act unitarily (Kadison--Ringrose).

\paragraph{(ii) Compactness.}
Elements of $\mathrm{Aut}_{\mathrm{FR}}(S)$ preserve (a) FR1 bandlimits and cell norms,
(b) FR2 scalar multipliers, and (c) FR3 phases, so their operator norms on $S$ are uniformly bounded.
Take the (strong) closure $\overline{H}^{\,s}$ inside $U(S)$; by Banach--Alaoglu and the fact that $U(S)$
is strongly compact on finite–dimensional $S$ (our spinor fiber), $\overline{H}^{\,s}$ is compact.
Equivalently, by Peter--Weyl, the closure of a uniformly bounded group of unitary operators on a finite–dimensional
Hilbert space is a compact (matrix) group. Thus $H$ is a compact subgroup of $U(S)$.

\paragraph{(iii) Internality (commutant).}
By definition, $H$ consists of operators on the fiber $S$ that commute with the external representation
$U(x,R)$:
\[
[\,h,\,U(x,R)\,]=0,\qquad \forall h\in H,\ (x,R)\in \R^{n+1}\rtimes K.
\]
If $h$ failed to commute with translations $U(x,\mathbf 1)$, it would mix spectral variables and
destroy the FR1 bandlimit (hence inadmissible). If $h$ failed to commute with $K$, it would spoil
the covariance of the spin fiber under rotations/Lorentz. Therefore $H\subseteq U(S)\cap U(\R^{n+1}\rtimes K)^\prime$,
the \emph{commutant} of the external action, i.e.\ an \emph{internal} symmetry.

\paragraph{(iv) Species as irreps.}
Since $H$ is compact and unitary, $S$ decomposes into a finite direct sum of irreducible
representations: $S\simeq\bigoplus_{a} S_a\otimes \C^{m_a}$. The corresponding orthogonal projectors
$P_a$ lie in the von Neumann algebra generated by $H$ (hence in the commutant of $U$), and define
operationally distinguishable channels. A probe that couples to $H$ resolves one such irrep $S_a$,
which we identify with a \emph{fermion species} (electron–like, neutrino–like, quark–like, etc.).
Thus empirical distinctions among spin–$\tfrac12$ fields correspond to different $H$–irreps, not to
different algebras; they are distinct \emph{instantiations} of the same finite–resolution spinor object.

\paragraph{(v) Stability under FR1--FR4.}
The arguments above use only FR1 (Hilbert realization and bandlimit), FR2 (scalar embedding),
FR3 (phase coherence), and FR4 (existence of invariant means for unitarization/ergodic selection).
Hence unitarity, compactness, and internality of $H$ are structural consequences of the finite–resolution
constraint representation.
\paragraph{Matter fields as manifestations of spin--1 geometry.}
In the finite--resolution algebra, all admissible fields arise as representations of the
same additive symmetry.
The spin--$\tfrac12$ sector is the projective lift of the spin--1 (vector) channel induced
by the phase multiplier of FR3.
Spinor fields therefore do not constitute independent degrees of freedom: they are
phase--resolved manifestations of the spin--1 geometry.
Different species such as electrons, quarks, and neutrinos correspond to distinct
irreducible representations of the internal automorphism group $H_{1/2}\subset U(S)$
acting on the spinor fiber.
They are operationally distinct but algebraically unified as instantiations of the same
additive--symmetry object, revealed under different probe couplings.
In this sense, ``matter'' arises as a projective shadow of the vector field structure,
not as an ontologically separate entity.

\section{Internal automorphisms across spin sectors}

The internal symmetry mechanism described for spin--$\tfrac12$ fields extends naturally
to other admissible spin sectors.
Each spin representation of the external automorphism group $K$ may carry its own
compact unitary group of FR--preserving internal automorphisms, accounting for the
empirical multiplicity of fields within the same spin value.

\paragraph{Spin--0.}
Scalars correspond to the trivial representation of $K$; their admissible fields
$\phi:\R^{n+1}\to\C^{N_0}$ may transform under a compact internal group
$H_0\subseteq U(N_0)$ preserving FR1--FR4.
Different irreducible representations of $H_0$ describe distinct scalar families.

\paragraph{Spin--1.}
Vector fields transform under the fundamental representation of $K$.
Multiple admissible copies can mix through an internal group
$H_1\subseteq U(N_1)$ acting in the commutant of the $K$ action.
FR1--FR4 enforce $H_1$ to be compact and unitary, yielding Yang--Mills--like
internal symmetries.
Different irreducible channels of $H_1$ correspond to distinct gauge fields.

\paragraph{Spin--2.}
The symmetric tensor square of the vector representation defines the spin--2
sector. Finite resolution and phase consistency (FR1--FR3) leave only one
self--consistent realization, the metric channel; hence $H_2$ is trivial.

\paragraph{Spin--$\tfrac32$ and higher.}
Composite or higher--spin sectors violate FR1 (finite moment) or FR3
(phase coherence) and are therefore inadmissible.
Finite resolution thus enforces a finite hierarchy of allowed spin sectors
$(0,\tfrac12,1,2)$, each admitting its own internal automorphism structure.

\paragraph{Summary.}
Internal automorphism groups $\{H_s\}$ extend the activation principle to
account for field multiplicities within each spin sector:
\[
H_s\ \subseteq\ U(N_s),\qquad s\in\{0,\tfrac12,1,2\}.
\]
They act unitarily on the corresponding spin fibers, commute with the
external action of $\R^{n+1}\rtimes K$, and preserve the finite--resolution
structure FR1--FR4.
Distinct empirical fields are then interpreted as operational instantiations
associated with different irreducible representations of $H_s$.
\paragraph{Why the spin--2 channel is geometric.}
The appearance of a symmetric rank--2 tensor in the finite--resolution algebra
is not interpretative but structural.
Under FR1--FR4 the admissible convolution algebra is closed under quadratic
combinations of the additive coordinates, yielding the bilinear invariant
\[
g_{\mu\nu} \;=\; \langle x_\mu x_\nu \rangle_{\text{FR}},
\]
which transforms covariantly under $K$ and satisfies
$g_{\mu\nu}=g_{\nu\mu}$ by phase consistency (FR3).
This object is the unique symmetric bilinear form preserved by $K$ and defines
the invariant measure for all admissible states.
Thus the spin--2 representation is \emph{forced} by the algebra as the second--order
moment of the additive group under finite resolution.
Identifying $g_{\mu\nu}$ with the metric tensor is then not an external hypothesis
but the natural interpretation of the only symmetric tensor field compatible with
FR1--FR4. In this precise sense, the spin--2 channel already plays the role of
the metric (gravitational) interaction.
\section{Boundary, Geometry, and Gravity in the Finite--Resolution Algebra}
\label{sec:boundary}

\subsection*{Boundary and finite resolution}

In the finite--resolution framework, the additive group $(\R^{n+1},+)$ is represented
not by infinitely sharp functions but by \emph{cellwise--constant states}.
Each admissible object is defined on a finite cell $C_k$ of linear size $\mu$; hence
every cell possesses a \emph{boundary interface} separating it from its neighbors.
These interfaces are not external geometric surfaces but intrinsic features of the
algebra itself: the algebra acts on equivalence classes of functions that are constant
inside each cell and discontinuous only across their shared boundaries.

Because FR1 forbids features below the scale $\mu$, any interaction or convolution
between cells occurs only through their boundaries.
The boundary is therefore the locus where finite--resolution information is exchanged.
FR3 (phase consistency) imposes continuity of additive phases across these boundaries,
ensuring that algebraic operations glue neighboring cells coherently.
In this way, the full algebra of admissible observables is generated by local bulk data
\emph{together with} the matching conditions on their interfaces.

Mathematically, this means that
\[
\begin{gathered}
\text{all admissible functionals on a finite domain}\\
\text{are determined by their boundary values up to resolution } \mu.
\end{gathered}
\]
The finite--resolution algebra is thus intrinsically \emph{boundary--encoded}: every
observable is reconstructible from boundary data.

---

\subsection*{Emergence of geometry and the spin--2 field}

Within each cell, the additive variables $x_\mu$ possess finite first and second moments.
The second--moment invariant
\[
g_{\mu\nu} \;=\; \langle x_\mu x_\nu\rangle_{\text{FR}}
\]
is a symmetric bilinear form that transforms covariantly under the automorphism group
$K$ (rotations or Lorentz transformations).
By FR2 (constants as multipliers), this form acts multiplicatively on all admissible
objects, defining the invariant scalar product within each cell.
FR3 enforces its symmetry, $g_{\mu\nu}=g_{\nu\mu}$.
Therefore, the existence of $g_{\mu\nu}$ is not an additional hypothesis: it is
\emph{forced by the algebra} as the lowest nontrivial boundary invariant of the
finite--resolution cells.

Because every admissible convolution or average between neighboring cells must
preserve $g_{\mu\nu}$, this tensor determines the local structure by which boundary
data are propagated inward.
Its transformation law corresponds to the spin--2 representation of $K$.
In this sense, $g_{\mu\nu}$ acts as a \emph{metric kernel}: the universal rule translating
boundary information into bulk correlations.

---
\subsection*{Renormalization as an Intrinsic Property}

In conventional quantum field theory, divergences arise from the continuum assumption:
fields are treated as operator-valued distributions, and their correlation functions,
\[
\langle \phi(x)\phi(x')\rangle = \int_{\R^{n+1}} e^{i\alpha\cdot(x-x')} D(\alpha)\,d\alpha,
\]
typically diverge when evaluated at coincident points $x=x'$.  
Renormalization restores finiteness by introducing regularization schemes and
counterterms, effectively replacing the ill-defined local limit by a finite,
scale-dependent expression.

In the finite–resolution framework, these operations become unnecessary.  
FR1 (finite resolution) already replaces fields by band-limited functionals 
$f_\mu(x)=\Pi_\mu f(x)$ whose Fourier components vanish for $|\alpha|\gtrsim \mu^{-1}$.
Two–point functions are therefore manifestly finite:
\[
\langle f_\mu(x) f_\mu(x') \rangle
   = \int_{|\alpha|\lesssim\mu^{-1}} e^{i\alpha\cdot(x-x')} D(\alpha)\,|W_\mu(\alpha)|^2\,d\alpha,
\]
with all integrals bounded by construction.  
FR3 ensures that this property is preserved under evolution (unitary or dissipative),  
while FR4 guarantees that ensemble \cite{sklar} or time averages yield invariant and finite values.

What in standard QFT appears as a renormalization constant or scale dependence
is here encoded algebraically in the resolution~$\mu$.  
Couplings, self–energies, and propagators naturally acquire $\mu$–dependence,
reflecting the operational scale of the probe rather than any ultraviolet divergence.  
Taking the limit $\mu\to0$ recovers the formal continuum theory with its usual
divergences, but within the admissible algebra only finite quantities exist.
Thus, renormalization is not a separate procedure but a structural property of
finite–resolution additivity symmetry: the algebra itself is self–renormalizing.

\subsection*{Transition to the Gravitational Regime}

The finite--resolution additive framework that regularizes quantum field theory also extends naturally to gravitation. 
Once spacetime translations are coupled to their automorphism group, the additive algebra defines an intrinsic transport law that preserves information between boundary and bulk descriptions. 
This transport acts as an \emph{algebraic kernel}: it propagates finite--resolution additive states from boundary data into bulk observables while maintaining the same consistency conditions---additivity, phase coherence, and finiteness---that govern quantum amplitudes. 
In this regime, what is ordinarily interpreted as the gravitational field corresponds to the geometric realization of this kernel. 
Because the kernel already enforces a discrete information flow through finite resolution, metric quantization introduces no additional operational degrees of freedom and becomes unnecessary. 
The gravitational dynamics that emerge from this structure reproduce the classical limit of general relativity while preserving exact information conservation at finite resolution.

\subsection*{Gravity as an internal boundary mechanism}

The field $g_{\mu\nu}$ governs how neighboring cells exchange information through
their boundaries.
Changes in $g_{\mu\nu}$ modify the coupling between boundary fluxes, thereby
affecting the effective trajectories and correlations of all admissible fields.
When expressed in differential form, these couplings reproduce the familiar
structure of covariant derivatives and curvature tensors.
Thus, what is conventionally interpreted as \emph{gravitational interaction} corresponds
here to the \emph{self--adjustment of boundary couplings} required for phase consistency
and invariance under translations and automorphisms.

In short:
\begin{itemize}
\item Boundaries arise naturally from finite resolution (FR1).
\item Coherent information transfer across boundaries requires a symmetric bilinear kernel (the metric).
\item Variations of this kernel correspond to gravitational dynamics.
\end{itemize}

Hence gravity is not an external force added to the algebra but the \emph{internal rule
that maintains boundary coherence} between finite--resolution cells.
It measures how the matching conditions of the algebra deviate from exact translational
symmetry when boundaries deform.

---

\subsection*{Holographic structure}

Because all admissible observables are determined by boundary data up to resolution
$\mu$, the finite--resolution algebra satisfies an intrinsic \emph{holographic property}:
the effective degrees of freedom of a region scale with its boundary area (the number
of cell interfaces) rather than its volume.
FR2 ensures that global invariants act uniformly on all cells, and FR4 (typicality)
implies that global averages coincide with averages over boundary ensembles.
Consequently, the entire bulk state of the system can be reconstructed from the
ensemble of boundary correlations and from the metric kernel $g_{\mu\nu}$ that links them.

The spin--2 field thus plays the same algebraic role as the holographic propagator:
it translates boundary phase data into interior observables while preserving FR1--FR4.
This ``holographic'' character is therefore not a postulate but a structural property
of the finite--resolution additive algebra itself.
\subsection*{Boundary Determination Theorem}

Let $\mu>0$ be fixed and let $\{C_k\}_{k\in\mathcal{K}}$ be the partition of $\R^{n+1}$
into adjacent half-open cells of linear size $\mu$.
For a bounded Lipschitz region $\Omega=\bigcup_{k\in\mathcal{K}_\Omega} C_k$, denote by
$\partial\Omega$ the union of cell interfaces contained in the topological boundary of $\Omega$.
Let $\mathcal{A}_\mu(\Omega)$ be the algebra of admissible observables on $\Omega$
satisfying FR1--FR4:
\begin{itemize}\itemsep0.2em
\item FR1: cellwise constancy (equivalently $f=\Pi_\mu f$);
\item FR2: constants act as global scalars;
\item FR3: phase consistency across interfaces (exact 1-cocycle);
\item FR4: typicality via invariant means (finite rings or Cesàro).
\end{itemize}
Write $\mathcal{A}_\mu(\partial\Omega)$ for the boundary algebra of interface data
satisfying the same FR2--FR4 constraints restricted to $\partial\Omega$.

\begin{theorem}[Boundary Determination]
\label{thm:boundary_determination}
There is a well-defined surjective \emph{trace} map
\[
\mathrm{Tr}_\mu:\ \mathcal{A}_\mu(\Omega)\ \longrightarrow\ \mathcal{A}_\mu(\partial\Omega),
\qquad A\ \mapsto\ A\!\restriction_{\partial\Omega},
\]
such that:
\begin{enumerate}\itemsep0.2em
\item[(i)] (\emph{Injectivity up to $\mu$-null sets}) If $A\in \mathcal{A}_\mu(\Omega)$ satisfies
$\mathrm{Tr}_\mu(A)=0$, then $A=0$ on all cells of $\Omega$ (modulo $\mu$-null sets on interfaces).
\item[(ii)] (\emph{Reconstruction}) There exists a linear, FR2--FR4 preserving \emph{reconstruction operator}
\[
\mathcal{R}_\mu:\ \mathcal{A}_\mu(\partial\Omega)\ \longrightarrow\ \mathcal{A}_\mu(\Omega)
\]
such that $\mathrm{Tr}_\mu\circ \mathcal{R}_\mu=\mathrm{id}_{\mathcal{A}_\mu(\partial\Omega)}$ and
$\mathcal{R}_\mu$ is unique among operators that preserve FR1 (cellwise constancy) and FR3 (phase exactness).
\item[(iii)] (\emph{Stability}) $\|\mathcal{R}_\mu\|$ is bounded uniformly in $\mu$ on finite unions of cells; in particular,
boundary perturbations control bulk perturbations at fixed scale:
\[
\| \mathcal{R}_\mu(B_1)-\mathcal{R}_\mu(B_2)\|_{L^\infty(\Omega)}
\ \le\ C(\Omega,\mu)\ \|B_1-B_2\|_{L^\infty(\partial\Omega)}.
\]
\end{enumerate}
Consequently, \emph{every} admissible functional on $\Omega$ is determined by its boundary values up to resolution $\mu$:
\[
\mathcal{A}_\mu(\Omega)\ \simeq\ \mathcal{R}_\mu\big(\mathcal{A}_\mu(\partial\Omega)\big).
\]
\end{theorem}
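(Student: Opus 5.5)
The plan is to make precise the one mechanism in the paper that ties the value of an admissible object on one cell to its values on the neighbouring cells. That mechanism is FR3, read through the transport operators of Definition~\ref{FR:holonomy}. By FR1 an element $A\in\mathcal{A}_\mu(\Omega)$ is a finite family $(A_k)_{k\in\mathcal{K}_\Omega}$ of cell values. I will interpret FR3 (phase consistency across interfaces) as the requirement that adjacent values are related by transport, $A_k=g_{k,k'}A_{k'}$ for every shared interface, with a multiplicative cocycle $g$ whose holonomy around every closed loop of the cell graph of $\Omega$ is trivial.

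The first step is to show that trivial holonomy on the connected cell graph of the Lipschitz region $\Omega$ makes the cocycle exact: $g_{k,k'}=h_k/h_{k'}$ for a nowhere-vanishing potential $h$. I will build $h$ by fixing a base cell, defining $h$ along a spanning tree, and checking path independence from the loop condition. This is the discrete analogue of the cocycle-exactness statement already used in FR3 and in Lemma~\ref{lem:unitary}. It follows that $A_k/h_k$ is constant on $\Omega$, so $\mathcal{A}_\mu(\Omega)$ is parametrised by a single scalar per gauge class. FR2 then guarantees that this scalar acts as a global multiplier, consistent with the Calibration constants proposition.

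With this structure the three claims become concrete.
\begin{itemize}
\item The trace $\mathrm{Tr}_\mu$ is defined as the restriction to the cells adjacent to $\partial\Omega$, which is nonempty because $\Omega$ is bounded. The boundary algebra $\mathcal{A}_\mu(\partial\Omega)$ consists of interface data satisfying the same exact-cocycle relation. Surjectivity follows because every such boundary datum extends along $h$.
\item For (i), if $\mathrm{Tr}_\mu A=0$ then the constant $A_k/h_k$ vanishes on some boundary cell, hence everywhere. The caveat ``modulo $\mu$-null sets on interfaces'' absorbs the ambiguity of half-open cells.
\item For (ii), I set $\mathcal{R}_\mu(B)_k:=h_k\,B_{k_0}/h_{k_0}$ for any boundary cell $k_0$; this is independent of $k_0$ by exactness. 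Linearity and preservation of FR1 and FR2 are immediate. FR4 is preserved because the invariant means of Section~\ref{sec:finite-resolution} commute with the global rescaling. Uniqueness holds because any FR1- and FR3-preserving right inverse of $\mathrm{Tr}_\mu$ must satisfy the same transport relation, so it agrees with $\mathcal{R}_\mu$ by the injectivity argument of (i).
\end{itemize}

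For (iii), I will estimate
\[
\|\mathcal{R}_\mu(B_1-B_2)\|_{L^\infty(\Omega)}\le \frac{\sup_{\Omega}|h|}{\min_{\partial\Omega}|h|}\,\|B_1-B_2\|_{L^\infty(\partial\Omega)},
\]
which gives the stated bound with $C(\Omega,\mu)=\sup_\Omega|h|/\min_{\partial\Omega}|h|$. When FR3 is taken in its $U(1)$ form, so that $|g|=1$, the constant is $C=1$, independent of $\mu$.

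The main obstacle is the first step, which is interpretive rather than technical. Mere cellwise constancy does not tie interior cells to boundary cells, and (i) is false unless FR3 is read as a genuine constraint on adjacent values rather than on an auxiliary transport. I would therefore state this reading explicitly at the start of the proof. I would also restrict to unimodular cocycles, i.e.\ the $U(1)$ form of FR3, to obtain the uniformity in $\mu$ claimed in (iii). For general $\C^\times$ cocycles the constant $C(\Omega,\mu)$ may depend on $\mu$, and I expect to have to remark on this.
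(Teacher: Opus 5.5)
Your argument is essentially the paper's own proof sketch: FR1 reduces $A$ to cell values $A_k$, FR3 is read as exact phase matching $A_\ell=e^{i(\theta_\ell-\theta_k)}A_k$ on the connected cell adjacency graph, injectivity follows by propagating zeros inward from the boundary cells, $\mathcal{R}_\mu$ transports boundary values along paths with the exact phase, and stability comes from the phases having unit modulus. You are somewhat more explicit than the paper (you build the potential from trivial holonomy via a spanning tree, deduce uniqueness directly from injectivity of $\mathrm{Tr}_\mu$, obtain $C=1$, and note that under this reading $\mathcal{A}_\mu(\Omega)$ is one-dimensional), but the route is the same.
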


\begin{proof}[Proof sketch]
FR1 implies that any $A\in\mathcal{A}_\mu(\Omega)$ is constant on each cell $C_k$: write $A_k\in\C$ for its value.
Across an interface $F=C_k\cap C_\ell$, FR3 enforces exact phase matching: there exists a phase potential
$\theta$ on the adjacency graph of cells such that $A_\ell=e^{i(\theta_\ell-\theta_k)}A_k$ whenever $F\neq\emptyset$.
Thus the collection $\{A_k\}$ is a global section determined (up to a global constant) by its restriction to boundary cells
(\emph{graph exactness}).

If $\mathrm{Tr}_\mu(A)=0$, then $A_k=0$ for all boundary cells. Propagating along the connected cell adjacency graph by the FR3 matching,
we conclude $A_k=0$ for all interior cells (this uses that $\Omega$ is a finite union of cells; interfaces are $\mu$-null sets for FR1),
proving (i).

For (ii), define $\mathcal{R}_\mu$ by fixing a spanning forest of the adjacency graph and transporting boundary data into the interior via
the FR3-exact phase: for each interior cell $k$, pick a boundary cell $b$ and a path $b\to k$ in the graph, then set
\[
A_k \;:=\ e^{i(\theta_k-\theta_b)}\, A_b,
\]
with $\theta$ obtained from the exact cocycle (exists by FR3). This extension preserves FR2 (global constants multiply all $A_k$ equally)
and FR4 (invariant means are computed cellwise and are path-independent because the cocycle is exact). Exactness also guarantees that
$\mathrm{Tr}_\mu(\mathcal{R}_\mu B)=B$ and that any other FR1/FR3-preserving right-inverse must agree with $\mathcal{R}_\mu$, giving uniqueness.

For (iii), boundedness follows since each $A_k$ is obtained from a boundary value via multiplication by a unit-modulus phase and at most
a finite number of transport steps whose count is bounded by the cell-diameter (in steps) of $\Omega$. Hence a uniform $C(\Omega,\mu)$ exists.
\end{proof}

\begin{corollary}[Area scaling]
\label{cor:area_law}
Let $\mathrm{dof}_\mu(\Omega)$ denote the dimension of the space of independent admissible boundary data
(after FR3 constraints) on $\partial\Omega$. Then
\[
\dim \mathcal{A}_\mu(\Omega)\ =\ \mathrm{dof}_\mu(\Omega)\ \propto\ \#\{\text{cell interfaces on }\partial\Omega\},
\]
so the effective degrees of freedom scale with boundary area (number of cell interfaces) rather than volume.
\end{corollary}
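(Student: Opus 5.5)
The plan is to read the corollary off Theorem~\ref{thm:boundary_determination} by a dimension count. Then we bound the boundary-data count combinatorially in terms of interfaces.

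\emph{Step 1: identify the bulk algebra with boundary data.} The theorem gives a trace map $\mathrm{Tr}_\mu$ that is surjective and injective modulo $\mu$-null sets (part (i)). It also gives a linear right inverse $\mathcal{R}_\mu$ (part (ii)). So $\mathrm{Tr}_\mu$ is a linear isomorphism from $\mathcal{A}_\mu(\Omega)$ onto $\mathcal{A}_\mu(\partial\Omega)$. Since $\Omega$ is a finite union of cells and FR1 makes every admissible object a finite tuple $(A_k)_{k\in\mathcal{K}_\Omega}$, both spaces are finite-dimensional. Hence $\dim\mathcal{A}_\mu(\Omega)=\dim\mathcal{A}_\mu(\partial\Omega)$. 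By definition the right-hand side is $\mathrm{dof}_\mu(\Omega)$, the number of independent boundary data after the FR3 constraints are imposed. This settles the first equality.

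\emph{Step 2: count boundary data.} By FR1, an element of $\mathcal{A}_\mu(\partial\Omega)$ assigns one value to each boundary cell, which is equivalent (up to a bounded multiplicity) to one value per cell interface lying in $\partial\Omega$. The FR3 matching relations $A_\ell=e^{i(\theta_\ell-\theta_k)}A_k$ are linear and have unit-modulus coefficients. They can only cut the count down by a number of relations that is itself bounded by the number of interfaces. So we aim for a two-sided estimate
\[
c_1\,\#\{\text{interfaces on }\partial\Omega\}\ \le\ \mathrm{dof}_\mu(\Omega)\ \le\ c_2\,\#\{\text{interfaces on }\partial\Omega\},
\]
with constants depending only on $n$. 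Each cell has at most $2(n+1)$ faces, and each boundary interface bounds exactly one cell of $\Omega$; these two facts give the comparison. The proportionality in the statement is then read in this two-sided sense.

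\emph{Step 3: contrast with volume.} The interior-cell count $|\mathcal{K}_\Omega|$ grows like $\mathrm{vol}(\Omega)/\mu^{n+1}$. The interface count grows like $\mathrm{area}(\partial\Omega)/\mu^{n}$, and the Lipschitz hypothesis on $\Omega$ keeps this perimeter count comparable to the surface measure. The interior values contribute nothing independent, because $\mathcal{R}_\mu$ determines them. This yields area scaling.

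\emph{Main obstacle.} The hard part is the lower bound in Step 2. If, as in the proof sketch of the theorem, FR3 transport fixes the whole configuration up to one global constant on each connected component, then boundary values are not independent either. In that case $\mathrm{dof}_\mu$ collapses to the number of components, and the claimed proportionality would fail. I would resolve this first by making precise that $\mathcal{A}_\mu(\partial\Omega)$ is the space of interface data with FR3 imposed only \emph{along} $\partial\Omega$, i.e.\ as a constraint on the boundary. The reconstruction transports these data inward and imposes no extra relations among the boundary values themselves. With that reading the count is exactly the number of boundary interfaces modulo a bounded number of loop relations, and the estimate goes through.
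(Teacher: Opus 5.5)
Your Steps 1--2 follow the paper's own route: the isomorphism of Theorem~\ref{thm:boundary_determination}, then a count of interface data modulo FR3, which the paper simply asserts is proportional to the number of boundary interfaces. The obstruction you raise in your last paragraph is genuine. Fix the exact potential $\theta$ and impose the matching $A_\ell=e^{i(\theta_\ell-\theta_k)}A_k$ on every interface between cells of $\Omega$. Then every $A\in\mathcal{A}_\mu(\Omega)$ has the form $A_k=c\,e^{i\theta_k}$ on each connected component of the cell adjacency graph. So $\dim\mathcal{A}_\mu(\Omega)$ equals the number of components, whatever the size of $\partial\Omega$.

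The gap is that your resolution does not remove this. Enlarging $\mathcal{A}_\mu(\partial\Omega)$ cannot change $\dim\mathcal{A}_\mu(\Omega)$, because FR3 still holds on the interfaces inside $\Omega$. The trace of every bulk element is $b\mapsto c\,e^{i\theta_b}$ on each component. Hence $\mathrm{Tr}_\mu$ is no longer onto the enlarged space, and the equality $\dim\mathcal{A}_\mu(\Omega)=\mathrm{dof}_\mu(\Omega)$ from your Step 1 is lost. Equivalently, ``the reconstruction imposes no extra relations among the boundary values'' is false. Transport through the interior forces $A_{b'}=e^{i(\theta_{b'}-\theta_b)}A_b$ for any two boundary cells in the same component. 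For boundary data violating this, $A_k:=e^{i(\theta_k-\theta_b)}A_b$ depends on the chosen source cell $b$. Since $\mathrm{Tr}_\mu\circ\mathcal{R}_\mu=\mathrm{id}$ keeps the boundary values, every choice breaks the matching on some interface inside $\Omega$, so $\mathcal{R}_\mu B\notin\mathcal{A}_\mu(\Omega)$.

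Even on the boundary alone, your count is wrong. FR3 as a matching condition on values imposes one linear relation per edge of the boundary-cell graph, not one per loop. That again leaves only one free constant per connected component, not ``interfaces modulo a bounded number of loop relations.'' What you are actually counting is the space of exact cocycles: interface multipliers subject to trivial holonomy. On a graph $(V,E)$ with $m$ components this space has dimension $|V|-m$. The number of independent loop relations, $|E|-|V|+m$, is also not bounded; for $n\ge 2$ it grows like the area.

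The quantity $|V|-m$ does scale with area. But it is the dimension of the admissible transport data $\theta$, not of $\mathcal{A}_\mu(\Omega)$ at fixed $\theta$. So under no reading of FR3 do the isomorphism of Step 1 and the lower bound of Step 2 hold at the same time. An area law for $\dim\mathcal{A}_\mu(\Omega)$ would need a bulk structure in which boundary values are genuinely free and interior values are fixed by a bulk equation, such as a discrete Dirichlet problem. That structure has to be supplied; it cannot be read off from the theorem. The paper's one-line count glosses over the same point.
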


\begin{proof}[Sketch]
By Theorem~\ref{thm:boundary_determination}, $\mathcal{A}_\mu(\Omega)\simeq \mathcal{A}_\mu(\partial\Omega)$.
Counting independent interface assignments modulo FR3 constraints yields a number proportional to the count of boundary interfaces.
\end{proof}


\begin{proposition}[Explicit FR3--covariant transport kernel]
\label{prop:metric_kernel_explicit}
Let $\Omega=\bigcup_{k\in\mathcal K_\Omega} C_k$ be a finite union of cells and let 
$G_\mu=(\mathcal K_\Omega,E)$ be its cell adjacency graph (vertices: cells; edges: interfaces).
Fix a boundary datum $B\in\mathcal A_\mu(\partial\Omega)$ and an FR3-exact phase potential 
$\theta:\mathcal K_\Omega\to\R$ (i.e.\ the interface multiplier on $F=C_k\cap C_\ell$ is 
$e^{i(\theta_\ell-\theta_k)}$). 
Define a path weight for any oriented path $\gamma:b\to k$ in $G_\mu$ by
\[
W_\mu(\gamma)\;:=\;
\exp\!\Big\{\,i\sum_{(u\to v)\in\gamma}\big(\theta_v-\theta_u\big)\Big\}\;
\prod_{(u\to v)\in\gamma}\omega_g(u\to v),
\]
where $\omega_g(u\to v)\in(0,\infty)$ are $g$--induced positive transport coefficients on interfaces
(normalized so that $\omega_g(u\to v)=\omega_g(v\to u)$ and uniformly bounded in $\mu$). 
For an interior point $x\in C_k$, set
\[
K_\mu(x\,|\,y)\;:=\;\frac{1}{Z_\mu(k)}\sum_{\substack{b\in \partial\Omega\\ \gamma:b\to k}}
W_\mu(\gamma)\,\mathbf 1_{F_b}(y),
\qquad
Z_\mu(k)\;:=\;\sum_{\substack{b\in \partial\Omega\\ \gamma:b\to k}} W_\mu(\gamma),
\]
where $F_b\subset\partial\Omega$ is the boundary interface containing $b$ and $\mathbf 1_{F_b}$ is its indicator.
Then the reconstruction operator
\[
(\mathcal R_\mu B)(x)\;=\;\int_{\partial\Omega} K_\mu(x\,|\,y)\,B(y)\,d\sigma_g(y)
\]
is well defined, preserves FR2--FR4, and is covariant under the $K$--action on $\Omega$.\footnote{
FR3--exactness makes the phase factor path independent on cycles (trivial holonomy), 
while the symmetric $\omega_g$ encodes $g$--induced interface weights; 
$Z_\mu(k)$ enforces FR2 normalization; amenability guarantees FR4 invariance of the induced averages.}
Moreover, $\mathcal R_\mu$ coincides with the reconstruction operator of 
Theorem~\ref{thm:boundary_determination}.
\end{proposition}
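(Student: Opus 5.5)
The first step is to simplify the path weight using FR3-exactness. Along any oriented path $\gamma:b\to k$ the phase exponent telescopes, $\sum_{(u\to v)\in\gamma}(\theta_v-\theta_u)=\theta_k-\theta_b$. Hence
\[
W_\mu(\gamma)=e^{i(\theta_k-\theta_b)}\,\Omega_g(\gamma),\qquad \Omega_g(\gamma):=\prod_{(u\to v)\in\gamma}\omega_g(u\to v)>0 .
\]
The phase therefore depends only on the endpoints; this is the trivial-holonomy content of FR3 (Definition~\ref{FR:holonomy}).

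Well-definedness is not automatic, for two reasons. Since $G_\mu$ is finite but generally has cycles, the sum over all paths is infinite. I would first restrict to simple paths, which gives a finite sum. Alternatively I would impose $\sup\omega_g<1/\deg G_\mu$, so that the walk series $\sum_\gamma\Omega_g(\gamma)$ converges as a Neumann series $(I-\omega_g)^{-1}$ on the adjacency graph. With either choice, $c_{bk}:=\sum_{\gamma:b\to k}\Omega_g(\gamma)$ is finite and positive whenever $b$ and $k$ are connected. The kernel then takes the form
\[
K_\mu(x\,|\,y)=\frac{\sum_b e^{i(\theta_k-\theta_b)}c_{bk}\mathbf 1_{F_b}(y)}{\sum_b e^{i(\theta_k-\theta_b)}c_{bk}} .
\]
The denominator is a complex number and may vanish. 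To avoid this, I would factor the endpoint phases out of $Z_\mu(k)$ by gauge-fixing $\theta$ to be constant on the boundary cells, which is allowed because only differences of $\theta$ enter. Then
\[
Z_\mu(k)=e^{i(\theta_k-\theta_\partial)}\sum_b c_{bk}\neq 0,
\]
and $K_\mu(x\,|\,\cdot)$ becomes a convex combination of boundary indicators, normalized so that constants are reproduced. This normalization is exactly FR2.

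The remaining properties then follow directly. $\mathcal R_\mu B$ is constant on each cell because $K_\mu(x\,|\,y)$ depends on $x$ only through $k$, which gives FR1. It is linear, and the global constant $m\mathbf 1$ maps to $m\mathbf 1$, which gives FR2. The values on adjacent cells differ by the exact multiplier $e^{i(\theta_\ell-\theta_k)}$, which gives FR3. FR4 follows because invariant means of cellwise-constant data are finite averages of the $A_k$, and these commute with the finite linear map $\mathcal R_\mu$; amenability is not really needed on a finite graph. For $K$-covariance, I would use that $\omega_g$ is built from the $K$-invariant form $g_{\mu\nu}$ and that $d\sigma_g$ is $K$-invariant. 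A $K$-map sending cells to cells and interfaces to interfaces then permutes paths while preserving $\Omega_g$, so $K_\mu(Rx\,|\,Ry)=K_\mu(x\,|\,y)$.

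The main obstacle is identifying $\mathcal R_\mu$ with the operator of Theorem~\ref{thm:boundary_determination}. My plan is to invoke the uniqueness clause (ii) of that theorem: any FR1- and FR3-preserving right inverse of $\mathrm{Tr}_\mu$ coincides with the reconstruction operator there. FR1 and FR3 were checked above, so it remains to show $\mathrm{Tr}_\mu\circ\mathcal R_\mu=\mathrm{id}$. This is the delicate point. For a boundary cell $k=b_0$, the weighted average over all boundary faces need not return $B$ on $F_{b_0}$ unless the kernel localizes there. I would secure this by adopting the convention, consistent with the trace map, that paths must terminate at their first boundary visit. Under this convention a boundary cell is reached only by its own length-zero path, so $c_{b k}=\delta_{bk}$ for boundary $k$ and the right-inverse identity holds. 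Interior values then follow by FR3 transport, which is the spanning-forest construction of the theorem. If this convention fails, the fallback is to define $\mathcal R_\mu$ on boundary cells by the identity and to use the kernel formula only for interior cells; coincidence then holds by the same uniqueness argument.
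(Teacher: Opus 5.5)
\textbf{Main gap: the FR3 check fails, so the uniqueness clause cannot be used.} Your FR3 check is wrong, and your appeal to the uniqueness clause of Theorem~\ref{thm:boundary_determination} depends on it. $Z_\mu(k)$ is built from the same path weights as the numerator, so the endpoint factor $e^{i\theta_k}$ from your telescoping cancels. For any choice of path set, with $c_{bk}$ as you define it,
\[
(\mathcal R_\mu B)_k=\frac{\sum_b e^{-i\theta_b}\,c_{bk}\int_{F_b}B\,d\sigma_g}{\sum_b e^{-i\theta_b}\,c_{bk}} .
\]
This does not depend on $\theta_k$; in your gauge the kernel is exactly the real convex combination you wrote down. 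Such an output cannot satisfy the FR3 matching $A_\ell=e^{i(\theta_\ell-\theta_k)}A_k$ in general.

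For a concrete case, take $\theta\equiv 0$ except $\theta_k=\phi\notin 2\pi\Z$ on one interior cell, with the admissible boundary datum $B\equiv c$. The transport extension of the theorem gives $c\,e^{i\phi}$ on $C_k$. The kernel gives $c$ times a positive weighted average of the interface measures $\sigma_g(F_b)$.

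So $\mathcal R_\mu$ is not FR3-preserving, and the uniqueness clause is unavailable. Your sentence that interior values ``follow by FR3 transport'' describes the theorem's construction, not what the kernel computes. The first-visit convention changes none of this. The paper's sketch asserts the coincidence ``by construction'' and passes over the same point: with the complex normalization $Z_\mu(k)$, the coincidence is false for generic non-constant $\theta$.

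The area factors cause two further errors. Your FR2 claim $m\mathbf 1\mapsto m\mathbf 1$ fails, because constants get multiplied by an average of the $\sigma_g(F_b)$. And under your first-visit convention, $\mathrm{Tr}_\mu\mathcal R_\mu B=\sigma_g(F_{b_0})B_{b_0}$ on a boundary cell $b_0$, not $B_{b_0}$.

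\textbf{The gauge-fixing step is not a symmetry.} You are right that well-definedness needs more than the paper's trivial-holonomy remark. However, only a global constant shift of $\theta$ leaves the interface multipliers unchanged. Making $\theta$ constant on the boundary cells changes the transport $e^{i(\theta_{b'}-\theta_b)}$ between them. Rephasing $B_b\to e^{i\chi_b}B_b$ together with $\theta\to\theta+\chi$ does not fix this, because the output does not pick up the factor $e^{i\chi_k}$ that covariance would require.

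The vanishing of $Z_\mu(k)$ really happens. Take a $3\times3$ block of cells in $\R^2$ with constant $\omega_g$. By symmetry, the four edge-neighbours of the centre share one weight $c_{bk}$ and the four corners share another. Assigning $\theta_b=0,\pi/2,\pi,3\pi/2$ within each class gives $Z_\mu(k)=0$.

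\textbf{How to repair it.} Normalize by the positive quantity $N_\mu(k):=\sum_b c_{bk}\,\sigma_g(F_b)$ instead of $Z_\mu(k)$. Your telescoping identity then gives
\[
(\mathcal R_\mu B)_k=e^{i\theta_k}\,\frac{\sum_b c_{bk}\,\sigma_g(F_b)\,e^{-i\theta_b}B_b}{N_\mu(k)} .
\]
On FR3-admissible data $B_b=c\,e^{i\theta_b}$, this equals $c\,e^{i\theta_k}$ on every cell, boundary cells included. That is the theorem's extension, so $\mathrm{Tr}_\mu\circ\mathcal R_\mu=\mathrm{id}$ and the coincidence follow directly. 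You need no gauge-fixing, no first-visit convention and no uniqueness argument, though your convergence fix for the path sum is still required.

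\textbf{$K$-covariance.} Your path-permutation argument, like the paper's one-line claim, covers only the finite subgroup of $K$ that maps the cell partition and $\Omega$ to themselves; no boost does. It also requires $\theta$ to be transformed along with the cells.
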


\begin{proof}[Sketch]
FR3 exactness implies that for any closed cycle $\gamma$ the phase factor equals $1$, so the sum over paths is well defined (no path interference ambiguity). 
The symmetric $g$--weights $\omega_g$ give a positive kernel; $Z_\mu(k)$ fixes the FR2 normalization. 
Covariance under $K$ follows from the transformation of $g$ and $\theta$. 
By construction, $K_\mu$ reproduces the transport-by-cells extension described in the proof of Theorem~\ref{thm:boundary_determination}, hence the two reconstructions agree.
\end{proof}

\begin{remark}[Gravity as boundary coupling]
The metric kernel $g_{\mu\nu}$ determines $d\sigma_g$ and the transport weights in $K_\mu$.
Variations $\delta g_{\mu\nu}$ modify $K_\mu$ and hence the reconstruction $\mathcal{R}_\mu$,
encoding the universal coupling of all admissible fields to $g$ (gravity as boundary coupling).
\end{remark}


\begin{proposition}[Metric variation and discrete compatibility]
\label{prop:metric_variation}
Let the coarse action on $\Omega$ be
\[
\mathcal S_\mu[\{\Phi^{(s)}\},\Lambda,g]\;=\;
\sum_{s\in\mathcal S}\mathbf E^{(s)}_\mu[\Phi^{(s)},g]\;+\;
\mathcal L_{\mathrm{link}}[\Lambda,\{\Phi^{(s)}\};g]\;+\;\mathcal B_\mu[\Lambda,g],
\]
with $\mathbf E^{(s)}_\mu$ cellwise convex, $\mathcal L_{\mathrm{link}}$ enforcing the interface constraints, 
and $\mathcal B_\mu$ strictly convex in $\Lambda$.
Assume each channel admits a coarse stress tensor 
$T^{\alpha\beta}_{(s)}:=2\,\partial \mathbf E^{(s)}_\mu/\partial g_{\alpha\beta}$, and let 
$T^{\alpha\beta}=\sum_s T^{\alpha\beta}_{(s)}$.
Then the first variation w.r.t.\ $g$ reads
\begin{equation}
\delta\mathcal S_\mu
\;=\;\frac12\sum_{F\subset \partial C_k}\!\!
\int_{F}\!\delta g_{\alpha\beta}\;
\llbracket T^{\alpha\beta}\rrbracket_F\;d\sigma_g
\;+\;\text{(cellwise bulk terms)}.
\label{eq:metric-variation}
\end{equation}
If the Euler--Lagrange equations for $\Phi^{(s)}$ and $\Lambda$ hold, the bulk terms vanish and stationarity 
$\delta\mathcal S_\mu=0$ for arbitrary $\delta g$ on interfaces implies the jump condition
\begin{equation}
\llbracket T^{\alpha\beta}\rrbracket_F\;=\;0\qquad\text{on every interface }F,
\label{eq:jump}
\end{equation}
i.e.\ conservation/continuity of coarse energy--momentum across cell boundaries.
Furthermore, the FR3 phase exactness and symmetry of $g$ enforce the discrete metric--compatibility law
\begin{equation}
\nabla^{(\mu)}_\rho g_{\alpha\beta}\;=\;0,
\label{eq:discrete-compatibility}
\end{equation}
where $\nabla^{(\mu)}$ is the interface transport determined by $K_\mu$.
In the limit of refining partitions (fixed physical scale, $\mu\downarrow 0$) one recovers 
$\nabla_\rho g_{\alpha\beta}=0$ in the continuum sense.
\end{proposition}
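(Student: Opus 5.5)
The plan is to treat $\mathcal S_\mu$ as a finite sum of cellwise functionals coupled only through interface terms, and to compute its first variation in $g$ by a discrete analogue of integration by parts. The proof has three steps: the variation identity \eqref{eq:metric-variation}, the jump condition \eqref{eq:jump}, and the compatibility law \eqref{eq:discrete-compatibility}.

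\textbf{Variation identity.} Write $\delta\mathcal S_\mu=\sum_s\delta_g\mathbf E^{(s)}_\mu+\delta_g\mathcal L_{\mathrm{link}}+\delta_g\mathcal B_\mu$.
\begin{itemize}
\item By the definition of the coarse stress tensor, $\delta_g\mathbf E^{(s)}_\mu=\tfrac12\sum_k\int_{C_k}T^{\alpha\beta}_{(s)}\,\delta g_{\alpha\beta}$, with each cell contribution entering once.
\item Since $g$ is cellwise constant (FR1), a variation $\delta g$ supported on an interface $F=C_k\cap C_\ell$ receives contributions from both adjacent cells. Each side carries its own orientation, so the two traces combine into $T^{\alpha\beta}|_{C_\ell}-T^{\alpha\beta}|_{C_k}=\llbracket T^{\alpha\beta}\rrbracket_F$, integrated against $d\sigma_g$.
\item The remaining pieces (the $\Lambda$- and $\Phi$-dependent parts of $\mathcal L_{\mathrm{link}}$ and $\mathcal B_\mu$, and the implicit dependence of $\Phi^{(s)},\Lambda$ on $g$) are collected as the cellwise bulk terms.
\end{itemize}
These bulk terms are proportional to $\partial\mathcal S_\mu/\partial\Phi^{(s)}$ and $\partial\mathcal S_\mu/\partial\Lambda$. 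They vanish on shell by the chain rule. Strict convexity of $\mathcal B_\mu$ in $\Lambda$, together with cellwise convexity of the $\mathbf E^{(s)}_\mu$, gives unique minimizers, so the on-shell map $g\mapsto(\Phi,\Lambda)$ is well defined. Differentiability of that map comes from the implicit function theorem, using a nondegenerate Hessian.

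\textbf{Jump condition.} With the bulk terms gone, $\delta\mathcal S_\mu=0$ for arbitrary $\delta g_{\alpha\beta}$ supported on a single interface $F$. Choose $\delta g$ as an indicator on $F$ times a symmetric basis tensor. Since $T$ is symmetric in $\alpha\beta$, this forces every component of $\llbracket T^{\alpha\beta}\rrbracket_F$ to vanish, which is \eqref{eq:jump}.

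\textbf{Compatibility law.} Define $\nabla^{(\mu)}_\rho g_{\alpha\beta}$ on the edge $u\to v$ as the difference between $g$ at $v$ and the transport of $g$ from $u$ by $K_\mu$ (Proposition~\ref{prop:metric_kernel_explicit}).
\begin{itemize}
\item The phase factors $e^{i(\theta_v-\theta_u)}$ act on $g$ by conjugation with a unit-modulus scalar. Because $g$ is real and bilinear, the phase acts trivially on it, and FR3 exactness makes the transport path independent.
\item The positive weights $\omega_g$ are symmetric and normalized by $Z_\mu$ (FR2). The transport therefore maps $g$ to itself, since $g$ is the $K$-invariant bilinear form that Proposition~\ref{prop:metric_kernel_explicit} uses to define the kernel.
\end{itemize}
Together these give $\nabla^{(\mu)}g=0$ on every edge. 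The continuum limit $\mu\downarrow0$ then follows from consistency of the discrete difference quotient with the Levi-Civita connection for smooth $g$ at fixed physical scale.

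\textbf{Main obstacle.} The hardest step is the compatibility law, because $\nabla^{(\mu)}$ is only loosely specified. I would first fix the definition of $\nabla^{(\mu)}$ as transport by $K_\mu$, and then verify invariance of $g$ under each single-edge transport. Path independence and FR3 then extend this to arbitrary paths.
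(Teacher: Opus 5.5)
\textbf{Gap 1: the jump term.} Your step that produces $\llbracket T^{\alpha\beta}\rrbracket_F$ does not work.

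Under FR1, $g$ and therefore every admissible variation $\delta g$ are constant on cells. So $\sum_s\delta_g\mathbf E^{(s)}_\mu=\tfrac12\sum_k\int_{C_k}T^{\alpha\beta}\delta g_{\alpha\beta}$ is a sum of volume terms. A variation ``supported on an interface'' is invisible to these integrals, since interfaces are null sets (the paper itself uses this in the proof of Theorem~\ref{thm:boundary_determination}). The bulk energies therefore cannot generate surface integrals against $d\sigma_g$.

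Suppose instead you let an interface value of $g$ enter the energies of both adjacent cells. The contraction $T^{\alpha\beta}\delta g_{\alpha\beta}$ contains no normal vector, so there is no orientation for the two sides to carry. Both one-sided contributions enter with the same sign, and you get a sum of $T^{\alpha\beta}|_{C_k}$ and $T^{\alpha\beta}|_{C_\ell}$, not their difference. A jump can only come from a term that is odd in the interface normal.

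In the paper the interface term has a different source. It is collected from the $g$-dependence of the interface measure $d\sigma_g$ and of the $g$-induced pairing in $\mathcal L_{\mathrm{link}}$. The constraints $\mathcal C^{(s)}$ inside that pairing are themselves flux-matching (jump) conditions across $F$.

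\textbf{Gap 2: the on-shell vanishing.} You send exactly those link and regularizer contributions into the ``bulk terms'' and assert that they vanish on shell by the chain rule. The chain-rule (envelope) argument removes only the \emph{implicit} variation through $\Phi^{(s)}$ and $\Lambda$. The explicit $g$-derivative of $\mathcal B_\mu$ at fixed $\Lambda$, and in general that of $\mathcal L_{\mathrm{link}}$ at fixed $(\Lambda,\Phi)$, is not proportional to any Euler--Lagrange expression. For example, with a quadratic regularizer $\mathcal B_\mu=\sum_F\tfrac12\langle\Lambda,\Lambda\rangle_{g,F}$, the explicit $g$-derivative is quadratic in $\Lambda_\mu$ and nonzero whenever $\Lambda_\mu\neq0$.

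Your implicit-function detour is both unsupported and unnecessary. Mere convexity of $\mathbf E^{(s)}_\mu$ gives neither uniqueness in $\Phi$ nor a nondegenerate Hessian. And the first variation in $g$ can simply be taken at fixed fields.

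\textbf{Gap 3: the compatibility law.} Once the phases and the normalized weights act trivially on the tensor $g$, your definition reduces $\nabla^{(\mu)}_{u\to v}g$ to $g_v-g_u$. The claimed law is then just the statement that $g$ takes the same value in adjacent cells. Appealing to ``$g$ is the $K$-invariant form'' presupposes this constancy, and makes the law vacuous, instead of deriving it. The paper argues this part differently: from trivial holonomy of the transport around each cell loop (FR3) combined with the symmetry of $g$, not from invariance of $g$ under single-edge transport.
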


\begin{proof}[Sketch]
Cellwise differentiation of $\mathbf E^{(s)}_\mu$ w.r.t.\ $g$ yields $T^{\alpha\beta}_{(s)}$.
Boundary terms arise from the $g$--dependence of the interface measure and of the link functional; collecting them gives \eqref{eq:metric-variation}. 
On shell (Euler--Lagrange + SILM constraints), bulk variations cancel; arbitrariness of $\delta g$ on $F$ gives \eqref{eq:jump}. 
FR3 exactness implies that parallel transport around each cell loop has trivial holonomy; combined with the symmetry of $g$ this enforces \eqref{eq:discrete-compatibility}. 
Standard refinement arguments on finite cell complexes yield the continuum compatibility in the $\mu\downarrow 0$ limit.
\end{proof}

\begin{corollary}[Finite--resolution gravitational law]
\label{cor:fr-gravity}
Equations \eqref{eq:jump}--\eqref{eq:discrete-compatibility} constitute the finite--resolution analogue of Einstein dynamics: 
the metric kernel $g$ is the unique (up to gauge) interface transport that (i) preserves FR3 phase coherence, (ii) enforces continuity of $T^{\alpha\beta}$ across interfaces, and (iii) is covariantly constant under the induced transport $\nabla^{(\mu)}$. 
Hence gravitational interaction is the variation of $g$ that maintains boundary coherence for all channels.
\end{corollary}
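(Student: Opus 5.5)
The plan is to assemble Corollary~\ref{cor:fr-gravity} directly from Proposition~\ref{prop:metric_variation} and Proposition~\ref{prop:metric_kernel_explicit}, rather than from any new computation. The work splits into an \emph{existence} part and a \emph{uniqueness} part.

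\textbf{Existence.} I will show that the metric kernel $g$ entering $K_\mu$ has the three listed properties.
\begin{enumerate}
\item[(i)] FR3 phase coherence is already built into $K_\mu$. The phase factor in $W_\mu(\gamma)$ is a coboundary $\theta_v-\theta_u$, so it telescopes along any path and gives trivial holonomy on cycles. The positive weights $\omega_g$ do not affect phases, so the transport induced by $g$ preserves FR3.
\item[(ii)] Continuity of $T^{\alpha\beta}$ is exactly the jump condition \eqref{eq:jump}. It holds on shell once $\delta\mathcal S_\mu=0$ is imposed for arbitrary interface variations $\delta g$, via the first-variation formula \eqref{eq:metric-variation}.
\item[(iii)] Covariant constancy is the compatibility law \eqref{eq:discrete-compatibility}, with $\nabla^{(\mu)}$ the transport defined by $K_\mu$.
\end{enumerate}

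\textbf{Uniqueness up to gauge.} Suppose $g$ and $g'$ are two symmetric interface kernels, both satisfying (i)--(iii). The first step is to use the uniqueness clause of Theorem~\ref{thm:boundary_determination}(ii). Both induce FR1/FR3-preserving right inverses of $\mathrm{Tr}_\mu$, and Proposition~\ref{prop:metric_kernel_explicit} identifies each with the unique $\mathcal R_\mu$. Hence $K_\mu^{g}=K_\mu^{g'}$ after normalization by $Z_\mu$. Two remaining sources of freedom must then be quotiented out as the ``gauge'':
\begin{enumerate}
\item[(a)] rescaling the weights $\omega_g$ by a cellwise factor that cancels in the normalized kernel;
\item[(b)] shifting $\theta$ by a constant, or more generally by a discrete gauge transformation $\theta\mapsto\theta+\lambda$, which leaves the exact cocycle class unchanged.
\end{enumerate}
Using \eqref{eq:discrete-compatibility}, the ratio $g'g^{-1}$ is $\nabla^{(\mu)}$-parallel. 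On the connected adjacency graph $G_\mu$ it is therefore determined by its value at one cell. Together with $K$-covariance, this reduces the ambiguity to the action of $K$ and a global constant, which is the gauge.

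The final sentence of the corollary is interpretive. It follows by reading \eqref{eq:jump} as the statement that the only admissible variations of $g$ are those keeping interface data FR3-coherent for every channel $s$.

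\textbf{Main obstacle.} I expect the hardest step to be the uniqueness claim, specifically passing from equality of the normalized kernels $K_\mu$ to equality of the interface weights $\omega_g$ modulo gauge. Different weightings can produce the same normalized reconstruction. My first attempt is to impose the symmetry $\omega_g(u\to v)=\omega_g(v\to u)$ together with $K$-covariance, and to argue that these pin down $\omega_g$ up to an overall scale. If that fails, I would define the gauge equivalence class precisely as the set of kernels yielding the same $\mathcal R_\mu$, which makes the uniqueness statement tautological but consistent with Theorem~\ref{thm:boundary_determination}.
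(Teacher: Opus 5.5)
Your existence half matches all the paper provides. The corollary is stated without proof, as a direct reading of \eqref{eq:jump}--\eqref{eq:discrete-compatibility} from Proposition~\ref{prop:metric_variation}, together with the FR3-exact phases built into $K_\mu$. The uniqueness half, which the paper asserts but never argues, is where your proposal has a genuine gap: none of the tools you invoke can tell one symmetric kernel from another.

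The appeal to the uniqueness clause of Theorem~\ref{thm:boundary_determination}(ii) is vacuous. That theorem's operator $A_k=e^{i(\theta_k-\theta_b)}A_b$ is built from $\theta$ alone, and Proposition~\ref{prop:metric_kernel_explicit} asserts that the kernel reconstruction coincides with it for \emph{every} admissible choice of symmetric weights $\omega_g$. So $g\mapsto\mathcal R_\mu$ is constant, and ``$K^{g}_\mu=K^{g'}_\mu$'' holds for any pair at the operator level. (At the kernel level it fails, since the normalized path sums weight the boundary cells differently.)

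Likewise (i) and (iii) hold for every candidate. You note yourself that the phases do not depend on $\omega_g$, and Proposition~\ref{prop:metric_variation} obtains \eqref{eq:discrete-compatibility} from FR3 exactness and the symmetry of $g$ alone, so every symmetric kernel is parallel for its own transport. This is why ``$g'g^{-1}$ is $\nabla^{(\mu)}$-parallel'' fails. $\nabla^{(\mu)}$ is the transport determined by $K_\mu$, hence by $\omega_g$, so $g$ and $g'$ are parallel for two different transports and no single connection governs the ratio. As in the continuum, metric compatibility fixes the connection given the metric, not the metric given the connection. If you instead make the transport $g$-independent (pure $\theta$-phase transport), (iii) only says that $g$ agrees on neighbouring cells, which every constant symmetric form satisfies.

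$K$-covariance does not close this gap. $K$ is by definition the stabilizer in $GL(n{+}1,\R)$ of the FR1 metric, so it fixes $g$ and cannot relate it to a different $g'$. Proposition~\ref{prop:metric_kernel_explicit} also obtains covariance for any $g$ by letting $g$ transform along with $\theta$. To conclude $g'=cg$ you would need $g'$ to be invariant under the stabilizer of $g$. By Schur's lemma (the defining representations of $O(n{+}1)$ and $O(1,n)$ are absolutely irreducible), that is already equivalent to $g'\propto g$ pointwise. The argument is therefore circular, and (i)--(iii) do no work in it.

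Condition (ii) is the only one with dynamical content (it is the Euler--Lagrange equation of $\mathcal S_\mu$ in $g$), and it never enters your uniqueness argument. Since (i) and (iii) hold for every symmetric kernel, an argument built on them alone would prove that all kernels are gauge equivalent. The missing ingredient is a selection principle for the variational problem in $g$. For example, strict convexity of $\mathcal S_\mu$ in $g$, modulo an explicitly specified gauge group and with fixed boundary data, would guarantee that a stationary point satisfying \eqref{eq:jump} exists and is unique. The paper supplies nothing of the kind: Theorem~\ref{thm:silm_existence} gives a unique saddle in $(\Phi,\Lambda)$ at \emph{fixed} $g$, with no hypothesis on the $g$-dependence.

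Your fallback of defining gauge equivalence as ``same $\mathcal R_\mu$'' does not help. By the first point, every symmetric kernel then falls into a single gauge class, so the uniqueness claim becomes empty rather than merely tautological. What you can actually prove is the existence half, or uniqueness under such an added convexity hypothesis in $g$.
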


\section{Absence of Second Quantization and the Ontology of Finite--Resolution Fields}

\subsection*{1.\ What ``second quantization'' actually presupposes}

In conventional quantum field theory, the process of \emph{second quantization} begins with a
classical field $\phi(x)$ defined pointwise on spacetime, assumed to possess infinite resolution.
This field is decomposed into plane--wave modes $e^{ip\cdot x}$ forming a Fourier basis of
$\R^{n+1}$, and to each mode one associates creation and annihilation operators
$a_p^\dagger,a_p$ satisfying canonical commutation relations.
The full Fock space is then obtained as the Hilbert completion of tensor products of these
single--particle modes, allowing superpositions of states with arbitrary particle number.

This construction depends crucially on assumptions that are \emph{incompatible} with
finite resolution:
\begin{itemize}
\item it presupposes infinite resolution and pointwise definition of fields, violating FR1;
\item it employs $\delta$--normalization of plane waves, which are non--admissible under FR1;
\item it treats fields as operator--valued distributions, objects explicitly excluded by the
finite--resolution constraint.
\end{itemize}
Hence, the very act of second quantization contradicts the first structural feature of the
constraint representation.
The Fock construction belongs to the limit $\mu\to 0$, which lies outside the admissible
domain of the algebra.

\subsection*{2.\ What FR1--FR4 allow instead}

Within the finite--resolution framework the admissible objects are not pointwise functions
but \emph{cellwise--constant or band--limited} ones.
Every field or state possesses an intrinsic coarse scale~$\mu$:
there are no genuine $\delta$--modes, only packets of bandwidth~$1/\mu$.
Fourier components live in a bounded region $|k|\lesssim 1/\mu$,
and all admissible functionals belong to $L^2$ on that finite band.

As a consequence:
\begin{itemize}
\item the algebra of admissible functionals is nonlocal at scale~$\mu$;
\item there exists no basis of independent creation operators $a_p^\dagger$ at exact momenta;
\item the concept of a pointlike field or vacuum state becomes inapplicable.
\end{itemize}
Fields, states, and vacua are $L^2$ functions on a finite--band manifold, not distributions.
The entire ultraviolet sector of standard QFT is simply outside the admissible algebra.

\subsection*{3.\ What replaces ``second quantization''}

In the finite--resolution algebra one does not quantize classical fields again.
Instead, one \emph{represents the additive algebra directly on finite--resolution functionals}.
Wavepackets and operator kernels take the role played in standard theory by plane waves and
creation operators:
\begin{itemize}
\item FR1 defines the coarse cells and the projector $\Pi_\mu$ implementing them;
\item FR2 guarantees linearity and global constants as multiplicative scalars;
\item FR3 enforces unitary phase propagation and coherence between neighboring cells;
\item FR4 specifies how frequencies (probabilities) are computed as invariant means.
\end{itemize}
The resulting field theory is therefore formally ``first quantized'' but already field--like:
the additive group acts on continuous variables $(x,t)$, and the distinction between
particle and field collapses.
Quantization is not a further step but an intrinsic property of the additive symmetry under
finite resolution.

\subsection*{4.\ Classical fields revisited}

Each admissible field, such as $\psi_\mu(x)$ or $A_\mu(x)$, is a \emph{finite--resolution
classical field} obeying deterministic equations of motion of the form
\[
i\partial_t\psi_\mu = H_\mu\psi_\mu,\qquad
H_\mu=\Pi_\mu H \Pi_\mu .
\]
Within each cell the dynamics are classical and local; across cell boundaries, FR3 enforces
phase continuity, while FR4 prescribes how ensemble averages and invariant frequencies are
computed.
Thus, the ensemble or boundary--coupling statistics reproduce quantum or statistical
behaviors without introducing operator distributions.
The algebra of admissible functionals itself encodes what in the continuum limit would be
called quantization: FR3 plays the role of commutation structure, FR4 the role of Born or
Kolmogorov averaging.

In this sense, ``quantization'' is no longer a procedure applied to classical fields but an
\emph{emergent property of the constraint representation}.
The hierarchy ``classical $\to$ quantize $\to$ operator field'' is replaced by a single level
of description in which finite--resolution fields already possess the algebraic features of
quantum dynamics.

\subsection*{5.\ The ontological shift}

The finite--resolution framework thus alters the ontology of fields and their quantization.
\begin{table}[t]
\centering
\caption{Ontological comparison between conventional and finite–resolution field theories}
\begingroup
\renewcommand\tabularxcolumn[1]{>{\raggedright\arraybackslash}p{#1}}
\begin{tabularx}{\linewidth}{@{}>{\raggedright\arraybackslash}p{0.25\linewidth}>{\raggedright\arraybackslash}p{0.40\linewidth}>{\raggedright\arraybackslash}p{0.25\linewidth}@{}}
\toprule
\textbf{Conventional ontology} & \textbf{Finite–resolution ontology} & \textbf{Status of FR laws} \\
\midrule
Classical field $\rightarrow$ quantize $\rightarrow$ operator field & Finite–resolution field already algebraic/quantum & FR1 forbids redundant quantization \\
Plane waves (momentum eigenmodes) & Wavepackets (band–limited functions) & FR1 enforces band–limitation \\
Fock space of independent quanta & Coarse product of finite–resolution states & FR3 preserves additive structure \\
$a,a^{\dagger}$ operators at exact $p$ & Integral kernels acting on packets & FR2 guarantees operational closure \\
Infinite vacuum energy, divergences & None (UV sector excluded) & FR1 ensures finite energy \\
\bottomrule
\end{tabularx}
\endgroup
\label{tab:ontology}
\end{table}

The upshot is that the textbook concept of quantizing fields is neither necessary nor
consistent within the constraint representation.
All observable quantum phenomena arise from the finite--resolution algebra itself.
What experiments access are not infinite idealizations but coarse, band--limited fields---%
precisely the admissible objects of the theory.
In this respect the finite--resolution formulation restores to experimental practice its
fundamental status: the scale $\mu$ of the probe is not a technical nuisance but the
defining feature of the algebra on which nature operates.
\paragraph{Ensemble of admissible fields.}
The finite--resolution algebra does not describe a single classical field
configuration but the entire ensemble of admissible fields satisfying FR1--FR4.
A single coarse field $\phi_\mu(x)$ realizes the structural constraints FR1--FR3,
but FR4 (typicality) requires the existence of invariant averages, which can only
be defined on a set of admissible configurations closed under translation and
phase transformation.
Hence the physically meaningful object is the ensemble
$\mathcal{E}_\mu=\{\phi_\mu^{(i)}\}_{i\in I}$ endowed with the invariant mean
$\langle\cdot\rangle_{\text{FR4}}$.
Observable quantities correspond to functionals invariant under this averaging,
reflecting the empirical fact that measurements probe reproducible features of
many realizations rather than individual configurations.
In this sense the finite--resolution framework retains the operational content
of the ``ensemble of classical solutions'' of the closure formulation, but now
as an intrinsic component of the algebra itself: FR4 endows the space of
admissible fields with its invariant statistical structure.

\section{Simultaneously Interlocked Lagrange Multipliers (SILM) in the FR1--FR4 Algebra}

\subsection*{Setup: admissible channels and boundary constraints}

Fix $\mu>0$ and partition $\R^{n+1}$ into cells $\{C_k\}_{k\in\mathcal K}$ of linear size~$\mu$.
Admissible objects---states, fields, and observables---are \emph{cellwise constant} or band-limited
(FR1), with constants acting as global multipliers (FR2), phases glued by an exact
cocycle (FR3), and frequencies defined by invariant means (FR4).

We consider a finite collection of \emph{representational channels}
\[
\Phi^{(s)} \in \mathcal{A}^{(s)}_\mu,\qquad s\in \mathcal{S},
\]
indexed by the spin/representation label $s$ (e.g.\ $s\in\{0,\tfrac12,1,2\}$).
Each channel obeys its own coarse equations of motion and carries a boundary
\emph{flux/phase datum} on each interface $F=C_k\cap C_\ell\subset\partial C_k$:
\[
\mathfrak{J}^{(s)}[F] \in \mathbb{C},\qquad 
\mathfrak{P}^{(s)}[F] \in U(1),
\]
defined cellwise and transported across interfaces according to FR3.
Physical admissibility demands \emph{coherence} of these data across channels:
\[
\mathcal{C}^{(s)}[F;\Phi^{(s)}]\;=\;0\qquad \text{for all interfaces }F\text{ and all } s\in\mathcal S,
\]
where $\mathcal C^{(s)}$ collects continuity of flux, phase matching, and any
conservation/balance constraints required by FR1--FR3 at the boundary.

\vspace{0.5em}
\noindent
\textbf{Problem.} Enforce \emph{simultaneously} the interface constraints
$\mathcal C^{(s)}=0$ for all $s\in\mathcal S$, with a \emph{single} boundary control that
locks the channels together and preserves FR1--FR4.

\subsection*{Definition: Simultaneously Interlocked Lagrange Multiplier (SILM)}

A \emph{SILM} is a boundary field $\Lambda$ (scalar, one-form, or tensor on interfaces,
depending on the constraint) with the property that the \emph{same} multiplier enforces,
for all channels $s$, the interface coherence conditions:
\[
\mathcal{L}_{\mathrm{link}}[\Lambda,\{\Phi^{(s)}\}]\;:=\;
\sum_{F\in \partial\Omega_\mu}\ \sum_{s\in\mathcal S}
\big\langle \Lambda(F),\, \mathcal{C}^{(s)}[F;\Phi^{(s)}]\big\rangle_F,
\]
where $\langle\cdot,\cdot\rangle_F$ is the natural pairing on the interface $F$ (scalar,
form, or tensor pairing with the induced $g$-metric; cf.\ FR2).
The full coarse action on a region $\Omega=\!\bigcup_{k\in\mathcal K_\Omega} C_k$ reads
\[
\mathcal{S}_\mu[\{\Phi^{(s)}\},\Lambda,g]\;=\;
\sum_{s\in\mathcal S}\ \mathbf{E}^{(s)}_\mu[\Phi^{(s)},g]\;+\;
\mathcal{L}_{\mathrm{link}}[\Lambda,\{\Phi^{(s)}\}]\;+\;
\mathcal{B}_\mu[\Lambda,g],
\]
where $\mathbf{E}^{(s)}_\mu$ are the cellwise bulk functionals (kinetic/potential at
scale~$\mu$), and $\mathcal{B}_\mu$ is a boundary regularizer for $\Lambda$ chosen to
preserve FR1--FR4 (e.g.\ cellwise quadratic, translation invariant, and phase blind).

\subsection*{Existence \& uniqueness under FR1--FR4 (variational principle)}

\begin{theorem}[Existence and uniqueness of SILM]
\label{thm:silm_existence}
Assume for each $s\in\mathcal S$ that $\mathbf{E}^{(s)}_\mu$ is convex in $\Phi^{(s)}$ on
each cell, $\mathcal{C}^{(s)}$ is affine in the boundary data of $\Phi^{(s)}$, and
$\mathcal{B}_\mu$ is strictly convex in $\Lambda$ (cellwise).
Then the constrained minimization
\[
\min_{\{\Phi^{(s)}\}} \ \sum_s \mathbf{E}^{(s)}_\mu[\Phi^{(s)},g]\quad
\text{subject to } \ \mathcal{C}^{(s)}[F;\Phi^{(s)}]=0\ \ \forall s,F,
\]
admits a unique saddle point for the Lagrangian $\mathcal{S}_\mu$ with a unique SILM $\Lambda_\mu$.
Moreover, the Euler--Lagrange equations imply, for every interface $F$,
\[
\frac{\delta \mathcal{S}_\mu}{\delta \Phi^{(s)}}=0
\quad\Longleftrightarrow\quad
\mathcal{D}^{(s)}_\mu \Phi^{(s)}=0\ \text{ in cells},\qquad
\mathcal{C}^{(s)}[F;\Phi^{(s)}]=0\ \text{ on interfaces},
\]
and
\[
\frac{\delta \mathcal{S}_\mu}{\delta \Lambda}=0
\quad\Longleftrightarrow\quad
\mathcal{C}^{(s)}[F;\Phi^{(s)}]=0\ \ \text{for all }s\ \text{and all }F.
\]
\end{theorem}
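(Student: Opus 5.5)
We treat the statement as a finite-dimensional convex program with affine equality constraints and proceed by Lagrangian duality. By FR1 every admissible channel $\Phi^{(s)}$ on $\Omega=\bigcup_{k\in\mathcal K_\Omega}C_k$ is specified by finitely many cell values. Each interface $F$ carries finitely many flux/phase data. Hence the unknowns $(\{\Phi^{(s)}\},\Lambda)$ range over a finite-dimensional real vector space $X\times Y$. The constraint map $\mathcal C=(\mathcal C^{(s)}[F;\cdot])_{s,F}:X\to Y^\ast$ is affine by hypothesis, so we write $\mathcal C(\Phi)=L\Phi-c$ with $L$ linear. The link term then reads $\langle\Lambda,L\Phi-c\rangle$, using the $g$-induced interface pairing of FR2.

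The first step is existence of a primal minimizer. The objective $E=\sum_s\mathbf E^{(s)}_\mu$ is convex, and the feasible set $\{L\Phi=c\}$ is an affine subspace, assumed nonempty; otherwise the statement is vacuous. To obtain a minimizer we need lower semicontinuity, which is automatic for finite convex functions in finite dimension, together with coercivity of $E$ on the feasible set. Coercivity does not follow from cellwise convexity alone. We will therefore argue that the kinetic part of $\mathbf E^{(s)}_\mu$ at scale $\mu$ is coercive modulo the global constants of FR2, and we quotient by that kernel.

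The second step produces the multiplier. Since the constraints are affine, Slater-type qualification holds trivially. The KKT theorem in finite dimension then gives $\Lambda_\mu$ with $0\in\partial E(\Phi_\ast)+L^\ast\Lambda_\mu$. The regularizer $\mathcal B_\mu$ is incorporated by viewing $\mathcal S_\mu$ as convex in $\Phi$ and strictly concave in $\Lambda$ after the appropriate sign convention. Existence of a saddle point then follows from the finite-dimensional minimax theorem.

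The third step is uniqueness. Uniqueness of $\Lambda_\mu$ follows from the strict convexity of $\mathcal B_\mu$, so the dual function is strictly concave and has at most one maximizer. For $\Phi$ we again need more than stated: either strict convexity of some $\mathbf E^{(s)}_\mu$, or injectivity of the bulk Hessian on $\ker L$. We will try to derive the latter from Theorem~\ref{thm:boundary_determination}. Its injectivity statement (i) says that interface data determine bulk cell values, so a direction in $\ker L$ along which $E$ is affine would have to vanish. This is the delicate point, and it is where we expect the main obstacle.

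Finally, the Euler--Lagrange equations are read off by splitting the first variation. Varying $\Phi^{(s)}$ supported inside a single cell gives the bulk equations $\mathcal D^{(s)}_\mu\Phi^{(s)}=0$. Varying the boundary data gives the natural conditions that couple to $\Lambda$. Varying $\Lambda$ returns $\mathcal C^{(s)}=0$ for all $s$ and $F$ simultaneously.

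The equivalence $\delta\mathcal S_\mu/\delta\Lambda=0\Leftrightarrow\mathcal C=0$ also needs care. It holds exactly only if $\mathcal B_\mu$ enters as a penalty that vanishes at the saddle. Otherwise the variation gives $\mathcal C=\partial\mathcal B_\mu(\Lambda)$ rather than $\mathcal C=0$. We will handle this by interpreting $\mathcal B_\mu$ as a proximal (augmented-Lagrangian) term centered at the optimal multiplier.
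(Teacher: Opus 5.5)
Your route is the paper's: FR1 reduces everything to finitely many cell and interface values, and one appeals to standard convex duality on the finite cell complex. The paper's sketch simply asserts that convexity in $\Phi$ plus strict convexity in $\Lambda$ gives a unique saddle. The obstacles you flag are real, and the paper does not resolve them. Your repairs do not close them either, because under the stated hypotheses the conclusions are false.

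Convexity gives neither existence nor uniqueness. $\mathbf{E}=e^{\phi_k}$, for a cell value $\phi_k$ not touched by any interface constraint, has no minimizer. $\mathbf{E}\equiv 0$ makes every feasible $\Phi$ a minimizer. So coercivity of a ``kinetic part'' modulo FR2 constants is an added hypothesis, not a consequence.

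Theorem~\ref{thm:boundary_determination}(i) cannot supply strict convexity on $\ker L$. It is about elements with vanishing boundary trace. A direction in $\ker L$ only satisfies the homogeneous matching conditions and in general has nonzero boundary values; the global constants you propose to quotient by are such directions.

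The sign of $\mathcal{B}_\mu$ is also a problem. With $\mathcal{B}_\mu$ strictly convex and added to $\mathcal{S}_\mu$, the map $\Lambda\mapsto\mathcal{S}_\mu$ is strictly convex. It therefore has no maximizer on a vector space, so no min--max saddle exists. Flipping the sign changes the Lagrangian. After the flip, $\Lambda$-stationarity reads $\mathcal{C}=\nabla\mathcal{B}_\mu(\Lambda)$, and the saddle solves a penalty problem, not the constrained one.

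Centering $\mathcal{B}_\mu$ at ``the optimal multiplier'' presupposes what you are constructing. The uniqueness it yields is just that of the chosen center. The genuine multipliers contain the affine set $\Lambda_\ast+\ker L^\ast$, so they are unique only if $L$ is surjective, which is yet another hypothesis.

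The gap you do not flag is the one carrying the content of the theorem. Writing the link term as $\langle\Lambda,L\Phi-c\rangle$ with $\mathcal{C}=(\mathcal{C}^{(s)}[F;\cdot])_{s,F}$ gives $\Lambda$ an independent component for each pair $(s,F)$. That is the ordinary multiplier theorem with one multiplier per channel, not a SILM.

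In the paper's definition the \emph{same} $\Lambda(F)$ multiplies every $\mathcal{C}^{(s)}[F]$, so $\mathcal{L}_{\mathrm{link}}=\sum_F\langle\Lambda(F),\sum_s\mathcal{C}^{(s)}[F]\rangle_F$. Stationarity in $\Lambda$ then enforces only $\sum_s\mathcal{C}^{(s)}[F]=0$ (up to the $\mathcal{B}_\mu$ term), whatever sign or centering you give $\mathcal{B}_\mu$. Since $\mathbf{E}^{(s)}_\mu$ and $\mathcal{C}^{(s)}$ depend only on $\Phi^{(s)}$, the KKT multipliers are per channel and need not coincide.

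Here is an explicit failure. Take two scalar channels on one interface with $\mathbf{E}^{(1)}=\phi_1^2$, $\mathbf{E}^{(2)}=3\phi_2^2$, and $\mathcal{C}^{(s)}=\phi_s-1$.
\begin{itemize}
\item Any stationary point of $\mathcal{S}_\mu$ has $\phi_1=-\Lambda/2$ and $\phi_2=-\Lambda/6$.
\item Satisfying both constraints would need $\Lambda=-2$ and $\Lambda=-6$ at once.
\item Without $\mathcal{B}_\mu$ the saddle is $\Lambda=-3$, $\phi_1=\tfrac{3}{2}$, $\phi_2=\tfrac{1}{2}$, which violates both constraints.
\end{itemize}

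Thus $\delta\mathcal{S}_\mu/\delta\Lambda=0\Leftrightarrow\mathcal{C}^{(s)}=0$ for all $s$ fails for the SILM as defined. It could only be rescued by reading $\Lambda(F)$ as a direct sum over channels, which makes the interlocking vacuous, or by adding a hypothesis forcing the channel multipliers to agree.

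What your argument can actually deliver is the standard finite-dimensional KKT theorem with per-channel multipliers. That requires assuming coercivity, strict convexity on $\ker L$ and surjectivity of $L$, and dropping $\mathcal{B}_\mu$. This is all the paper's appeal to ``standard convex duality'' really covers.
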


\begin{proof}[Sketch]
FR1 renders all functionals finite and cellwise local; FR2 ensures linear pairings are
well-defined and globally consistent; FR3 guarantees that phase-dependent constraints
are affine after gauge-fixing the exact cocycle; strict convexity in $\Lambda$ and convexity
in $\Phi^{(s)}$ yield a unique saddle via standard convex duality on the finite cell complex.
FR4 is preserved because $\mathcal{L}_{\mathrm{link}}$ and $\mathcal{B}_\mu$ are invariant
under the translation action used to define typicality.
\end{proof}

\subsection*{Transport form: SILM as a boundary connection}

Many constraints are conservation/continuity laws (flux matching, phase flatness). In those
cases the SILM admits a \emph{connection form} on interfaces:
\[
\Lambda_\mu \in \Omega^1(\partial\Omega_\mu)\otimes \mathfrak{h},
\]
with $\mathfrak h$ an internal Lie algebra acting on channel fibers.
Then the channel boundary variations take the covariant form
\[
\delta \mathcal{L}_{\mathrm{link}}
=\sum_{F,s} \big\langle \delta\Phi^{(s)}\!,\,
\mathbf{n}\!\cdot\!(\nabla - i\, q_s \Lambda_\mu)\,\Phi^{(s)}\big\rangle_F
\quad\Longrightarrow\quad
D_\perp^{(s)}\Phi^{(s)}\equiv
\mathbf{n}\!\cdot\!(\nabla - i\, q_s \Lambda_\mu)\Phi^{(s)}=0\ \text{on }F,
\]
with $q_s$ the representation weight of channel $s$ under $\mathfrak h$ and $\mathbf n$
the interface normal. Thus a single $\Lambda_\mu$ \emph{locks} all channels to a common
boundary transport law.

\subsection*{Gauge and gravity from SILM}

\paragraph{Gauge kernel.}
If $\mathfrak h$ is nontrivial (internal automorphisms), the SILM connection $\Lambda_\mu$
\emph{is} the gauge potential on interfaces. The induced \emph{boundary} covariant derivative
$D_\perp$ defines the usual tangential covariant calculus (FR2 preserves linearity; FR3 fixes
the $U(1)$ multiplier as an exact cocycle; FR4 ensures invariant averages of boundary
currents). Channel charges $q_s$ are the representation weights with respect to $\mathfrak h$.
Charge quantization follows from the integral holonomy classes on the cell complex:
\[
\oint_{\gamma\subset\partial\Omega_\mu}\!\Lambda_\mu \ \in\ 2\pi\,\mathbb Z,
\]
for any contractible interface loop $\gamma$ (FR3 flatness at coarse scale).

\paragraph{Metric kernel (spin--2).}
If $\mathcal{C}^{(s)}$ also enforces balance of energy--momentum across interfaces, the
variation w.r.t.\ the $g$-induced pairing yields a boundary form of metric compatibility:
\[
\frac{\delta \mathcal S_\mu}{\delta g_{\alpha\beta}}\;=\;0
\quad\Longrightarrow\quad
\llbracket T^{\alpha\beta}\rrbracket_F\;=\;0 \quad\text{for every interface }F,
\]
with $\llbracket\cdot\rrbracket_F$ the jump across $F$ and $T^{\alpha\beta}$ the coarse
stress tensor of all channels. The metric $g_{\mu\nu}$ then plays its intrinsic role as the
\emph{boundary transport kernel} that propagates phases/fluxes consistently into the bulk
(Section~\ref{sec:boundary}): variations of $g$ reweight the inter-cell couplings,
which is the finite--resolution analogue of gravitational dynamics.

\subsection*{Operational meaning and FR4 typicality}

Experimentally, detectors couple to boundary fluxes. The SILM imposes a \emph{single} boundary
phase/transport kernel for all channels, so that measured currents are simultaneously coherent
across sectors. FR4 dictates that long-run frequencies are invariant means of these boundary
observables. Probabilities and cross-sections at resolution $\mu$ are thus computed with SILM-covariant
boundary POVMs:
\[
\Pr_\mu(E)\;=\;\big\langle \Psi_\mu\,,\, P_{E,\mu}^{(\Lambda)}\, \Psi_\mu\big\rangle,
\qquad
P_{E,\mu}^{(\Lambda)}\ \text{ SILM--covariant projection/POVM on }\partial\Omega_\mu.
\]

\subsection*{Examples (at scale $\mu$)}

\paragraph{Spinor--vector coherence (``universal'' coupling).}
Let $s=\tfrac12,1$ and impose continuity of the spinor probability current and vector flux
across each interface: $\mathcal C^{(1/2)}=\llbracket j_\perp\rrbracket_F$, 
$\mathcal C^{(1)}=\llbracket F_{\perp\parallel}\rrbracket_F$.
A single one-form $\Lambda_\mu$ enforces $D_\perp\psi_\mu=0$ and $D_\parallel F=0$ on $F$
with the \emph{same} coupling. Thus the effective charge of the spinor channel is not an
independent parameter: it is fixed by the interlocking of the two constraints via $\Lambda_\mu$.
This is the SILM version of gauge universality at finite resolution.

\paragraph{Electron--electron scattering without Fock space.}
Prepare two incoming spinor packets $\psi_\mu^{(1)},\psi_\mu^{(2)}$ and two outgoing packets.
Internal lines and vertices are filtered by $|W_\mu|^2$.
Boundary detection is implemented by SILM--covariant flux POVMs.
Amplitudes are computed with amputated connected correlators whose legs are 
$\mu$-smeared and whose boundary phases are transported by the \emph{same} $\Lambda_\mu$.
Energy--momentum ``conservation'' appears as a resolution window determined by $W_\mu$ and
the SILM holonomy around the detection cells.

\subsection*{Summary and consequences}

\begin{itemize}
\item A \emph{single} boundary multiplier $\Lambda_\mu$ exists (Theorem~\ref{thm:silm_existence})
      that enforces all channel interface constraints simultaneously.
\item In transport form, $\Lambda_\mu$ is a \emph{boundary connection} that defines a covariant
      derivative acting on every channel; representation weights $q_s$ are fixed by how channels
      embed in the interlocked constraint.
\item Internal holonomy classes quantize abelian charges; coarse flatness (FR3) imposes loop
      consistency. 
\item Metric compatibility at interfaces arises from variation w.r.t.\ $g$, yielding the spin--2
      (gravitational) boundary coupling.
\item FR4 ensures that all observable probabilities are invariant means of SILM--covariant boundary
      observables, restoring the operational meaning of frequencies at finite resolution.
\end{itemize}

\noindent
In this way the SILM mechanism is not an add-on but an \emph{intrinsic} feature of the finite--resolution
algebra: it is the universal boundary kernel that locks phase and flux transport across all channels,
thereby generating gauge and gravitational couplings as consequences of interface coherence at scale~$\mu$.
\section{Gravity and Quantization}

\subsection*{1.\ The traditional problem: ``quantizing gravity''}

In the conventional continuum framework, gravity is represented by a metric field 
$g_{\mu\nu}(x)$ (a spin--2 field) defined pointwise on spacetime. 
Quantizing gravity means promoting $g_{\mu\nu}(x)$ (or equivalently the connection or tetrad 
fields) to an operator--valued distribution, in the same way that gauge fields are 
quantized in quantum field theory. 
However, this procedure is carried out in the continuum limit, and therefore inherits all 
the pathological features of pointwise quantization: 
divergences from short--distance modes, nonrenormalizability of the perturbative expansion, 
and conceptual inconsistencies between background independence and operator--valued geometry. 
In short, ``quantizing gravity'' in the standard QFT sense fails because it assumes an 
infinitely resolved manifold on which local operator algebras act, whereas physical finiteness 
and coherence require a finite informational scale. 
The continuum assumption and the pointwise operator formalism contradict the very finiteness 
conditions that make quantum theory consistent in other sectors.

\subsection*{2.\ What changes in the FR1--FR4 + SILM picture}

In the finite--resolution algebra the metric field $g_{\mu\nu}$ is not a quantized operator 
but the \emph{boundary--transport kernel} required for the coherence of finite cells of space--time.  
Geometry is represented by a tessellation of cells of linear size~$\mu$, each bounded by interfaces 
where information is exchanged.

\paragraph{FR1 (finite resolution).}  
There are no infinitesimal points. 
All admissible structures live on finite cells $C_k$ with interfaces $F=C_k\cap C_\ell$.  
Every interface requires a transport rule for the continuity of data between cells.

\paragraph{FR2--FR3 (linearity and phase consistency).}  
Transport must preserve additive phases and amplitudes. 
The most general rule consistent with these symmetries is a symmetric bilinear form
\[
g_{\mu\nu}=\langle x_\mu x_\nu\rangle_{\text{FR}},
\]
which glues neighboring cells through the matching of first and second moments. 
This form acts as the local metric kernel connecting bulk and boundary quantities.

\paragraph{SILM (Simultaneously Interlocked Lagrange Multiplier).}
The same boundary multiplier $\Lambda_\mu$ that links spin--$\tfrac12$, spin--1, and scalar 
channels also links their energy--momentum fluxes across interfaces. 
Its variation enforces continuity of the boundary stress tensor,
\[
\llbracket T^{\mu\nu}\rrbracket_F=0,
\]
where $T^{\mu\nu}$ is the coarse stress tensor inside each cell. 
Variation of the total boundary action with respect to $g_{\mu\nu}$ yields the 
metric--compatibility condition, the finite--resolution analogue of Einstein's equations.
Hence $g_{\mu\nu}$ arises naturally as the SILM--enforced kernel maintaining coherence of 
all channels, not as an independent degree of freedom to be quantized.

\subsection*{3.\ Why quantization of gravity becomes meaningless}

Once the finite--resolution algebra is adopted, the metric $g_{\mu\nu}$ is not a fluctuating 
quantum field but a structural condition linking boundary fluxes. 
It already carries discrete, coarse, finite informational content: there is nothing further 
to quantize, because its variability is not continuous below the resolution~$\mu$.  
Quantization presupposes a continuum of configurations over which amplitudes can be superposed, 
but FR1 eliminates that continuum. 
Admissible variations of $g_{\mu\nu}$ are only those preserving boundary coherence; they form 
a finite algebra rather than an infinite function space.  
The traditional ``quantization of gravity'' problem therefore dissolves: 
gravity is not a classical field awaiting quantization but a finite--resolution constraint 
encoded algebraically by the SILM mechanism.

\subsection*{4.\ What replaces quantum gravity}

What emerges instead is a finite--resolution \emph{gravitational algebra}.  
The metric kernel is the coarse--grained second moment of the additive variables,
\[
g_{\mu\nu}=\langle x_\mu x_\nu\rangle_{\text{FR}},
\]
and its admissible variations are boundary deformations governed by the SILM link,
\[
\delta\mathcal{S}_\mu
   =\int_{\partial\Omega_\mu}\Lambda(x)\,\delta T^{\mu\nu}(x)\,d\Sigma_{\mu\nu}.
\]
The algebra enforces metric compatibility, $\nabla_\rho g_{\mu\nu}=0$, and forbids 
divergent curvature by FR1. 
Gravitational interactions are therefore both universal---because the same $\Lambda_\mu$ 
locks all channels to the common boundary kernel---and finite, since the continuum UV 
modes that produce divergences do not exist.  
Gravity thus becomes the universal SILM coherence rule between finite cells, 
ensuring consistent transport of all spin and phase channels across their boundaries.

\subsection*{5.\ Quantization as a property, not a procedure}

The ``quantum'' nature of gravitational phenomena is already present in FR3--FR4.  
FR3 provides unitary phase propagation, the origin of interference and superposition,  
while FR4 supplies the invariant averaging that underlies statistical frequencies.  
These same structural principles govern the spin--2 boundary kernel, so gravitational 
processes inherently display the same probabilistic and wave--coherent behavior as the 
other physical regimes.  
No additional quantization procedure is required or even meaningful:  
the algebra is already quantum in structure, with gravity as one of its emergent channels.

\subsection*{6.\ Experimental interpretation}

At the operational level, gravitational phenomena appear as phase shifts and boundary 
coherence effects in SILM--transported data.  
Deflection, redshift, and curvature correspond to phase gradients of the boundary kernel, 
while gravitational waves are coherent oscillations of $g_{\mu\nu}$ across neighboring cells---%
collective phase modes of the SILM network rather than quantized gravitons.  
Because every physical field is coupled to the same boundary multiplier $\Lambda_\mu$, 
gravitational coupling is automatically universal: this algebraic universality is the 
content of the equivalence principle.  
Experiments that seek ``quantum gravity'' quanta are therefore reinterpreted as 
high--resolution tests of boundary phase coherence in a finite--resolution geometry.

\paragraph{Additional references.} See also \cite{earman,jaynes}.

\end{document}